\newcommand\ignore[1]{}
\newcommand{\argmax}{\operatornamewithlimits{arg\ max}}
\newcommand{\hsp}{\hspace{0.1in} }
\newcommand{\hspp}{\hspace{0.05in} }
\newcommand{\hsppp}{\hspace{0.02in} }
\newcommand{\snr}{ {\sf SNR}  }
\newcommand{\iid} {  {\sf{iid}} }
\newcommand{\bggi} { {\bf h}_{ {\sf iid}, \hsppp i} }
\newcommand{\bI} {  {\mathbf{I}} }
\newtheorem{lem}{Lemma}
\newtheorem{defn}{Definition}
\newtheorem{prp}{Proposition}
\newtheorem{thm}{Theorem}
\newtheorem{conj}{Conjecture}
\newcommand{\bha} { {  \mathbf{h}  } }
\newsavebox{\savepar}
\begin{document}
\title{Statistical Beamforming on the Grassmann Manifold for the
Two-User Broadcast Channel}
\author{\large {\hspace{0.1in}}
Vasanthan Raghavan$^{\star}$, Stephen V.\ Hanly, Venugopal V.\ Veeravalli
\thanks{V.~Raghavan is with the Department of Mathematics, University of
Southern California, Los Angeles, CA 90089, USA. He was with the
Coordinated Science Laboratory, University of Illinois at Urbana-Champaign,
Urbana, IL 61801, USA and the Department of Electrical and Electronic
Engineering, The University of Melbourne, Parkville, VIC 3052, Australia
when parts of this work was done. S.~V.~Hanly is with the Department of
Electrical and Computer Engineering, National University of Singapore,
Singapore 117576. V.~V.~Veeravalli is with the Coordinated Science Laboratory
and the Department of Electrical and Computer Engineering, University of
Illinois at Urbana-Champaign, Urbana, IL 61801, USA.
Email: {\tt{vasanthan\_raghavan@ieee.org, elehsv@nus.edu.sg, vvv@illinois.edu.}}
$^{\star}$Corresponding author.}
\thanks{
This work has been supported in part by the ARC through grant DP-0984862,
the NUS through grant WBS \#R-263-000-572-133 and the NSF through grant
CNS-0831670 at the University of Illinois. This paper was presented in part
at the IEEE International Symposium on Information Theory, Austin, TX, 2010.}}

\topmargin=0.2in
\maketitle
\baselineskip 18pt
\setlength{\textheight}{23.4cm}

{\vspace{-0.2in}}
\begin{abstract}
\noindent A Rayleigh fading spatially correlated broadcast setting
with $M = 2$ antennas at the transmitter and two-users (each with a
single antenna) is considered. It is assumed that the users have perfect
channel information about their links whereas the transmitter has only
statistical information of each user's link (covariance matrix of the
vector channel). A low-complexity linear beamforming strategy that
allocates equal power and one spatial eigen-mode to each user is employed
at the transmitter. Beamforming vectors on the Grassmann manifold that
depend only on statistical information are to be designed at the transmitter
to maximize the ergodic sum-rate delivered to the two users. Towards this
goal, the beamforming vectors are first fixed and a closed-form expression
is obtained for the ergodic sum-rate in terms of the covariance matrices
of the links. This expression is non-convex in the beamforming vectors
ensuring that the classical Lagrange multiplier technique is not
applicable. Despite this difficulty, the optimal solution to this problem
is shown to be the solution to the maximization of an appropriately-defined
average signal-to-interference and noise ratio (${\sf SINR}$) metric for
each user. This solution is the dominant generalized eigenvector of a pair
of positive-definite matrices where the first matrix is the covariance
matrix of the forward link and the second is an appropriately-designed
``effective'' interference covariance matrix. In this sense, our work is a
generalization of optimal signalling along the dominant eigen-mode of the
transmit covariance matrix in the single-user case. Finally, the ergodic
sum-rate for the general broadcast setting with $M$ antennas at the
transmitter and $M$-users (each with a single antenna) is obtained in
terms of the covariance matrices of the links and the beamforming vectors.
\end{abstract}

\begin{keywords}
\noindent Adaptive signalling, broadcast channel, information rates,
MISO systems, multi-user MIMO, precoding, spatial correlation.
\end{keywords}

\ignore{
The eventual goal of our work is on understanding the performance of
limited feedback schemes on MIMO downlink channels. Towards that eventual
goal, we take a small step in this work by considering a MISO downlink
with two users.

 4) We propose limited feedback schemes and our philosophy is bottom
to top, in comparison with RVQ schemes in vogue which are motivated by
perfect CSIT benchmarks (a top to bottom philosophy), 5) Our scheme is
based on answering the question whether a feedback scheme should focus on
maximizing the numerator in the sum rate vs minimizing the denominator
}

\section{Introduction}
\label{sec1}
The last fifteen years of research in wireless communications has seen the
emergence of multi-antenna signalling as a viable option to realize high
data-rates at practically acceptable reliability levels.
While initial work on multi-antenna design was primarily motivated by the
single-user
paradigm~\cite{foschini_blast,visotsky,goldsmith_review,bolsckei,venu_capacity},
more recent attention has been on the theory and practice of multi-user
multi-antenna communications~\cite{haardt1,gesbert_mumimo,sharif1,caire_review}.
The focus of this paper is on a broadcast setting that typically models a
cellular downlink. We study the multiple-input single-output (MISO) broadcast
problem where a central transmitter with $M$ antennas communicates with $M$
users in the cell, each having a single antenna. Under the assumption of
perfect channel state information (CSI) at both the transmitter and the user
ends, significant progress has been made over the last few years on
understanding optimal signalling that achieves the
sum-capacity~\cite{caire_shamai,pramodv,jindal1,jindal2,wei_yu,boche1} as well
as the capacity region~\cite{weingarten} of the multi-antenna broadcast
channel. The capacity-achieving {\em dirty-paper coding} scheme~\cite{costa}
pre-nulls interference from simultaneous transmissions by other users
to a specific user and hence results in a multiplexing gain of $M$.

Nevertheless, the high implementation complexity associated with dirty-paper
coding~\cite{dpc_complexity} makes it less attractive in standardization
efforts for practical systems. The consequent search for low-complexity
signalling alternatives that are within a fixed power-offset\footnote{Two
schemes are within a fixed power-offset if the difference in power level
necessary to achieve a fixed rate with the two schemes stays bounded
independent of the rate.} of the dirty-paper coding scheme has resulted in
an array of candidate linear (as well as non-linear) precoding
techniques~\cite{tomlinson_harashima,joham,boccardi,swindlehurst2,yoo_goldsmith,haardt,wiesel,coord_bf,chae2}.
In particular, a linear beamforming scheme that is developed as a
generalization of the single-user beamforming scheme has attracted
significant attention in the literature. Specifically, a scheme where the
transmitter allocates one eigen-mode to each user and shares the power
budget equally among all the users is the focus of this work.

If perfect CSI is available at both the ends, instantaneous nulls can be
created in the interference sub-space of each user (or interference can
be zeroforced) and thus this scheme remains order-optimal with respect to
the dirty-paper coding scheme~\cite{yoo_goldsmith}. However, the practical
utility of the linear beamforming scheme is dependent on how gracefully
its performance degrades with the quality of CSI at the transmitter. This
is because while reasonably accurate CSI can be obtained at the user end via
pilot-based training schemes, CSI at the transmitter requires either channel
reciprocity or reverse link feedback, both of which put an overwhelming
burden on the operating cost~\cite{caire_review}. In the extreme (and
pessimistic) setting of no CSI at the transmitter, the multiplexing gain
reduces to $1$ (that is, it is lost completely relative to the perfect CSI
case).

In practice, the channel evolves fairly slowly on a statistical scale and it
is possible to learn the spatial statistics\footnote{With a Rayleigh (or a
Ricean) fading model for the MISO channel, the complete statistical
information of the link is captured by the covariance matrix (or the mean
vector and the covariance matrix) of the vector channel.} of the individual
links at the transmitter with minimal cost. With only statistical information
at the transmitter, the interference cannot be nulled out completely and a
low-complexity decoder architecture that treats interference as noise is
often preferred. Initial works assume an identity covariance matrix for
all the users corresponding to an {\em independent and identically
distributed} (i.i.d.) fading process in the spatial
domain~\cite{haardt1,gesbert_mumimo,sharif1,caire_review}. However, this
model cannot be justified in practical systems that are often deployed in
environments where the scattering is localized in certain spatial directions
or where antennas are not spaced wide apart due to infrastructural
constraints~\cite{canonical_jayesh}.

While signalling design for the single-user setting under a very general
spatial correlation model is now
well-understood~\cite{foschini_blast,visotsky,goldsmith_review,bolsckei,venu_capacity},
the broadcast case where the channel statistics vary across users and
different users experience different covariance matrices has not received
much attention. In particular,~\cite{tareq} studies the problem where all
the users share a common non-i.i.d.\ transmit covariance matrix and captures
the impact of this common covariance matrix on the achievable rates.
In~\cite{kountouris}, the authors show that second-order spatial statistics
can be exploited to schedule users that enjoy better channel quality and hence
improve the overall performance of an opportunistic beamforming scheme. In the
same spirit, it is shown in~\cite{hammarwall1} and~\cite{hammarwall2} that
second-order moments of the channel in combination with instantaneous norm (or
weighted-norm) feedback is sufficient to extract almost all of the multi-user
diversity gain in a broadcast setting. Spatial correlation is exploited to
reduce the feedback overhead of a limited feedback codebook design
in~\cite{trivellato,vasanth_ita10,clerckx,ruben}.

\noindent {\bf \em Summary of Main Contributions:} With this background,
the main focus of this paper is to fill some of the gaps in understanding
the information-theoretic limits of broadcast channels with low-complexity
signalling schemes (such as linear beamforming) under practical assumptions
on CSI and decoder architecture. We study the simplest non-trivial version of
this problem corresponding to the two-user ($M = 2$) case.
We design
optimal beamforming vectors on the Grassmann manifold\footnote{Informally,
${\cal G}(M,1)$ denotes the space of all $M$-dimensional unit-norm beamforming
vectors modulo the phase of the first element of the vector. A more formal
definition is provided in Sec.~\ref{sec2} (Def.~\ref{def1}).} ${\cal G}(2,1)$
to maximize the ergodic sum-rate achievable with the linear beamforming scheme.

The first step to this goal is the computation of the ergodic sum-rate in
closed-form. For this, we develop insight into the structure of the density
function of the weighted-norm of beamforming vectors isotropically distributed
on ${\cal G}(2,1)$. Exploiting this knowledge, we derive an explicit
expression for the ergodic sum-rate in terms of the covariance matrices
of the users and the beamforming vectors.
This expression can be rewritten in terms of a certain generalized ``distance''
measure between the beamforming vectors. As a result of this complicated
non-linear dependence, the sum-rate is non-convex in the beamforming
vectors thus precluding the use of the classical Lagrangian approach to
convex optimization. Instead, a first-principles based technique is
developed where the beamforming vectors are decomposed along an appropriately
chosen (in general, non-orthogonal) basis. Exploiting this decomposition
structure, we obtain an upper bound for the ergodic sum-rate, which we show
is tight for a specific choice of beamforming vectors (see
Theorems~\ref{thm2} and~\ref{thm1}). This optimal choice is the dominant
generalized eigenvector\footnote{A generalized eigenvector generalizes the
notion of an eigenvector to a pair of matrices. A more technical definition
is provided in Sec.~\ref{sec4} (Def.~\ref{def2}). The dominant eigenvector
is the eigenvector corresponding to the dominant eigenvalue. Under the
assumption that the eigenvector is unit-norm, it is unique on ${\cal G}(M,1)$.}
of a pair of covariance matrices, with one of them being the covariance
matrix of the forward link and the other an appropriately-designed
``effective'' interference covariance matrix. The generalized eigenvector
structure is the solution to maximizing an appropriately-defined average
signal-to-interference and noise ratio (${\sf SINR}$) metric for each user
and thus generalizes our intuition from the single-user
case~\cite{foschini_blast,visotsky,goldsmith_review,bolsckei,venu_capacity}.
Table~I in the Conclusions section (Sec.~\ref{sec6}) summarizes
the structure of the optimal beamforming vectors under different
signal-to-noise ratio (${\sf SNR}$) assumptions.

While a generalized eigenvector solution has been obtained in the perfect CSI
case for the multiple-input multiple-output (MIMO) broadcast
problem~\cite{wiesel2,coord_bf} and the MIMO interference channel problem in
the low-interference regime~\cite{vannapu2}, to the best of our knowledge,
its appearance in the statistical setting is a first. A closely-related work
of ours~\cite{vasanth_isit10_intf} reports the optimality of the generalized
eigenvector solution for the statistical beamformer design in the MISO
interference channel setting with two antennas. We also extend our intuition
to the weighted ergodic sum-rate maximization problem~\cite{kobayashi} and
conjecture on the structure of the optimal beamforming vectors. Numerical
results justify our conjecture and the intuition behind it. Finally,
closed-form expression for the ergodic sum-rate in terms of the covariance
matrices of the links and the beamforming vectors are obtained in the general
$M$-user case.

\noindent {\bf \em Organization:} This paper is organized as follows.
With Section~\ref{sec2} explaining the background of the problem,
ergodic rate expressions in terms of the covariance matrices of the links
and beamforming vectors are obtained in Section~\ref{sec3} for the $M = 2$
case. The non-convex optimization problem of ergodic sum-rate maximization
is the main focus of Section~\ref{sec4} with the low- and the high-$\snr$
extremes providing insight for the development in the
intermediate-$\snr$ regime. The focus shifts to weighted ergodic sum-rate
maximization in Section~\ref{sec5}. In addition, sum-rate expressions are
generalized to the general $M$-user case and concluding remarks are
provided in Section~\ref{sec6}. Most of the proofs$/$details are relegated
to the Appendices.

\noindent {\bf \em Notation:} We use upper- and lower-case bold symbols
for matrices and vectors, respectively. The notations ${\bf \Lambda}$
and ${\bf U}$ are usually reserved for eigenvalue and eigenvector matrices
whereas ${\bf I}$ is reserved for the identity matrix
(of appropriate dimensionality). The $i$-th diagonal element of
${\bf \Lambda}$ is denoted by ${\bf \Lambda}_i$ while the $i$-th element
of a vector ${\bf x}$ is denoted by ${\bf x}(i)$. At times, we also use
$\lambda_1, \lambda_2, \cdots$ to denote the eigenvalues of a Hermitian
matrix, and these eigenvalues are often arranged in decreasing order as
$\lambda_1 \geq \lambda_2 \geq \cdots$. The Hermitian
transpose and inverse operations of a matrix are denoted by $( \cdot)^H$
and $(\cdot )^{-1}$ while the trace operator is denoted by ${\sf Tr}(\cdot)$.
The two-norm of a vector is denoted by the symbol $\| \cdot \|$. The
operator $E[\cdot]$ stands for expectation while the density function of
a random variable is denoted by the symbol ${p}(\cdot)$. The symbols
${\mathbb C}$ and ${\mathbb R}^+$ stand for complex and positive real
fields, respectively.
$X \sim {\cal CN}(\mu,\sigma^2)$ indicates that $X$ is a complex Gaussian
random variable with mean $\mu$ and variance $\sigma^2$.

\section{System Setup}
\label{sec2}
We consider a broadcast setting that models a MISO cellular downlink
with $M$ antennas at the transmitter and $M$ users, each with a single
antenna. We denote the $M \times 1$ vector channel between the transmitter
and user $i$ as $\bha_i, \hsppp i = 1, \cdots, M$. While different
multi-user communication strategies can be
considered~\cite{tomlinson_harashima,joham,boccardi,swindlehurst2,yoo_goldsmith,haardt,wiesel,coord_bf,chae2},
as motivated in Sec.~\ref{sec1}, the focus here is on a linear
beamforming scheme where the information-bearing signal $s_i$ meant for
user $i$ is beamformed from the transmitter with the $M \times 1$
unit-norm vector ${\bf w}_i$. We assume that $s_i$ is unit energy and the
transmitter divides its power budget\footnote{The practically motivated
power-control problem where different powers could be allocated to the
different users is a related problem, but it is not studied here.} of
$\rho$ equally across all the users. Equal power allocation is popular
in current-generation cellular standards where low-complexity schemes are
preferred. The received symbol $y_i$ at user $i$ is written as
\begin{eqnarray}
y_ i & = & \sqrt{ \frac{ \rho}{M} } \cdot
\bha_i^H \left( \sum_{i=1}^M {\bf w}_i s_i \right) + n_i, \hspp
i = 1, \cdots , M  
\end{eqnarray}
where $\rho$ is the transmit power and $n_i$ denotes the
${\mathcal{CN}}(0,1)$ complex Gaussian noise added at the receiver.

Initial works on the broadcast problem assume that $\bha_i$ is ergodic
and it evolves over time and frequency in an i.i.d.\ fashion, and is
spatially i.i.d. While the above assumption can be justified in the time
and frequency axes with a frame-based and multi-carrier signalling approach
(common in current-generation systems), it cannot be justified along the
spatial axis. This is because the channel variation in the spatial (antenna)
domain cannot be i.i.d.\ unless the antennas at the transmitter end are
spaced wide apart and the scattering environment connecting the transmitter
with the users is sufficiently rich~\cite{canonical_jayesh}. With this
motivation, the main emphasis of this work is on understanding the impact
of the users' spatial statistics on the performance of a linear beamforming scheme.

We assume a Rayleigh fading\footnote{While more general fading models
such as Ricean or Nakagami-$m$ models can be considered, this paper
focuses on the Rayleigh model alone.} (zero mean complex Gaussian) model
for the channel, which implies that the complete spatial statistics are
described by the second-order moments of $\{ \bha_i \}$. For the MISO model,
the channel $\bha_i$ of user $i$ can be written as
\begin{eqnarray}
\bha_i & = & {\bf \Sigma}_{i}^{1/2} \hsppp \bggi 
\label{bhai}
\end{eqnarray}
where $\bggi$ is an $M \times 1$ vector with i.i.d.\ ${\mathcal{ CN}}(0,1)$
entries and ${\bf{\Sigma}}_{i} \triangleq E \left[ \bha_i \bha_i^H \right]$
is the transmit covariance matrix corresponding to user $i$. Note
that~(\ref{bhai}) is the most general statistical model for $\bha_i$
under the MISO assumption.
With ${\bf \Sigma}_i = \bI$ for all users,~(\ref{bhai}) reduces to the
i.i.d.\ downlink model well-studied in the
literature~\cite{haardt1,gesbert_mumimo,sharif1,caire_review}.

Under the assumption of Gaussian inputs $\{s_i \}$, the instantaneous
information-theoretic\footnote{All logarithms are to base $e$ and all
rate quantities are assumed to be in nats$/$s$/$Hz in this work.} rate,
$R_i$, achievable by user $i$ with the linear beamforming scheme using a
mismatched\footnote{Here, the decoding rule is different from the optimal
decoding rule due to the presence of multi-user interference.}
decoder~\cite{lapidoth_mismatched} is given by
\begin{eqnarray}
\label{mission5}
R_i & = & \log \left( 1 + \frac{ \frac{\rho}{M} \cdot
| {\bf h}_i^H {\bf w}_i |^2  }
{ 1 + \frac{ \rho}{M} \cdot \sum_{j \neq i} | {\bf h}_i^H {\bf w}_j  |^2 }
\right) \\
& = & \underbrace{ \log \left( 1 + \frac{\rho}{M} \cdot
\sum_{j=1}^M | {\bf h}_i^H {\bf w}_j |^2 \right)}_{I_{i, \hsppp 1} }
- \underbrace{ \log \left( 1 + \frac{\rho}{M} \cdot
 \sum_{j \neq i} | {\bf h}_i^H {\bf w}_j |^2 \right)
}_{ I_{i,\hsppp 2} }.
\label{mission3}
\end{eqnarray}
With the spatial correlation model assumed in~(\ref{bhai}), we can write
$I_{i,\hsppp 1}$ as
\begin{eqnarray}
I_{i, \hsppp 1} &  = & \log \left( 1 + \frac{\rho}{M} \cdot
\bggi^H \hsppp {\bf \Sigma}_{i}^{1/2} \left( \sum_{j=1}^M {\bf w}_j
{\bf w}_j^H \right) {\bf \Sigma}_{i}^{1/2} \hsppp \bggi \right) \\
& = & \log \left( 1 +  \frac{\rho}{M} \cdot \bggi^H \hsppp {\bf V}_i
\hsppp {\bf \Lambda}_i \hsppp {\bf V}_i^H \bggi \right),
\label{citing1}
\end{eqnarray}
where we have used the following eigen-decomposition
in~(\ref{citing1}): 
\begin{eqnarray}
{\bf V}_i \hsppp {\bf \Lambda}_i \hsppp {\bf V}_i^H & = &
{\bf \Sigma}_{i}^{1/2} \left( \sum_{j=1}^M {\bf w}_j {\bf w}_j^H \right)
{\bf \Sigma}_{i}^{1/2} \label{eqna7}
\\
\label{mission1}
{\bf \Lambda}_i & = & {\sf diag} \big( [ {\bf \Lambda}_{i, \hsppp 1},
\hspp \cdots, \hspp {\bf \Lambda}_{i, \hsppp M}] \big), \hspp \hspp
{\bf \Lambda}_{i,\hsppp 1} \geq \cdots \geq {\bf \Lambda}_{i, \hsppp M}
\geq 0.
\end{eqnarray}
Similarly, we can write $I_{i, \hsppp 2}$ as
\begin{eqnarray}
I_{i, \hsppp 2} & = & \log \left( 1 +  \frac{\rho}{M} \cdot \bggi^H
\hsppp \widetilde{ {\bf V}} _i
\hsppp \widetilde{ {\bf \Lambda}}_ i \hsppp \widetilde{ {\bf V}}_i^H
\bggi \right)
\\ \label{eqna9}
\widetilde{ {\bf V}} _i \hsppp
\widetilde{ {\bf \Lambda}}_ i \hsppp \widetilde{ {\bf V}}_i^H
& = & {\bf \Sigma}_{i}^{1/2} \left( \sum_{j \neq i}
{\bf w}_j {\bf w}_j^H \right) {\bf \Sigma}_{i}^{1/2} \\
\label{mission2}
\widetilde{{ \bf \Lambda}}_i & = & {\sf diag} \big([
\widetilde{ {\bf \Lambda}} _{i, \hsppp 1},  \hspp \cdots, \hspp
\widetilde{ {\bf \Lambda}}_{i, \hsppp M}] \big),
\hspp \hspp \widetilde{{\bf \Lambda}}_{i,\hsppp 1} \geq \cdots
\geq \widetilde{ {\bf \Lambda}} _{i, \hsppp M} \geq 0.
\end{eqnarray}

The goal of this work is to maximize the throughput conveyed from the
transmitter
to the users by the choice of beamforming vectors. Specifically, the metric
of interest is the ergodic sum-rate, ${\cal R}_{\sf sum}$, achievable with
the linear beamforming scheme:
\begin{eqnarray}
{\cal R}_{\sf sum} \triangleq \sum_{i=1}^M E \left[ R_i \right].
\end{eqnarray}
For this, note that the achievable rate in~(\ref{mission5}) is
invariant to transformations of the form ${\bf w}_i \mapsto e^{j \theta}
{\bf w}_i$ for any $\theta$. Coupled with the unit-norm assumption for
${\bf w}_i$, the space over which optimization is performed is precisely
defined as follows.
\begin{defn} [{\bf \em Stiefel and Grassmann Manifolds~\cite{arias_smith}}]
\label{def1}
The uni-dimensional complex Stiefel manifold ${\sf St}(M,1)$ refers
to the unit-radius complex sphere in $M$-dimensions and is defined as
\begin{eqnarray}
{\sf St}(M,1)
= \left\{ {\bf x} \in {\mathbb C}^M : \|{\bf x}\| = 1\right\}.
\end{eqnarray}
The uni-dimensional complex Grassmann manifold ${\cal G}(M,1)$ consists
of the set of one-dimensional subspaces of ${\sf St}(M,1)$. Here, a
transformation of the form ${\bf x} \mapsto e^{j \theta} {\bf x}$ (for any
$\theta$) is treated as invariant by considering all vectors of the form
$e^{j \theta} {\bf x}$ (for some $\theta$) to belong to the one-dimensional
sub-space spanned by ${\bf x}$.
\endproof
\end{defn}
The optimization objective is then to understand the structure of the
beamforming vectors, $\{ {\bf w}_{i, \hsppp {\sf opt}} \}$, that maximize
${\cal R}_{\sf sum}$:
\begin{eqnarray}
{\bf w}_{i, \hsppp {\sf opt} }  
= \argmax \limits_{ {\bf w}_i \hsppp \in \hsppp {\cal G}(M,1) } {\cal R}_{\sf sum},
\hspp i = 1, \cdots , M.
\label{prob_defn}
\end{eqnarray}
In~(\ref{prob_defn}), the candidate beamforming vectors,
$ \left\{ {\bf w}_i \right\}$, depend only on the long-term statistics
of the channel, which (as noted before) in the MISO setting is the set
of all transmit covariance matrices, $\left\{ {\bf \Sigma}_i \right\}$.

Towards the goal of computing ${\cal R}_{\sf sum}$, we decompose
${\bf h}_{\iid, \hsppp i }$ into its magnitude and directional components
as ${\bf h}_{\iid, \hsppp i } = \| {\bf h}_{\iid, \hsppp i} \| \cdot
\widehat{ {\bf h} }_{\iid, \hsppp i}$. It is well-known~\cite{foschini_blast}
that $\| {\bf h}_{\iid, \hsppp i} \|^2$ can be written as
\begin{eqnarray}
\label{decomp1}
\| {\bf h}_{\iid, \hsppp i} \|^2 = \frac{1}{2} \sum_{j=1}^{2M} \chi_j^2
\end{eqnarray}
where $\chi_j^2$ is a standard (real) chi-squared random variable and
$\widehat{ {\bf h} }_{\iid, \hsppp i}$ is a unit-norm vector that
is isotropically distributed~\cite{arias_smith,hassibi1} on ${\cal G}(M,1)$. Thus,
we can rewrite $I_{i, \hsppp 1}$ and $I_{i, \hsppp 2}$ as
\begin{eqnarray}
I_{i, \hsppp 1} & = &
\log \left( 1 +  \frac{\rho}{M} \cdot \| {\bf h}_{ \iid, \hsppp i } \|^2
\cdot \widehat{ {\bf h} }_{\iid, \hsppp i}^H \hsppp {\bf V}_i
\hsppp {\bf \Lambda}_i \hsppp {\bf V}_i^H
\widehat{ {\bf h} }_{\iid, \hsppp i} \right) \\
I_{i, \hsppp 2} & = &
\log \left( 1 +  \frac{\rho}{M} \cdot \| {\bf h}_{ \iid, \hsppp i } \|^2
\cdot \widehat{ {\bf h} }_{\iid, \hsppp i}^H
\hsppp \widetilde{ {\bf V}} _i
\hsppp \widetilde{ {\bf \Lambda}}_ i \hsppp \widetilde{ {\bf V}}_i^H
\widehat{ {\bf h} }_{\iid, \hsppp i} \right).
\end{eqnarray}
Since the magnitude and directional information of an i.i.d.\
random vector are independent~\cite{hassibi1},
$E \left[I_{i,\hsppp 1} \right]$ and $E \left[I_{i,\hsppp 2} \right]$
can be further written as
\begin{eqnarray}
E\left[I_{i, \hsppp 1} \right] & = & E
_{ \| {\bf h}_{\iid, \hsppp i} \| } \left[
E_{ \widehat{\bf h}_{\iid, \hsppp i} }
\left[  \log \left( 1 + \frac{\rho}{M} \cdot
\| {\bf h}_{ \iid, \hsppp i } \|^2 \cdot
\widehat{ {\bf h} }_{\iid, \hsppp i}^H \hsppp
 {\bf \Lambda}_ i  \hsppp
\widehat{ {\bf h} }_{\iid, \hsppp i} \right)
\right] \right]
\label{eq_ii1}\\
E\left[I_{i, \hsppp 2} \right] & = & E
_{ \| {\bf h}_{\iid, \hsppp i} \| } \left[
E_{ \widehat{\bf h}_{\iid, \hsppp i} }
\left[  \log \left( 1 + \frac{\rho}{M} \cdot
\| {\bf h}_{ \iid, \hsppp i } \|^2 \cdot
\widehat{ {\bf h} }_{\iid, \hsppp i}^H \hsppp
\widetilde{ {\bf \Lambda}}_ i  \hsppp \widehat{ {\bf h} }_{\iid, \hsppp i} \right)
\right] \right],
\label{eq_ii2}
\end{eqnarray}
where we have also used the fact that a fixed\footnote{Note that the
unitary transformation is independent of the channel realization when
the beamforming vectors are dependent only on the long-term statistics of the
channel.} unitary transformation of an isotropically distributed vector
on ${\cal G}(M,1)$ does not alter its distribution.

\section{Rate Characterization: Two-User Case}
\label{sec3}
We now restrict attention to the special case of two-users ($M = 2$) and
focus on computing the ergodic information-theoretic rates given in~(\ref{eq_ii1})
and~(\ref{eq_ii2}) in closed-form. The following theorem computes the
ergodic rates as a function of the covariance matrices of the two
links (${\bf \Sigma}_1$ and ${\bf \Sigma}_2$), and the choice of beamforming
vectors (${\bf w}_1$ and ${\bf w}_2$).
\begin{thm}
\label{prop_basic_rate}
The ergodic information-theoretic rate achievable at user $i$ (where
$i = 1,2$) with linear beamforming in the two-user case is given as
\begin{eqnarray}
E \left[R_{i} \right] =
E \left[  I_{i, \hsppp 1} \right] - E \left[ I_{i, \hsppp 2}  \right]
=  \frac{  {\bf \Lambda}_{i, \hsppp 1} \hsppp
h \left(\frac {\rho {\bf \Lambda}_{i, \hsppp 1}}{2}  \right)
- {\bf \Lambda}_{i, \hsppp 2} \hsppp
h \left(\frac {\rho {\bf \Lambda}_{i, \hsppp 2}}{2}  \right)}
{  {\bf \Lambda}_{i, \hsppp 1} - {\bf \Lambda}_{i, \hsppp 2} }
 - h \left(\frac {\rho \widetilde{ {\bf \Lambda}} _{i, \hsppp 1}  }{2}  \right)
\end{eqnarray}
where $h(\bullet)$ is a monotonically increasing function defined as
\begin{eqnarray}
h(x) \triangleq \exp \left( \frac{1}{x} \right) \hsppp E_1 \left( \frac{1}{x}
\right), \hspp x \in (0, \infty)
\label{hx}
\end{eqnarray}
with $E_1(x) = \int_{x}^{\infty} \frac{e^{-t}}{t} dt$ denoting the
Exponential integral~\cite{abramowitz}. The corresponding eigenvalues (cf.~(\ref{eqna7})
and~(\ref{eqna9})) can be written in terms of ${\bf \Sigma}_i$ and the
beamforming vectors as follows:
\begin{eqnarray}
{\bf \Lambda}_{i, \hsppp 1} & = &
\frac{ A_i + B_i + \sqrt{ (A_i - B_i)^2 + 4C_i^2 }  }{2}  \\
{\bf \Lambda}_{i, \hsppp 2} & = &
\frac{ A_i + B_i - \sqrt{ (A_i - B_i)^2 + 4C_i^2 }  }{2} \\
\widetilde{ {\bf \Lambda}} _{i, \hsppp 1} & = & B_i,
\end{eqnarray}
where $A_i = {\bf w}_i^H {\bf \Sigma}_i {\bf w}_i$, $B_i = {\bf w}_j^H
{\bf \Sigma}_i {\bf w}_j$ and $C_i = |{\bf w}_i^H {\bf \Sigma}_i
{\bf w}_j|$ with $j \neq i$ and $\{ i,j \} = 1,2$.
\end{thm}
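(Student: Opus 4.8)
The plan is to evaluate the direction-and-magnitude expectations in~(\ref{eq_ii1}) and~(\ref{eq_ii2}) in closed form, and then to read off the three eigenvalues from a $2 \times 2$ trace--determinant computation. The key simplification is to undo the magnitude/direction split: since $\bggi = \| \bggi \| \cdot \widehat{\bh}_{\iid, i}$ has independent magnitude and isotropic direction, the product $\| \bggi \|^2 \, \widehat{\bh}_{\iid, i}^H \bLambda_i \widehat{\bh}_{\iid, i}$ appearing in~(\ref{eq_ii1}) is exactly $\bggi^H \bLambda_i \bggi$ for a vector $\bggi$ of i.i.d.\ $\compnorm(0,1)$ entries. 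As $\bLambda_i = \diag(\bLambda_{i, \hsppp 1}, \bLambda_{i, \hsppp 2})$ is diagonal, this equals $\bLambda_{i, \hsppp 1} X_1 + \bLambda_{i, \hsppp 2} X_2$, where $X_1 = |\bggi(1)|^2$ and $X_2 = |\bggi(2)|^2$ are independent unit-mean exponential random variables. Hence $E[I_{i, \hsppp 1}] = E\big[ \log(1 + \frac{\rho}{2}(\bLambda_{i, \hsppp 1} X_1 + \bLambda_{i, \hsppp 2} X_2)) \big]$, and since $\sum_{j \neq i} \bw_j \bw_j^H$ is rank one in the two-user case, the interference term collapses to $E[I_{i, \hsppp 2}] = E\big[ \log(1 + \frac{\rho}{2} \widetilde{\bLambda}_{i, \hsppp 1} X_1) \big]$. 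This recombination sidesteps the weighted-norm density; alternatively one may use that $\widehat{\bh}_{\iid, i}^H \bLambda_i \widehat{\bh}_{\iid, i}$ is uniform on $[\bLambda_{i, \hsppp 2}, \bLambda_{i, \hsppp 1}]$ and integrate against the Gamma-distributed $\| \bggi \|^2$.

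The analytic core is the scalar identity $\int_0^\infty e^{-x} \log(1 + a x) \, \ud x = h(a)$ for $a > 0$, which I would prove by a single integration by parts followed by the substitution $t = 1 + a x$, exposing $E_1(1/a)$ and hence $h(a) = \exp(1/a) E_1(1/a)$. This directly gives $E[I_{i, \hsppp 2}] = h\big( \frac{\rho \widetilde{\bLambda}_{i, \hsppp 1}}{2} \big)$. For $E[I_{i, \hsppp 1}]$, I would write the density of $Y = \bLambda_{i, \hsppp 1} X_1 + \bLambda_{i, \hsppp 2} X_2$ as the hypoexponential $p_Y(y) = \frac{ e^{-y / \bLambda_{i, \hsppp 1}} - e^{-y / \bLambda_{i, \hsppp 2}} }{ \bLambda_{i, \hsppp 1} - \bLambda_{i, \hsppp 2} }$; integrating $\log(1 + \frac{\rho}{2} y)$ against each exponential (after the rescaling $y \mapsto \bLambda_{i, \hsppp k} x$) reduces both pieces to the same identity and yields $\frac{ \bLambda_{i, \hsppp 1} h(\rho \bLambda_{i, \hsppp 1}/2) - \bLambda_{i, \hsppp 2} h(\rho \bLambda_{i, \hsppp 2}/2) }{ \bLambda_{i, \hsppp 1} - \bLambda_{i, \hsppp 2} }$. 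Subtracting $E[I_{i, \hsppp 2}]$ produces the claimed $E[R_i]$.

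It remains to identify the eigenvalues. Set $\bv_1 = {\bf \Sigma}_i^{1/2} \bw_1$ and $\bv_2 = {\bf \Sigma}_i^{1/2} \bw_2$, so that the matrix in~(\ref{eqna7}) is $\bMbf \bMbf^H$ with $\bMbf = [\bv_1 \ \bv_2]$. For a $2 \times 2$ Hermitian matrix the two eigenvalues are fixed by its trace and determinant via $\lambda_{1,2} = \frac{1}{2}\big( \trace \pm \sqrt{ \trace^2 - 4 \det } \big)$. Here $\trace(\bMbf \bMbf^H) = \bw_i^H {\bf \Sigma}_i \bw_i + \bw_j^H {\bf \Sigma}_i \bw_j = A_i + B_i$, while the Gram identity gives $\det(\bMbf \bMbf^H) = \det(\bMbf^H \bMbf) = \| \bv_1 \|^2 \| \bv_2 \|^2 - | \bv_1^H \bv_2 |^2 = A_i B_i - C_i^2$. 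Substituting and simplifying $\trace^2 - 4 \det = (A_i - B_i)^2 + 4 C_i^2$ yields the stated $\bLambda_{i, \hsppp 1}$ and $\bLambda_{i, \hsppp 2}$; and $\widetilde{\bLambda}_{i, \hsppp 1} = B_i$ is simply the unique nonzero eigenvalue $\| {\bf \Sigma}_i^{1/2} \bw_j \|^2$ of the rank-one matrix in~(\ref{eqna9}).

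The one point requiring care is the degenerate case $\bLambda_{i, \hsppp 1} = \bLambda_{i, \hsppp 2}$, in which both the hypoexponential density and the final quotient become $0/0$ (and $Y$ is Erlang rather than hypoexponential). I would handle this by a continuity/l'H\^opital argument, noting that the singularity is removable and that the limiting values agree on the two sides, so the stated formula extends by continuity. Apart from this, the argument is routine bookkeeping once the single integral identity linking $\log(1 + a x)$ to $h$ is in hand.
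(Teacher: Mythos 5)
Your proof is correct, and it takes a genuinely different route from the paper's. The paper keeps the magnitude/direction split and attacks $E[I_{i,\hsppp 1}]$ via the density of the weighted norm $\widehat{\bf h}^H {\bf \Lambda}_i \widehat{\bf h}$ on ${\cal G}(2,1)$ — its Lemma~\ref{lemma_density2}, proved in Appendix~\ref{app_cdf} by a four-dimensional volume/surface-area computation generalizing Mukkavilli et al.\ — and then integrates this uniform density against the Gamma-distributed $\|{\bf h}_{\iid,\hsppp i}\|^2$, finishing with the table formula from Gradshteyn--Ryzhik. You instead undo the split, observing that $\|{\bf h}\|^2\, \widehat{\bf h}^H {\bf \Lambda}_i \widehat{\bf h}$ is distributionally just ${\bf g}^H {\bf \Lambda}_i {\bf g}$ for an i.i.d.\ Gaussian ${\bf g}$, i.e.\ a weighted sum of independent unit-mean exponentials, whose classical hypoexponential density reduces everything to the single self-contained identity $\int_0^\infty e^{-x}\log(1+ax)\,dx = h(a)$; the eigenvalue identification via trace/determinant and the Gram identity is the same elementary computation the paper alludes to. Your route is more elementary (no Grassmannian geometry, no citation to integral tables), it handles the removable singularity ${\bf \Lambda}_{i,\hsppp 1} = {\bf \Lambda}_{i,\hsppp 2}$ explicitly where the paper is silent, and it generalizes immediately to arbitrary $M$ with distinct eigenvalues — indeed it is precisely the generalized chi-squared approach the paper itself invokes later (via Hammarwall et al.) to state~(\ref{gen_form}). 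What the paper's heavier route buys is the density lemma itself: Lemmas~\ref{lemma_density2} and~\ref{lemma_density34} are reused for the $M=3$ rate formula and are of independent interest for beamforming analysis on the Grassmann manifold, which is why the authors develop that machinery rather than the shortcut you found.
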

\begin{proof}
Since $E \left[ R_i \right] = E \left[ I_{i, \hsppp 1} \right] -
E \left[ I_{i, \hsppp 2} \right]$, we start by computing
$E\left[I_{i, \hsppp 1} \right]$. From~(\ref{eq_ii1}), we have
\begin{eqnarray}
E \left[I_{i, \hsppp 1} \right] =
E_{ {\bf X} } \left[ \int_{  y = {\bf \Lambda}_{i,\hsppp 2} }
^{{\bf \Lambda}_{i, \hsppp 1} }
\log \left( 1 + \frac{\rho}{2} \cdot x y  \right) { p}_i(y) dy \right] 
\end{eqnarray}
where ${\bf X}$ stands for the random variable ${\bf X} = 
\| {\bf h}_{ \iid, \hsppp i } \|^2$, $x$ is a realization of
${\bf X}$ and ${p}_i(y)$ denotes the density function of
\begin{eqnarray}
\widehat{ {\bf h} }_{\iid, \hsppp i}^H \hsppp {\bf \Lambda}_i \hsppp
\widehat{ {\bf h} }_{\iid, \hsppp i}
= \sum_{j=1}^2 {\bf \Lambda}_{i,\hsppp j} \hspp \Big| \widehat{
{\bf h} }_{\iid, \hsppp i}(j) \Big|^2,
\end{eqnarray}
evaluated at $y$ with ${\bf \Lambda}_{i,\hsppp 2} \leq y \leq
{\bf \Lambda}_{i,\hsppp 1}$. That is, a closed-form computation of
$E \left[ I_{i, \hsppp 1} \right]$ requires the density function of
weighted-norm of vectors isotropically distributed on ${\cal G}(2,1)$.
In Lemma~\ref{lemma_density2} of Appendix~\ref{app_cdf}, we show that
\begin{eqnarray}
{p}_i(y) = \frac{1}{ {\bf \Lambda}_{i, \hsppp 1} -
{\bf \Lambda}_{i, \hsppp 2} }, \hspp {\bf \Lambda}_{i, \hsppp 2} \leq
y \leq {\bf \Lambda}_{i, \hsppp 1}.
\end{eqnarray}
Using this information along with the chi-squared structure of
$\| {\bf h}_{ \iid, \hsppp i } \|^2$ (see~(\ref{decomp1})), we have
\begin{eqnarray}
\label{basic_eqn1}
E \left[ I_{i, \hsppp 1} \right] = \frac{1}{ {\bf \Lambda}_{i, \hsppp 1} -
{\bf \Lambda}_{i, \hsppp 2} } \cdot
\int_{x = 0}^{\infty}x e^{-x}
\int_{ y = {\bf \Lambda}_{i,\hsppp 2}}^{ {\bf \Lambda}_{i, \hsppp 1} }
\log \left( 1 + \frac{\rho}{2} x y   \right) dy \hsppp dx.
\end{eqnarray}
Integrating out the $y$ variable, we have
\begin{eqnarray}
E \left[ I_{i, \hsppp 1} \right] & = &
\frac{1}{ \frac{\rho}{2} \left(
{\bf \Lambda}_{i, \hsppp 1} - {\bf \Lambda}_{i, \hsppp 2} \right) } \cdot
\int_{x = 0}^{\infty} \left(1 + \frac{\rho}{2} {\bf \Lambda}_{i, \hsppp 1}
\hsppp x \right) \cdot
\log \left(1 + \frac{\rho}{2} {\bf \Lambda}_{i, \hsppp 1} \hsppp x \right)
\cdot e^{-x} dx  \nonumber \\
& & - \frac{1}{ \frac{\rho}{2} \left(
{\bf \Lambda}_{i, \hsppp 1} - {\bf \Lambda}_{i, \hsppp 2} \right) } \cdot
\int_{x = 0}^{\infty} \left(1 + \frac{\rho}{2} {\bf \Lambda}_{i, \hsppp 2}
\hsppp x \right) \cdot
\log \left(1 + \frac{\rho}{2} {\bf \Lambda}_{i, \hsppp 2} \hsppp x \right)
\cdot e^{-x} dx - 1.
\end{eqnarray}
Following a routine computation using the list of integral table
formula~\cite[$4.337(2)$, p.\ 572]{gradshteyn}, we have the expression for
$E \left[ I_{i, \hsppp 1} \right]$. Particularizing this expression to
the case of $E \left[ I_{i, \hsppp 2} \right]$ in~(\ref{eq_ii2}) with
$\widetilde{ {\bf \Lambda}} _{i, \hsppp 2} = 0$ results in the rate
expression as in the statement of the theorem. To
complete the proposition, an elementary computation of the eigenvalues
of the associated $2 \times 2$ matrices in~(\ref{eqna7})
and~(\ref{eqna9}) results in their characterization.
\end{proof}

\ignore{
\begin{figure}[htb!]
\begin{center}
\begin{tabular}{cc}
\includegraphics[height=2.5in,width=3in]{bcast_figs/bc/fig_hfunc.eps} &
\includegraphics[height=2.5in,width=3in]{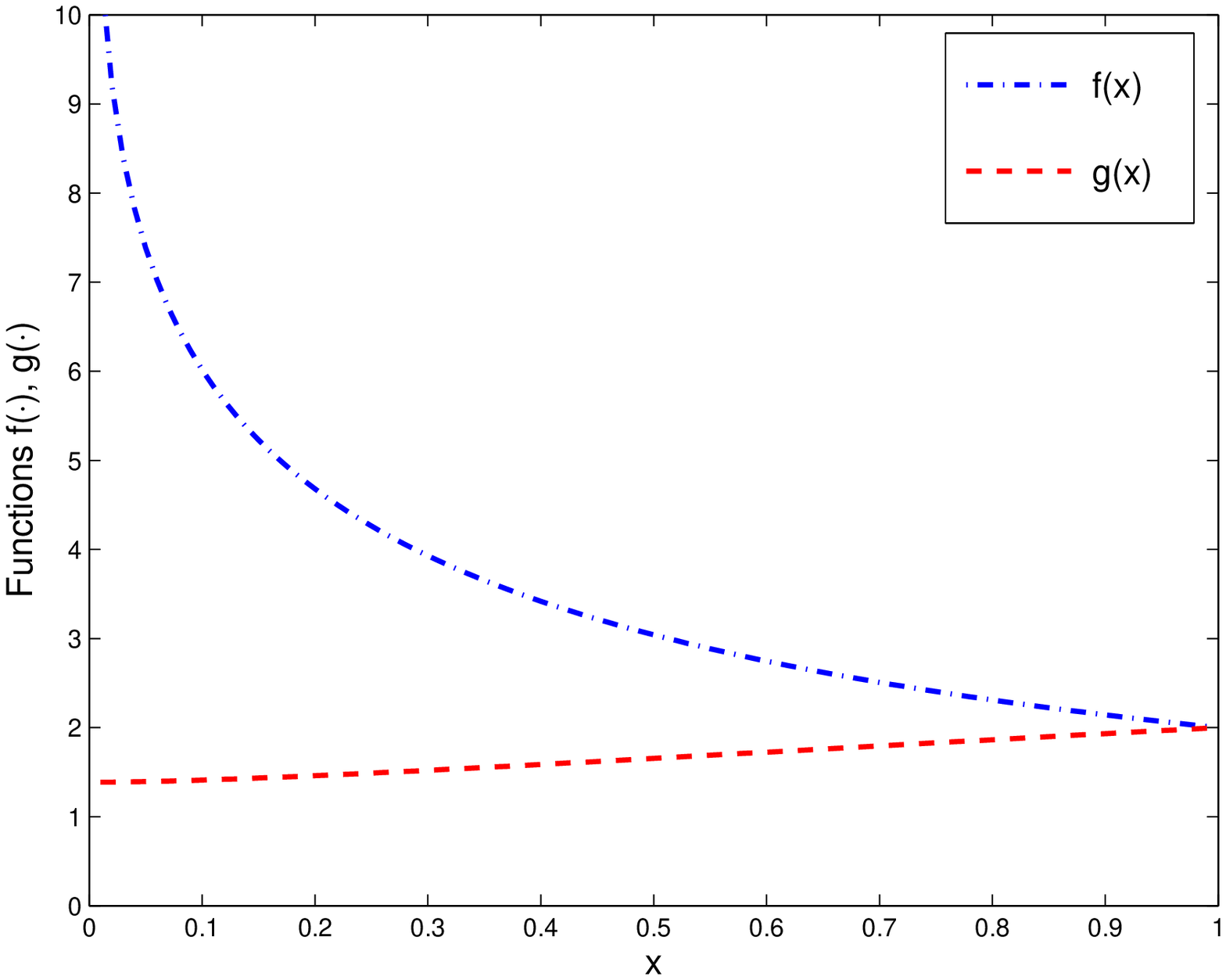}
\\ (a) & (b)
\end{tabular}
\caption{\label{fig2} (a) The behavior of $h(x)$, (b) The behavior of
$f(x)$ and $g(x)$.}
\end{center}
\end{figure}
}
The increasing nature of $h(\bullet)$, defined in~(\ref{hx}), is illustrated
in Fig.~\ref{fig2}. Towards the goal of obtaining physical intuition on
the structure of the optimal beamforming vectors, it is of interest to
obtain the limiting form of the ergodic rates in the low- and the
high-$\snr$ extremes.
\begin{figure}[htb!]
\centering
\begin{tabular}{c}
\includegraphics[height=3in,width=3.8in]{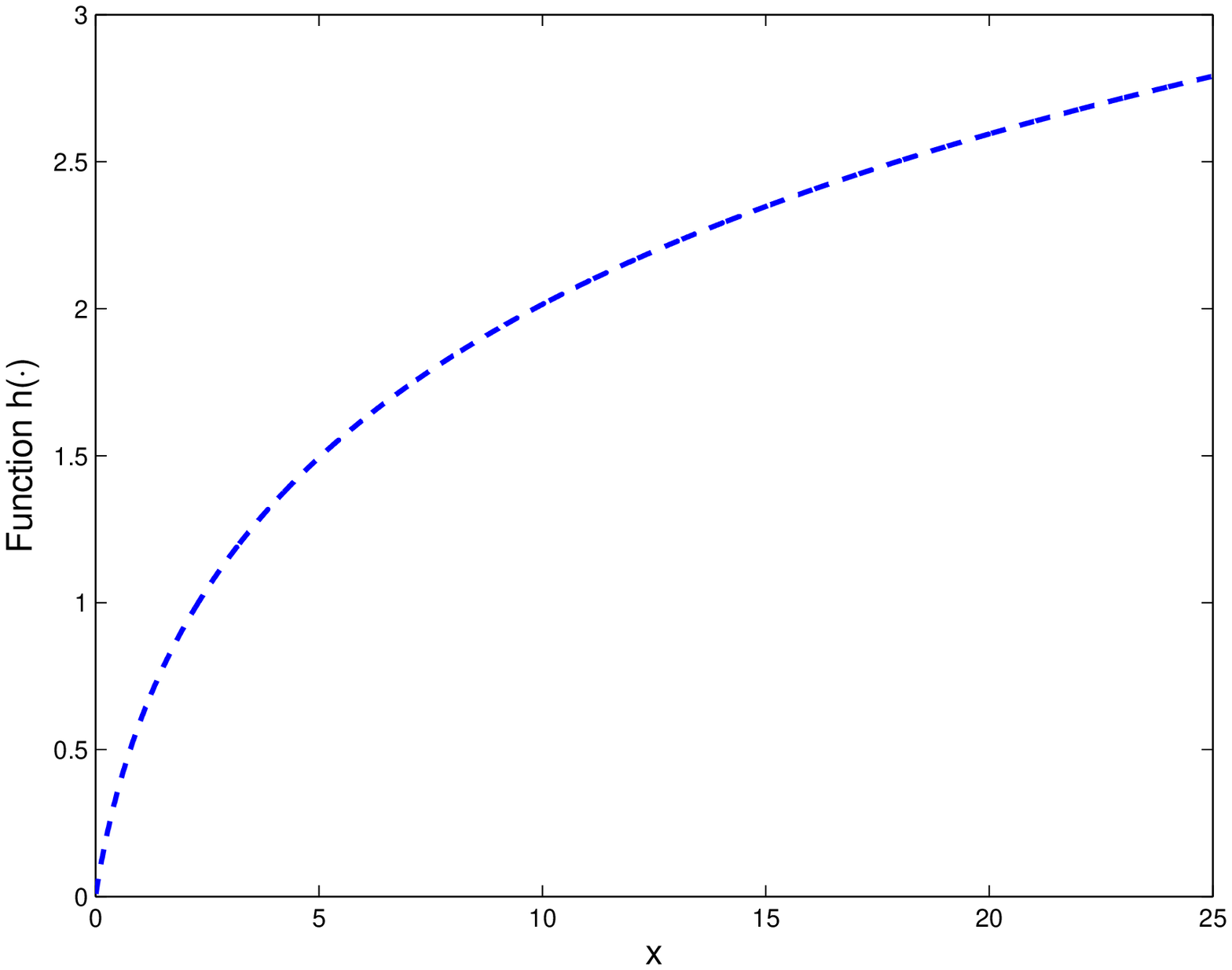}
\end{tabular}
\caption{The behavior of $h(x)$ for $x$ satisfying $0 < x \leq 25$.
\label{fig2}}
\end{figure}

\subsection{Low-$\snr$ Extreme}
\begin{prp}
The ergodic rate $E \left[ R_i \right]$ can be bounded as
\begin{eqnarray}
1 - \rho \hsppp {\sf C}_{\sf low} \leq
\frac{ E \left[R_i \right]}{ \frac{\rho}{2}
\left( {\bf \Lambda}_{i, \hsppp 1} + {\bf \Lambda}_{i, \hsppp 2} - B_i
\right) } \leq 1 + \rho \hsppp {\sf C}_{\sf up}
\label{lowsnr}
\end{eqnarray}
for some positive constants ${\sf C}_{\sf up}$ and ${\sf C}_{\sf low}$
(not provided here for the sake of brevity) that depend only on the
eigenvalues ${\bf \Lambda}_{i, \hsppp 1}, {\bf \Lambda}_{i, \hsppp 2}$
and $\widetilde{ {\bf \Lambda}} _{i, \hsppp 1}$. Thus, as $\rho \rightarrow 0$,
we have
\begin{eqnarray}
\frac{E \left[R_i \right]}{\rho} & \stackrel{\rho \rightarrow 0}{\rightarrow} &
\frac{1}{2} \left( {\bf \Lambda}_{i, \hsppp 1} +
{\bf \Lambda}_{i, \hsppp 2} - B_i
\right) \\
& = & \frac{A_i}{2} = \frac{ {\bf w}_i^H {\bf \Sigma}_i {\bf w}_i
}{2} . 
\label{lowsnr_simp}
\end{eqnarray}
\label{prop_lowsnr}
\end{prp}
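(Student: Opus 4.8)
The plan is to feed the closed-form rate of Theorem~\ref{prop_basic_rate} into a controlled small-argument analysis of $h(\cdot)$, since as $\rho\to 0$ each of the three arguments $\frac{\rho{\bf \Lambda}_{i,1}}{2}$, $\frac{\rho{\bf \Lambda}_{i,2}}{2}$ and $\frac{\rho\widetilde{\bf \Lambda}_{i,1}}{2}$ vanishes. The first step is to obtain sharp two-sided polynomial bounds on $h$ near the origin. Substituting $t=\frac1x+s$ in $E_1(1/x)$ and multiplying by $e^{1/x}$ turns~(\ref{hx}) into the representation $h(x)=\int_0^\infty \frac{x\,e^{-s}}{1+x s}\,ds$. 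Using the elementary inequalities $1-u\le \frac{1}{1+u}\le 1-u+u^2$ for $u\ge 0$ with $u=xs$, multiplying by $x e^{-s}\ge 0$ and integrating termwise against the moments $\int_0^\infty s^k e^{-s}\,ds=k!$, I obtain
\[
x-x^2 \;\le\; h(x)\;\le\; x-x^2+2x^3, \qquad x>0 ,
\]
that is, $h(x)=x-x^2+r(x)$ with a remainder obeying $0\le r(x)\le 2x^3$. These bounds hold for all $x>0$ and are exactly second-order accurate, which is precisely the order needed to resolve the leading behaviour of $E[R_i]$.

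Next I substitute these into the rate formula. Writing $a=\frac{\rho{\bf \Lambda}_{i,1}}{2}$, $b=\frac{\rho{\bf \Lambda}_{i,2}}{2}$ and $c=\frac{\rho\widetilde{\bf \Lambda}_{i,1}}{2}=\frac{\rho B_i}{2}$ (using $\widetilde{\bf \Lambda}_{i,1}=B_i$), the polynomial part $x-x^2$ of $h$ contributes to $E[I_{i,1}]$ an exact, singularity-free term via the divided-difference identities $\frac{{\bf \Lambda}_{i,1}^2-{\bf \Lambda}_{i,2}^2}{{\bf \Lambda}_{i,1}-{\bf \Lambda}_{i,2}}={\bf \Lambda}_{i,1}+{\bf \Lambda}_{i,2}$ and $\frac{{\bf \Lambda}_{i,1}^3-{\bf \Lambda}_{i,2}^3}{{\bf \Lambda}_{i,1}-{\bf \Lambda}_{i,2}}={\bf \Lambda}_{i,1}^2+{\bf \Lambda}_{i,1}{\bf \Lambda}_{i,2}+{\bf \Lambda}_{i,2}^2$. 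Combining this with $E[I_{i,2}]=h(c)$ yields
\[
E[R_i]=\frac{\rho}{2}\big({\bf \Lambda}_{i,1}+{\bf \Lambda}_{i,2}-B_i\big)-\frac{\rho^2}{4}\big({\bf \Lambda}_{i,1}^2+{\bf \Lambda}_{i,1}{\bf \Lambda}_{i,2}+{\bf \Lambda}_{i,2}^2-B_i^2\big)+{\cal E}(\rho),
\]
where ${\cal E}(\rho)$ gathers the remainder contributions $\frac{{\bf \Lambda}_{i,1}r(a)-{\bf \Lambda}_{i,2}r(b)}{{\bf \Lambda}_{i,1}-{\bf \Lambda}_{i,2}}$ and $-r(c)$, and is $\ord(\rho^3)$ for fixed distinct eigenvalues since each $r(\cdot)$ is $\ord(\rho^3)$ while ${\bf \Lambda}_{i,1}-{\bf \Lambda}_{i,2}$ is a fixed positive constant.

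Dividing by $\frac{\rho}{2}({\bf \Lambda}_{i,1}+{\bf \Lambda}_{i,2}-B_i)$—which is strictly positive because the explicit eigenvalue formulas of Theorem~\ref{prop_basic_rate} give ${\bf \Lambda}_{i,1}+{\bf \Lambda}_{i,2}=A_i+B_i$ and $\widetilde{\bf \Lambda}_{i,1}=B_i$, so ${\bf \Lambda}_{i,1}+{\bf \Lambda}_{i,2}-B_i=A_i={\bf w}_i^H{\bf \Sigma}_i{\bf w}_i>0$—recasts the expansion as $1-\frac{\rho}{2}\,\frac{{\bf \Lambda}_{i,1}^2+{\bf \Lambda}_{i,1}{\bf \Lambda}_{i,2}+{\bf \Lambda}_{i,2}^2-B_i^2}{A_i}+\ord(\rho^2)$. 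The two sides of~(\ref{lowsnr}) are then produced by replacing each $r(\cdot)$ in ${\cal E}(\rho)$ by its extreme values $0$ and $2(\cdot)^3$, which gives explicit positive constants ${\sf C}_{\sf low}$ and ${\sf C}_{\sf up}$ depending only on ${\bf \Lambda}_{i,1},{\bf \Lambda}_{i,2},\widetilde{\bf \Lambda}_{i,1}$; letting $\rho\to 0$ collapses the bound to~(\ref{lowsnr_simp}), the trace identity identifying the limit as $\frac{A_i}{2}=\frac{{\bf w}_i^H{\bf \Sigma}_i{\bf w}_i}{2}$.

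The main obstacle is the bookkeeping in the divided-difference term: $r(a)$ and $r(b)$ enter $E[I_{i,1}]$ divided by ${\bf \Lambda}_{i,1}-{\bf \Lambda}_{i,2}$, so a crude bound appears singular as the two eigenvalues coalesce. This is harmless here because the claimed constants are permitted to depend on the eigenvalues; in the degenerate case ${\bf \Lambda}_{i,1}={\bf \Lambda}_{i,2}$ one simply interprets $E[I_{i,1}]$ as the limiting derivative $\frac{d}{d{\bf \Lambda}}\big[{\bf \Lambda}\,h(\rho{\bf \Lambda}/2)\big]$ and the same second-order expansion carries over by continuity. Everything else is a routine Taylor and moment computation.
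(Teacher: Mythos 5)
Your proof is correct, and at the top level it is the same argument as the paper's: sandwich $h(\cdot)$ between two explicit functions that pinch at the origin, push the sandwich through the closed-form rate of Theorem~\ref{prop_basic_rate}, divide by $\frac{\rho}{2}\left({\bf \Lambda}_{i,\hsppp 1}+{\bf \Lambda}_{i,\hsppp 2}-B_i\right) = \frac{\rho}{2}A_i > 0$, and let $\rho \to 0$. The genuine difference is in the key inequality and where it comes from. The paper quotes the Exponential-integral bounds~\cite[$5.1.20$, p.\ 229]{abramowitz}, $\frac{x}{1+2x} \le \frac{1}{2}\log(1+2x) \le h(x) \le \log(1+x) \le x$, whose two sides agree only to first order at $x=0$ (gap ${\mathcal O}(x^2)$); you instead derive the representation $h(x) = \int_0^\infty \frac{x e^{-s}}{1+xs}\,ds$ from~(\ref{hx}) via the substitution $t = \frac{1}{x}+s$ and squeeze the integrand with $1-u \le \frac{1}{1+u} \le 1-u+u^2$, yielding the self-contained, second-order-accurate bounds $x - x^2 \le h(x) \le x - x^2 + 2x^3$ (gap ${\mathcal O}(x^3)$). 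Your route buys two things the paper's does not: the inequality is proved rather than cited, and you resolve the explicit second-order coefficient $-\frac{\rho^2}{4}\left({\bf \Lambda}_{i,\hsppp 1}^2 + {\bf \Lambda}_{i,\hsppp 1}{\bf \Lambda}_{i,\hsppp 2} + {\bf \Lambda}_{i,\hsppp 2}^2 - B_i^2\right)$ of $E\left[R_i\right]$, which a first-order sandwich cannot see. Conversely, the paper's cruder sandwich has globally quadratic error, so it delivers the literal form $1 \pm \rho\, {\sf C}$ of~(\ref{lowsnr}) for all $\rho > 0$, whereas your extraction (remainder entering as ${\mathcal O}(\rho^2)$ relative to the leading term) gives that form only on a bounded $\rho$-range --- which is all the limit statement requires. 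Both proofs put ${\bf \Lambda}_{i,\hsppp 1}-{\bf \Lambda}_{i,\hsppp 2}$ in the denominator of the remainder and both rely on the constants being allowed to depend on the eigenvalues; only you address the coalescing case, and to make ``carries over by continuity'' fully rigorous there you should bound $r'$ as well as $r$: differentiating under your integral sign gives $h'(x) = \int_0^\infty \frac{e^{-s}}{(1+xs)^2}\,ds$, and the same squeeze applied to $\frac{1}{(1+xs)^2}$ gives $0 \le r'(x) \le 6x^2$, after which the derivative interpretation of the divided difference expands exactly as you claim.
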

{\vspace{-0.2in}}
\begin{proof}
We need the following bounds on the Exponential
integral~\cite[$5.1.20$, p.\ 229]{abramowitz}: 
\begin{eqnarray}
\frac{x}{1 + 2 x} \leq \frac{1}{2} \log \big( 1 + 2x \big)
\leq h(x) \leq \log \big( 1 + x \big) \leq x
{\vspace{0.05in}}
\end{eqnarray}
where the extremal inequalities are established by using the fact that
\begin{eqnarray}
\frac{x}{x+1} \leq \log(1 + x) \leq x.
\end{eqnarray}
Using these bounds, it is straightforward to see that the relationship
in~(\ref{lowsnr}) holds. Note that both the upper and lower bounds
converge to the same value as $\rho \rightarrow 0$, which results in the
simplification in~(\ref{lowsnr_simp}).
\end{proof}
In the low-$\snr$ extreme, the system is noise-limited, hence the
linear scaling of $E \left[R_i  \right]$ with $\rho$.

\subsection{High-$\snr$ Extreme}
\begin{prp}
As $\rho \rightarrow \infty$, we have
\begin{eqnarray}
E \left[ R_i \right] & \stackrel{ \rho \rightarrow \infty }
{ \rightarrow} &
\frac{  {\bf \Lambda}_{i,\hsppp 1}
\log \left(  {\bf \Lambda}_{i, \hsppp 1}  \right)
- {\bf \Lambda}_{i,\hsppp 2}
\log \left( {\bf \Lambda}_{i, \hsppp 2}  \right)  }
{ {\bf \Lambda}_{i, \hsppp 1} - {\bf \Lambda}_{i, \hsppp 2}  } -
\log \left( B_i \right)
\label{highsnr} \\
& = &
\frac{ A_i + B_i  }{
2 \hsppp \sqrt{ \left( A_i - B_i \right)^2 + 4C_i^2}  } \cdot
\log \left(  \frac{ A_i + B_i + \sqrt{ \left( A_i - B_i \right)^2 + 4C_i^2}  }
{A_i + B_i - \sqrt{ \left( A_i - B_i \right)^2 + 4C_i^2} } \right)
\nonumber \\
& & {\hspace{2.6in}}
+ \frac{1}{2} \hsppp
\log \left( \frac{ A_i B_i - C_i^2} {B_i^2}  \right)
\label{highsnr_simp}
\end{eqnarray}
with $A_i, B_i$ and $C_i$ as in~Theorem~\ref{prop_basic_rate}.
\label{prp_highsnr}
\end{prp}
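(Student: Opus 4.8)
The plan is to start from the exact rate formula of Theorem~\ref{prop_basic_rate} and pass to the limit $\rho\to\infty$ directly in the three $h(\cdot)$ terms, exploiting the fact that, although each term diverges logarithmically in $\rho$, their particular combination converges. Recalling that $\widetilde{\bf \Lambda}_{i,1}=B_i$, the object to analyse is
\begin{equation}
E\left[R_i\right]=\frac{{\bf \Lambda}_{i,1}\,h\!\left(\tfrac{\rho{\bf \Lambda}_{i,1}}{2}\right)-{\bf \Lambda}_{i,2}\,h\!\left(\tfrac{\rho{\bf \Lambda}_{i,2}}{2}\right)}{{\bf \Lambda}_{i,1}-{\bf \Lambda}_{i,2}}-h\!\left(\tfrac{\rho B_i}{2}\right).
\end{equation}

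The key technical step is the large-argument asymptotics of $h$. Writing $h(x)=e^{1/x}E_1(1/x)$, setting $z=1/x$, and using the small-argument expansion of the exponential integral $E_1(z)=-\gamma-\log z+\ord(z)$ as $z\to 0^+$ (where $\gamma$ is the Euler--Mascheroni constant) together with $e^{1/x}=1+\ord(1/x)$, I would establish
\begin{equation}
h(x)=\log x-\gamma+\littleo(1),\qquad x\to\infty.
\end{equation}
This sharp expansion is the crux of the argument. The crude bounds $\tfrac12\log(1+2x)\leq h(x)\leq\log(1+x)$ used in the low-$\snr$ regime only sandwich $h(x)$ between $\tfrac12\log x$ and $\log x$ and are therefore too loose to fix the leading coefficient; the exact leading coefficient of unity (common to all three terms) is precisely what forces the $\log\rho$ contributions to cancel.

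Substituting the expansion, $h(\tfrac{\rho{\bf \Lambda}}{2})=\log\rho+\log{\bf \Lambda}-\log 2-\gamma+\littleo(1)$ for each of ${\bf \Lambda}\in\{{\bf \Lambda}_{i,1},{\bf \Lambda}_{i,2},B_i\}$. In the first ratio the numerator weights ${\bf \Lambda}_{i,1}$ and $-{\bf \Lambda}_{i,2}$ sum to ${\bf \Lambda}_{i,1}-{\bf \Lambda}_{i,2}$, so after division the common contributions $\log\rho-\log 2-\gamma$ survive as a single copy and are exactly cancelled by the matching terms of $h(\tfrac{\rho B_i}{2})$; the $\littleo(1)$ remainders vanish since the coefficients are fixed. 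What is left is precisely~(\ref{highsnr}),
\begin{equation}
E\left[R_i\right]\stackrel{\rho\rightarrow\infty}{\rightarrow}\frac{{\bf \Lambda}_{i,1}\log{\bf \Lambda}_{i,1}-{\bf \Lambda}_{i,2}\log{\bf \Lambda}_{i,2}}{{\bf \Lambda}_{i,1}-{\bf \Lambda}_{i,2}}-\log B_i.
\end{equation}

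Finally, to reach the closed form~(\ref{highsnr_simp}), I would insert the eigenvalue identities ${\bf \Lambda}_{i,1}+{\bf \Lambda}_{i,2}=A_i+B_i$, ${\bf \Lambda}_{i,1}-{\bf \Lambda}_{i,2}=\sqrt{(A_i-B_i)^2+4C_i^2}$ and ${\bf \Lambda}_{i,1}{\bf \Lambda}_{i,2}=A_iB_i-C_i^2$, which follow at once from the characterizations in Theorem~\ref{prop_basic_rate}. Using the elementary decomposition
\begin{equation}
{\bf \Lambda}_{i,1}\log{\bf \Lambda}_{i,1}-{\bf \Lambda}_{i,2}\log{\bf \Lambda}_{i,2}=\tfrac12({\bf \Lambda}_{i,1}+{\bf \Lambda}_{i,2})\log\tfrac{{\bf \Lambda}_{i,1}}{{\bf \Lambda}_{i,2}}+\tfrac12({\bf \Lambda}_{i,1}-{\bf \Lambda}_{i,2})\log\left({\bf \Lambda}_{i,1}{\bf \Lambda}_{i,2}\right),
\end{equation}
dividing by ${\bf \Lambda}_{i,1}-{\bf \Lambda}_{i,2}$ produces the logarithmic-ratio term with prefactor $\tfrac{A_i+B_i}{2\sqrt{(A_i-B_i)^2+4C_i^2}}$, while the residual $\tfrac12\log(A_iB_i-C_i^2)-\log B_i=\tfrac12\log\tfrac{A_iB_i-C_i^2}{B_i^2}$ gives the second term of~(\ref{highsnr_simp}). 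The main obstacle is the sharp asymptotic expansion of $h$; once that is secured, the remaining steps are bookkeeping of cancellations and elementary algebra.
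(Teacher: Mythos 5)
Your proposal is correct and follows essentially the same route as the paper's proof: both rest on the small-argument expansion of the Exponential integral (equivalently $h(x) = \log x - \gamma + \littleo(1)$ as $x \to \infty$), the resulting cancellation of the divergent $\log\rho$ and constant terms across the three $h(\cdot)$ terms, and then substitution of the eigenvalue identities ${\bf \Lambda}_{i,1} + {\bf \Lambda}_{i,2} = A_i + B_i$, ${\bf \Lambda}_{i,1} - {\bf \Lambda}_{i,2} = \sqrt{(A_i - B_i)^2 + 4C_i^2}$, ${\bf \Lambda}_{i,1}{\bf \Lambda}_{i,2} = A_i B_i - C_i^2$ to reach~(\ref{highsnr_simp}). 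Your write-up is in fact more explicit than the paper's (which compresses these steps into two sentences), particularly in spelling out the cancellation bookkeeping and the $\log$-decomposition identity, but no new idea is involved.
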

\begin{proof}
The following asymptotic expansion~\cite[$5.1.11$, p.\ 229]{abramowitz}
of the Exponential integral is useful in obtaining the limiting form of
$E \left[R_i \right]$ as $\rho \rightarrow \infty$:
\begin{eqnarray}
E_1(x) & = & \log \left( \frac{1}{x} \right)
+ \sum_{k = 1}^{\infty} \frac{ (-1)^{k+1} x^k }{k \cdot k!} -\gamma
\\
& \stackrel{x \rightarrow 0}{ \rightarrow } & \log \left( \frac{1}{x} \right)
+ x - \gamma
\end{eqnarray}
where $\gamma \approx 0.577$ is the Euler-Mascheroni constant. Using the
limiting value of $E_1(x)$ to approximate $E \left[ R_i \right]$,
we have the expression in~(\ref{highsnr}). Expanding ${\bf \Lambda}_{i,\hsppp 1}$
and ${\bf \Lambda}_{i,\hsppp 2}$ in terms of $A_i, B_i$ and $C_i$, we have
the expression in~(\ref{highsnr_simp}).
\end{proof}
Unlike the low-${\sf SNR}$ extreme, $E \left[R_i \right]$ is not a
function of $\rho$ here. The dominating impact of interference (due to
the fixed nature of the beamforming vectors that are not adapted to the
channel realizations) and the consequent boundedness of $E \left[R_i
\right]$ in~(\ref{highsnr}) as $\rho$ increases should not be surprising.

\ignore{
We will establish the triangle inequality with $i=1$. For this, we
decompose three beamforming vectors ${\bf w}_1$, ${\bf w}_2$ and
${\bf w}_3$ along the natural orthogonal basis $\{ {\bf u}_1, \hsppp
{\bf u}_2 \}$:
\begin{eqnarray}
{\bf w}_1 & = & \alpha {\bf u}_1 + \beta {\bf u}_2 \\
{\bf w}_2 & = & \gamma {\bf u}_1 + \delta {\bf u}_2 \\
{\bf w}_3 & = & \zeta {\bf u}_1 + \omega {\bf u}_2
\end{eqnarray}
for some choice of $\{ \alpha, \beta, \gamma, \delta, \zeta, \omega \}
\in {\mathbb{C}}$ with $| \alpha |^2 + |\beta |^2 = |\gamma|^2 +
| \delta |^2 = |\zeta|^2 + |\omega|^2 = 1$. With this description,
the triangle inequality is equivalent to showing that
\begin{eqnarray}
\frac{ \sqrt{ a_1 a_3 - a_{31}^2 }}{a_1 + a_3}
\leq \frac{ \sqrt{ a_1 a_2 - a_{12}^2 }}{a_1 + a_2}  +
\frac{ \sqrt{ a_2 a_3 - a_{23}^2 }} {a_2 + a_3}
\label{tri_ineq}
\end{eqnarray}
where
\begin{eqnarray}
a_1 & = & {\bf w}_1^H {\bf \Sigma}_1 {\bf w}_1 =
|\alpha|^2 \lambda_1 + |\beta|^2 \lambda_2 \\
a_2 & = & {\bf w}_2^H {\bf \Sigma}_1 {\bf w}_2 =
|\gamma|^2 \lambda_1 + |\delta|^2 \lambda_2 \\
a_3 & = & {\bf w}_3^H {\bf \Sigma}_1 {\bf w}_3 =
|\zeta|^2 \lambda_1 + |\omega|^2 \lambda_2 \\
a_{12} & = & | {\bf w}_1^H {\bf \Sigma}_1 {\bf w}_2| =
| \alpha^{\star} \gamma \lambda_1 + \beta^{\star} \delta
\lambda_2 | \\
a_{23} & = & | {\bf w}_2^H {\bf \Sigma}_1 {\bf w}_3| =
| \gamma^{\star} \zeta \lambda_1 + \delta^{\star} \omega
\lambda_2 | \\
a_{31} & = & | {\bf w}_3^H {\bf \Sigma}_1 {\bf w}_1| =
| \zeta^{\star} \alpha \lambda_1 + \omega^{\star} \beta \lambda_2 |.
\end{eqnarray}
Elementary shows that~(\ref{tri_ineq}) is equivalent to
\begin{eqnarray}
\frac{ | \beta \zeta - \alpha \omega| }
{ 2 \lambda_2 + (\lambda_1 - \lambda_2) \cdot
\left( |\alpha|^2 + |\zeta|^2 \right) } & \leq &
\frac{ | \beta \gamma - \alpha \delta| }
{ 2 \lambda_2 + (\lambda_1 - \lambda_2) \cdot
\left( |\alpha|^2 + |\gamma|^2 \right) } +
\frac{ | \delta \zeta - \gamma \omega| }
{ 2 \lambda_2 + (\lambda_1 - \lambda_2) \cdot
\left( |\gamma|^2 + |\zeta|^2 \right) }.
\end{eqnarray}
}

\section{Sum-Rate Optimization: Two-User Case}
\label{sec4}
We are now interested in understanding the structure of the optimal
choice of beamforming vectors $\left( {\bf w}_{1, \hsppp {\sf opt}},
{\bf w}_{2, \hsppp {\sf opt} } \right)$ that maximize ${\cal R}_{\sf sum}$
as a function of ${\bf \Sigma}_1$, ${\bf \Sigma}_2$ and $\rho$. This
problem is difficult, in general. To obtain insight, we first consider
the low- and the high-$\snr$ extremes before studying the intermediate-$\snr$
regime.

For simplicity, let us assume an eigen-decomposition for ${\bf \Sigma}_1$
and ${\bf \Sigma}_2$ of the form
\begin{eqnarray}
\label{pdef}
{\bf \Sigma}_1 & = & {\bf U} \hspp {\sf diag}([\lambda_1({\bf \Sigma}_1),
\hsppp \lambda_2 ({\bf\Sigma}_1) ])
\hspp {\bf U}^H,\\ 
{\bf \Sigma}_2 & = & \widetilde{\bf U} \hspp
{\sf diag}([\lambda_1({\bf \Sigma}_2), \hsppp
\lambda_2({\bf \Sigma}_2)]) \hspp \widetilde{\bf U}^H,
\end{eqnarray}
where ${\bf U} = \left[ {\bf u}_1 ({\bf \Sigma}_1), \hsppp
{\bf u}_2({\bf \Sigma}_1)  \right]$, ${\widetilde{ {\bf U}}} =
\left[ {\bf u}_1({\bf \Sigma}_2), \hsppp {\bf u}_2({\bf \Sigma}_2)
\right]$, and $\lambda_1({\bf \Sigma}_i) \geq \lambda_2({\bf \Sigma}_i),
\hspp i = 1,2$. In particular, we assume that both ${\bf \Sigma}_1$ and
${\bf \Sigma}_2$ are positive-definite, that is, $\lambda_2( {\bf \Sigma}_i)
> 0$.

\subsection{Low-$\snr$ Extreme}
\begin{prp}
\label{prop_snr_low}
In the low-$\snr$ regime, from Prop.~\ref{prop_lowsnr} we see that
the maximization of $E \left[ R_i \right]$ involves optimizing over
${\bf w}_i$ alone. Thus, we have
\begin{eqnarray}
{\bf w}_{i, \hsppp {\sf opt}} = \argmax \limits_{ {\bf w}_i } {\cal R}_{\sf sum}
= \argmax \limits_{ {\bf w}_i } {\bf w}_i^H {\bf \Sigma}_i {\bf w}_i =
e^{j \nu_i} \hsppp
{\bf u}_1( {\bf \Sigma}_i ) 
\end{eqnarray}
for some choice of $\nu_i \in [0, 2\pi), \hsppp i = 1,2$. The resulting
ergodic sum-rate satisfies
\begin{eqnarray}
\lim_{ \rho \rightarrow 0}
\frac{ {\cal R}_{ \sf sum } } { \rho }
= \frac{1}{2} \cdot \Big[
\lambda_{1}( {\bf \Sigma}_1) + \lambda_{1} ( {\bf \Sigma}_2) \Big].
\end{eqnarray}
\endproof
\end{prp}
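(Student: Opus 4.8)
The plan is to build directly on Proposition~\ref{prop_lowsnr}, which has already isolated the leading-order behavior of each user's rate. The starting point is the relation established in~(\ref{lowsnr_simp}),
\begin{eqnarray}
\frac{E\left[ R_i \right]}{\rho} \stackrel{\rho \rightarrow 0}{\longrightarrow} \frac{1}{2}\left( {\bf \Lambda}_{i,\hsppp 1} + {\bf \Lambda}_{i,\hsppp 2} - B_i \right) = \frac{A_i}{2} = \frac{ {\bf w}_i^H {\bf \Sigma}_i {\bf w}_i}{2},
\nonumber
\end{eqnarray}
and the crucial observation is that this limiting quantity depends on the beamforming vectors only through ${\bf w}_i$, never through ${\bf w}_j$. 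In the full rate expression the two beamformers are coupled through $B_i$ and $C_i$ (the interference terms); the point is that to leading order in $\rho$ this coupling disappears and the per-user objectives decouple.

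With this in hand, I would assemble the sum-rate. Since ${\cal R}_{\sf sum} = E[R_1] + E[R_2]$ and the limit of each normalized summand exists, passing to the limit gives
\begin{eqnarray}
\lim_{\rho \rightarrow 0} \frac{ {\cal R}_{\sf sum}}{\rho} = \frac{1}{2}\left( {\bf w}_1^H {\bf \Sigma}_1 {\bf w}_1 + {\bf w}_2^H {\bf \Sigma}_2 {\bf w}_2 \right).
\nonumber
\end{eqnarray}
Because the first summand involves only ${\bf w}_1$ and the second only ${\bf w}_2$, the joint maximization over $({\bf w}_1, {\bf w}_2)$ separates into two independent maximizations of the Rayleigh quotient ${\bf w}_i^H {\bf \Sigma}_i {\bf w}_i$ over unit-norm ${\bf w}_i$. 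By the Rayleigh--Ritz characterization, each maximum equals the dominant eigenvalue $\lambda_1({\bf \Sigma}_i)$ and is attained exactly at the dominant eigenvector ${\bf u}_1({\bf \Sigma}_i)$, which is unique on ${\cal G}(2,1)$ up to the phase $e^{j\nu_i}$. Substituting back produces both the claimed optimal beamformers and the limiting value $\frac{1}{2}\left[ \lambda_1({\bf \Sigma}_1) + \lambda_1({\bf \Sigma}_2) \right]$.

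The one step that genuinely needs care, and which I regard as the main (if modest) obstacle, is justifying that the maximizer of the limiting objective is indeed the relevant optimizer, i.e., the interchange of the $\rho \rightarrow 0$ limit with the optimization. I would handle this via the two-sided bound~(\ref{lowsnr}): the multiplicative error terms $\rho\,{\sf C}_{\sf low}$ and $\rho\,{\sf C}_{\sf up}$ vanish as $\rho \rightarrow 0$, and their constants depend only on $\{ {\bf \Lambda}_{i,\hsppp 1}, {\bf \Lambda}_{i,\hsppp 2}, \widetilde{ {\bf \Lambda}}_{i,\hsppp 1} \}$, which are uniformly bounded over the compact manifold ${\cal G}(2,1)$ for the fixed positive-definite ${\bf \Sigma}_i$. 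This makes the convergence $E[R_i]/\rho \to A_i/2$ uniform in the beamformers, so the limiting maximizer coincides with the maximizer of the limit. Everything else reduces to the elementary eigenvalue computation above.
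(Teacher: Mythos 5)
Your proof is correct and follows essentially the same route as the paper, which treats this proposition as an immediate corollary of Prop.~\ref{prop_lowsnr}: the low-$\snr$ limit in~(\ref{lowsnr_simp}) decouples the two users' objectives, and the Rayleigh--Ritz characterization then yields the dominant eigenvectors and the stated limit. Your additional justification of the limit/maximization interchange via the uniform two-sided bound in~(\ref{lowsnr}) over the compact manifold is a point the paper leaves implicit, and it is sound.
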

In the low-$\snr$ extreme, the optimal solution is such that the
transmitter signals
to a given user along the dominant statistical eigen-mode of that user's
channel and ignores the other user's channel completely. This is a solution
motivated by the single-user viewpoint where the optimality of signalling
along the dominant statistical eigen-mode of the forward channel is
well-known~\cite{foschini_blast,visotsky,goldsmith_review,bolsckei,venu_capacity}.
This solution is not surprising since in the noise-limited regime, the
broadcast channel is well-approximated by separate single-user models
connecting the transmitter to each receiver.

\subsection{High-$\snr$ Extreme}
\label{sec4b}
Define ${\bf \Sigma}$ (and its corresponding eigen-decomposition) as
\begin{eqnarray}
{\bf \Sigma} \triangleq {\bf \Sigma}_2^{-\frac{1}{2}} \hspp
{\bf \Sigma}_1 \hspp {\bf \Sigma}_2^{-\frac{1}{2}} =
{\bf V} \hspp {\sf diag}\left([ \eta_1 \hspp \eta_2  ]\right) \hspp
{\bf V}^H
\label{defn1}
\end{eqnarray}
where ${\bf V} = \left[ {\bf v}_1 \hspp {\bf v}_2 \right]$ and $\eta_1
\geq \eta_2$. Note that ${\bf \Sigma}$ is positive-definite ($\eta_2 > 0$)
since both ${\bf \Sigma}_1$ and ${\bf \Sigma}_2$ are positive-definite.
The main result of this section is as follows.
\begin{thm}
\label{thm2}
In the high-$\snr$ extreme, the ergodic sum-rate is maximized by the
following choice of beamforming vectors:
\begin{eqnarray}
{\bf w}_{1, \hsppp {\sf opt}} = e^{j \nu_1} \cdot \frac{
{\bf \Sigma}_2^{-\frac{1}{2} } \hsppp {\bf v}_1 } { \|
{\bf \Sigma}_2^{-\frac{1}{2} } \hsppp {\bf v}_1 \| },
\hspp \hspp {\bf w}_{2, \hsppp {\sf opt}} =
e^{j \nu_2} \cdot\frac{
{\bf \Sigma}_2^{-\frac{1}{2} } \hsppp {\bf v}_2 } { \|
{\bf \Sigma}_2^{-\frac{1}{2} } \hsppp {\bf v}_2 \| }
\label{bformer}
\end{eqnarray}
for some choice of $\nu_i \in [0, 2\pi), \hspp i = 1,2$. The optimal
ergodic sum-rate satisfies
\begin{eqnarray}
\lim_{ \rho \rightarrow \infty} {\cal R}_{\sf sum} =
\frac{\kappa_{1} \hsppp 
\log \left( \kappa_{1} \right)}{\kappa_{1}-1} +
\frac{\log \left(\kappa_{2} \right)}
{\kappa_{2}-1}
\label{rhs3}
\end{eqnarray}
where
\begin{eqnarray}
\kappa_{1}
\triangleq \frac{ \eta_1 \tau_2}{ \eta_2 \tau_1}, &  &
\kappa_{2} \triangleq
\frac{\tau_2}{\tau_1},
\label{k12gen}
\end{eqnarray}
\begin{eqnarray}
\tau_1 = {\bf v}_1^H \hsppp {\bf \Sigma}_2^{-1} \hsppp {\bf v}_1,
\hsp\hsp
\tau_2 = {\bf v}_2^H \hsppp {\bf \Sigma}_2^{-1} \hsppp {\bf v}_2,
\hsp \hsp
\tau_3 = {\bf v}_1^H \hsppp {\bf \Sigma}_2^{-1} \hsppp {\bf v}_2.
\label{defn2}
\end{eqnarray}
\end{thm}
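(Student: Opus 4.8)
The plan is to turn the sum-rate maximization into an almost single-parameter problem for each user by working from the high-$\snr$ per-user rate of Prop.~\ref{prp_highsnr}. Writing $\phi(a,b) \triangleq \frac{a\log a - b\log b}{a-b}$, that limit reads $E[R_i] = \phi({\bf \Lambda}_{i,\hsppp 1}, {\bf \Lambda}_{i,\hsppp 2}) - \log B_i$. First I would use the homogeneity identity $\phi(a/c,b/c) = \phi(a,b) - \log c$ to rewrite it as $E[R_i] = \phi\!\left( {\bf \Lambda}_{i,\hsppp 1}/B_i,\ {\bf \Lambda}_{i,\hsppp 2}/B_i \right)$, so that each user's rate depends only on the two ratios $p_i = {\bf \Lambda}_{i,\hsppp 1}/B_i$ and $q_i = {\bf \Lambda}_{i,\hsppp 2}/B_i$, and note that $\phi$ (a divided difference of the convex function $x\log x$) is strictly increasing in each argument.

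Next I would establish the per-user bound $E[R_i] \le g(A_i/B_i)$, where $g(\alpha) \triangleq \phi(\alpha,1) = \frac{\alpha\log\alpha}{\alpha-1}$, with equality iff $C_i = 0$. Indeed $p_i + q_i = (A_i+B_i)/B_i$ is fixed by $A_i,B_i$, while $p_i q_i = (A_iB_i - C_i^2)/B_i^2$ is maximal at $C_i=0$; hence increasing $C_i$ merely spreads $\{p_i,q_i\}$ apart at fixed sum. Because $\phi$ is the divided difference of the convex map $x\log x$, it is decreasing in the spread at fixed trace (Schur-concave), so the maximum over $C_i$ is at $C_i=0$, where $\{p_i,q_i\}=\{A_i/B_i,\,1\}$ and $E[R_i]=g(A_i/B_i)$. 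As $g$ is increasing, this reduces the problem to maximizing the surrogate $g(\alpha_1)+g(\alpha_2)$ with $\alpha_i \triangleq A_i/B_i = \frac{{\bf w}_i^H {\bf \Sigma}_i {\bf w}_i}{{\bf w}_j^H {\bf \Sigma}_i {\bf w}_j}$, an average-$\snr$ ratio for user $i$.

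I would then decompose ${\bf w}_1,{\bf w}_2$ along the (generally non-orthogonal) generalized eigenbasis $\{{\bf e}_1,{\bf e}_2\}$, ${\bf e}_k = {\bf \Sigma}_2^{-1/2}{\bf v}_k/\| {\bf \Sigma}_2^{-1/2}{\bf v}_k\|$. This basis simultaneously diagonalizes both forms, ${\bf e}_1^H {\bf \Sigma}_1 {\bf e}_2 = {\bf e}_1^H {\bf \Sigma}_2 {\bf e}_2 = 0$, so the choice ${\bf w}_1 \propto {\bf e}_1$, ${\bf w}_2 \propto {\bf e}_2$ forces $C_1=C_2=0$ at once, the only way to saturate both per-user bounds simultaneously. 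A direct computation at this point gives $\alpha_1=\kappa_1$, $\alpha_2=1/\kappa_2$, and $g(\kappa_1)+g(1/\kappa_2)$ equal to the claimed value in \eqref{rhs3}. What remains is to show this globally maximizes the surrogate. Setting ${\bf u}={\bf \Sigma}_2^{1/2}{\bf w}_1$, ${\bf z}={\bf \Sigma}_2^{1/2}{\bf w}_2$, the scale-invariant product $\alpha_1\alpha_2 = \frac{\hat{\bf u}^H {\bf \Sigma}\hat{\bf u}}{\hat{\bf z}^H {\bf \Sigma}\hat{\bf z}} \le \eta_1/\eta_2$ is saturated exactly at the generalized eigenvectors, which pins the product of the two ratios; together with $\eta_1\ge\eta_2$ this fixes the assignment of ${\bf v}_1,{\bf v}_2$ to the two users.

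The main obstacle is precisely this last step. Since maximizing $\alpha_1$ and $\alpha_2$ individually calls for conflicting beamformers (each user prefers its own covariance's dominant mode), the two ratios cannot be made large together, and one must prove the genuinely coupled inequality $g(\alpha_1)+g(\alpha_2) \le g(\kappa_1)+g(1/\kappa_2)$ over the entire achievable region of $(\alpha_1,\alpha_2)$, rather than optimizing the two terms separately. I expect to handle this by characterizing that region through the product constraint $\alpha_1\alpha_2\le \eta_1/\eta_2$ together with the coupled unit-norm constraints, and then invoking the monotonicity and convexity of $g$ to drive the maximum of the separable surrogate to the extreme point realized by the generalized eigenvectors, where the Step-two bound is also tight, closing the argument. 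Confirming that this configuration is the global, not merely stationary, optimum over $\mathcal{G}(2,1)\times\mathcal{G}(2,1)$, in the presence of the non-orthogonality of $\{{\bf e}_1,{\bf e}_2\}$, is the delicate part.
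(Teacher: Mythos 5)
Your preliminary reductions are correct, and they are a genuinely cleaner setup than the paper's: the per-user bound $E[R_i]\le g(A_i/B_i)$ with $g(\alpha)=\frac{\alpha\log\alpha}{\alpha-1}$ and equality iff $C_i=0$ is valid (the divided difference of $x\log x$ is decreasing under a mean-preserving spread of its arguments because $(x\log x)'=1+\log x$ is concave); the basis $\big\{{\bf \Sigma}_2^{-1/2}{\bf v}_1,\,{\bf \Sigma}_2^{-1/2}{\bf v}_2\big\}$ does kill $C_1$ and $C_2$ simultaneously; the identity $\alpha_1\alpha_2=\hat{\bf u}^H{\bf \Sigma}\hat{\bf u}\,/\,\hat{\bf z}^H{\bf \Sigma}\hat{\bf z}\le\eta_1/\eta_2$, saturated exactly at the generalized eigenvectors, is correct; and the evaluation $(\alpha_1,\alpha_2)=(\kappa_1,1/\kappa_2)$ at that point reproduces (\ref{rhs3}).

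The gap is the final step, and the route you sketch for it provably cannot close it. You must show $g(\alpha_1)+g(\alpha_2)\le g(\kappa_1)+g(1/\kappa_2)$ over the set of \emph{jointly achievable} pairs $(\alpha_1,\alpha_2)$, but the relaxation you propose — the product constraint $\alpha_1\alpha_2\le\eta_1/\eta_2$ together with the individual Rayleigh bounds $\alpha_i\le\chi_i$, $\chi_i=\lambda_1({\bf \Sigma}_i)/\lambda_2({\bf \Sigma}_i)$ — is strictly larger than the achievable set: the achievable set meets the curve $\alpha_1\alpha_2=\eta_1/\eta_2$ in the single point $(\kappa_1,1/\kappa_2)$ (when $\eta_1>\eta_2$), whereas the relaxation contains a whole arc of that curve. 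Worse, convexity works against you rather than for you: in log-coordinates $s_i=\log\alpha_i$ one has $g(e^{s})=\frac{s}{2}+\frac{s}{2}\coth(s/2)$, which is convex, so along the curve $s_1+s_2=\log(\eta_1/\eta_2)$ the surrogate is a symmetric convex function of $s_1$ and therefore \emph{increases} with the distance from the midpoint of the arc. Its maximum over the relaxed region thus sits at the (generally infeasible) endpoints where the curve hits the box $\alpha_i\le\chi_i$, not at the interior point $(\kappa_1,1/\kappa_2)$: concretely, whenever $\kappa_1\kappa_2\ge 1$ and $\kappa_1<\chi_1$, sliding along the curve from $(\kappa_1,1/\kappa_2)$ toward $\alpha_1=\chi_1$ stays inside your relaxed region and strictly increases $g(\alpha_1)+g(\alpha_2)$, so the relaxed maximum strictly exceeds the claimed optimum and the sandwich argument never closes. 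It is no accident that the one situation in which $(\kappa_1,1/\kappa_2)$ \emph{is} a corner of your relaxed region is $\kappa_1=\chi_1$, $\kappa_2=\chi_2$, i.e.\ the common-eigenvector case of Theorem~\ref{thm1} — exactly the special case the paper handles by a simpler, separate argument. What is missing, and what constitutes the real work in the paper's proof, is a characterization of how $\alpha_1$ and $\alpha_2$ constrain each other \emph{away} from the product-saturating point; this is precisely the role of the coupled inequalities (\ref{ineq1}) and (\ref{ineq2}) and the ensuing case analysis in Appendix~\ref{app_gen}, for which your outline offers no substitute.
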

{\vspace{0.2in}}
\begin{proof}
The first step in our proof is to rewrite the high-$\snr$ rate
expression in a form that permits further analysis. This is done in
Appendices~\ref{app_rewrite} and~\ref{app_prop2}.
With the definition of $\{ {\bf  v}_1, {\bf v}_2 \}$ as in~(\ref{defn1}),
since ${\bf \Sigma}_2$ is full-rank, we can decompose ${\bf w}_1$ and
${\bf w}_2$ as
\begin{eqnarray}
{\bf w}_1 & = & \frac{
\alpha \hsppp {\bf \Sigma}_2^{-\frac{1}{2} } \hsppp {\bf v}_1 +
\beta \hsppp {\bf \Sigma}_2^{-\frac{1}{2} } \hsppp {\bf v}_2 }
{ \| \alpha \hsppp {\bf \Sigma}_2^{-\frac{1}{2} } \hsppp {\bf v}_1 +
\beta \hsppp {\bf \Sigma}_2^{-\frac{1}{2} } \hsppp {\bf v}_2 \| }
\label{new_eqn1}
\\
{\bf w}_2 & = &
\frac{ \gamma \hsppp {\bf \Sigma}_2^{-\frac{1}{2} } \hsppp {\bf v}_1 +
\delta \hsppp {\bf \Sigma}_2^{-\frac{1}{2} } \hsppp  {\bf v}_2}
{ \| \gamma \hsppp {\bf \Sigma}_2^{-\frac{1}{2} } \hsppp {\bf v}_1 +
\delta\hsppp {\bf \Sigma}_2^{-\frac{1}{2} } \hsppp  {\bf v}_2 \| }
\label{new_eqn2}
\end{eqnarray}
for some choice of $\{ \alpha, \beta, \gamma, \delta \}$ with $\alpha =
|\alpha| e^{j \theta_{\alpha}}$ (similarly, for other quantities)
satisfying $| \alpha |^2 + |\beta |^2 = |\gamma|^2 + | \delta |^2 = 1$.
In Appendix~\ref{app_gen}, we show that the ergodic sum-rate optimization over
the six-dimensional parameter space $\{ |\alpha|, |\gamma|,
\theta_{\alpha}, \theta_{\beta}, \theta_{\gamma}, \theta_{\delta} \}$
results in the choice as in the statement of the theorem.
\end{proof}
Many remarks are in order at this stage.

\noindent {\bf \em Remarks:}
\begin{enumerate}
\item
Recall the definition of a generalized eigenvector:
\begin{defn} [{\bf \em Generalized eigenvector~\cite{stewart}}]
\label{def2}
A generalized eigenvector ${\bf x}$ (with the corresponding generalized
eigenvalue $\sigma$) of a pair of matrices $\left( {\bf A}, \hsppp
{\bf B} \right)$ satisfies the relationship
\begin{eqnarray}
{\bf A} \hsppp {\bf x} = \sigma \hsppp {\bf B} \hsppp {\bf x}.
\end{eqnarray}
In the special case where ${\bf B}$ is invertible, a generalized eigenvector
of the pair $\left( {\bf A}, \hsppp {\bf B} \right)$ is also an eigenvector
of ${\bf B}^{-1} {\bf A}$. If ${\bf A}$ and ${\bf B}$ are also
positive-definite, then all the generalized eigenvalues are positive.
While a unit-norm generalized eigenvector (or an eigenvector) is not unique
on ${\sf St}(M,1)$, it is unique on ${\cal G}(M,1)$.
\endproof
\end{defn}
We decompose $\{ {\bf w}_1, {\bf w}_2 \}$
in~(\ref{new_eqn1})-(\ref{new_eqn2}) along the basis\footnote{Note
that ${\bf \Sigma}_2$ is a full-rank matrix and hence, the vectors
${\bf \Sigma}_2^{-\frac{1}{2} } \hsppp {\bf v}_1$ and
${\bf \Sigma}_2^{-\frac{1}{2} } \hsppp {\bf v}_2$ form a non-orthogonal
basis (in general), whereas $\big\{ {\bf v}_1, {\bf v}_2 \big\}$ is
orthonormal.} $\big\{ {\bf \Sigma}_2^{-\frac{1}{2} } \hsppp {\bf v}_1,
{\bf \Sigma}_2^{-\frac{1}{2} } \hsppp {\bf v}_2 \big\}$ instead of the
more routine basis $\big\{ {\bf v}_1, {\bf v}_2 \big\}$. The reason for
this peculiar choice is as follows. It turns out that ${\bf \Sigma}_2^
{-\frac{1}{2} } \hsppp {\bf v}_1$ and ${\bf \Sigma}_2^ {-\frac{1}{2} }
\hsppp {\bf v}_2$ are the dominant generalized eigenvectors (corresponding
to the largest generalized eigenvalue) of the pairs
$\left( {\bf \Sigma}_1, \hsppp {\bf \Sigma}_2 \right)$ and
$\left( {\bf \Sigma}_2, \hsppp {\bf \Sigma}_1 \right)$, respectively.
For this claim, 
we use~(\ref{defn1}) to note that
\begin{eqnarray}
{\bf \Sigma}_2^{-1} \hsppp {\bf \Sigma}_1 & = &
{\bf \Sigma}_2^{ - \frac{1}{2} } \hsppp
\big( {\bf \Sigma}_2^{ - \frac{1}{2} } {\bf \Sigma}_1
{\bf \Sigma}_2^{ - \frac{1}{2} } \big) \hsppp
{\bf \Sigma}_2^{\frac{1}{2} }
= {\bf M} \hsppp {\bf D} \hsppp {\bf M}^{-1} \\
{\bf \Sigma}_1^{-1} \hsppp {\bf \Sigma}_2 & = &
\big({\bf \Sigma}_2^{-1} \hsppp {\bf \Sigma}_1  \big)^{-1} =
{\bf M} \hsppp {\bf D}^{-1} \hsppp {\bf M}^{-1}
\end{eqnarray}
where ${\bf M} = {\bf \Sigma}_2^{-\frac{1}{2} } \hsppp {\bf V}$ and
${\bf D} = {\sf diag}\big( [ \eta_1 \hspp \eta_2  ] \big)$. This means
that we can write
\begin{eqnarray}
{\bf w}_{1, \hsppp {\sf opt}} & = & e^{j \nu_1} \cdot {\bf u}_1
\big( {\bf \Sigma}_2^{-1} {\bf \Sigma}_1 \big) \\
{\bf w}_{2,\hsppp {\sf opt}} & = & e^{j \nu_2} \cdot {\bf u}_2
\big( {\bf \Sigma}_2^{-1} {\bf \Sigma}_1 \big)
\end{eqnarray}
where ${\bf u}_1(\bullet)$ and ${\bf u}_2(\bullet)$ are the dominant
and non-dominant eigenvectors, respectively.
%
Using the generalized eigenvector structure, it is easy to see that
\begin{eqnarray}
{\bf u}_1 \big( {\bf \Sigma}_2^{-1} {\bf \Sigma}_1 \big) & = &
{\bf u}_2 \big( {\bf \Sigma}_1^{-1} {\bf \Sigma}_2 \big) \\
{\bf u}_2 \big( {\bf \Sigma}_2^{-1} {\bf \Sigma}_1 \big) & = &
{\bf u}_1 \big( {\bf \Sigma}_1^{-1} {\bf \Sigma}_2 \big)
\end{eqnarray}
and thus
\begin{eqnarray}
{\bf w}_{1, \hsppp {\sf opt} } & = & e^{j \nu_1} \cdot {\bf u}_1
\big( {\bf \Sigma}_2^{-1} {\bf \Sigma}_1 \big) \\
{\bf w}_{2, \hsppp {\sf opt} } & = & e^{j \nu_2} \cdot {\bf u}_1
\big( {\bf \Sigma}_1^{-1} {\bf \Sigma}_2 \big).
\end{eqnarray}

\item
Given that the transmitter has only statistical information of the two links,
a natural candidate for beamforming in the high-${\sf SNR}$ extreme
is the solution to the maximization of an appropriately-defined average
${\sf SINR}$ metric for each user. Motivated by the fact
(see~(\ref{mission5})) that the instantaneous sum-rate for the $i$-th
user ($R_i$) is an increasing function of $| {\bf h}_i^H {\bf w}_i|^2$
whereas $R_j$ (for $j \neq i$) is a decreasing function of
$|{\bf h}_j^H {\bf w}_i|^2$, we define an ``average'' ${\sf SINR}$
metric as follows:
\begin{eqnarray}
{\sf SINR}_i \triangleq
\frac{ E \left[ |{\bf h}_i^H {\bf w}_i|^2 \right]}
{ E \left[ |{\bf h}_j^H{\bf w}_i |^2 \right]} =
\frac{ {\bf w}_i^H {\bf\Sigma}_i {\bf w}_i} { {\bf w}_i^H {\bf \Sigma}_j
{\bf w}_i}.
\end{eqnarray}
The optimization problem of interest is to maximize ${\sf SINR}_i$ which
has the generalized eigenvector structure as solution~\cite{cr_rao}:
\begin{eqnarray}
\argmax \limits_{ {\bf w}_i \hsppp : \hsppp
{\bf w} _i^H {\bf w}_i = 1 } {\sf SINR}_i =
e^{j \nu_i} \hsppp
{\bf u}_1 \big( {\bf \Sigma}_j^{-1} {\bf \Sigma}_i \big), \hspp
j \neq i, \hspp \{ i, j \} = 1,2.
\end{eqnarray}
It follows that if user $i$ selfishly maximizes (its own) ${\sf SINR}_i$
metric, then the set of such beamforming vectors maximize the ergodic
sum-rate in the high-${\sf SNR}$ regime. In this sense, the solution to
the broadcast problem mirrors and generalizes the single-user setting,
where the optimality of signalling along the statistical eigen-modes of
the channel is
well-understood~\cite{foschini_blast,visotsky,goldsmith_review,bolsckei,venu_capacity}.
Further, while optimal beamformer solutions in terms of the generalized
eigenvectors are obtained in the perfect CSI case of the broadcast setting
for the beamforming design problem~\cite{wiesel2,coord_bf} and the
interference channel problem~\cite{vannapu2}, to the best of our knowledge,
this solution in the statistical case is a first. A similar result is
obtained in a related work of ours~\cite{vasanth_isit10_intf} on statistical
beamforming vector design for the interference channel case. Since the
generalized eigenvector solution has an intuitive explanation, it is of
interest to obtain useful insights on the optimality of this solution in
more general multi-user settings.

\item
The ergodic sum-rate in~(\ref{rhs3}) is increasing in $\kappa_1$ and thus in
$\frac{\eta_1}{\eta_2}$. We now observe that ill-conditioning of
${\bf \Sigma}_1$ is necessary and sufficient to ensure that
$\frac{\eta_1}{\eta_2}$ is large. For this, we use standard eigenvalue
inequalities for product of Hermitian matrices~\cite{cr_rao} to see
that
\begin{eqnarray}
\frac{\chi_1}{\chi_2} =
\frac{ \lambda_1( {\bf \Sigma}_1 ) \cdot \lambda_2 ( {\bf \Sigma}_2^{-1} ) }
{ \lambda_2( {\bf \Sigma}_1 ) \cdot  \lambda_1( {\bf \Sigma}_2^{-1} ) }
\leq
\frac{\eta_1}{\eta_2} \leq \frac{ \lambda_1( {\bf \Sigma}_1 ) \cdot
\lambda_1( {\bf \Sigma}_2^{-1} ) }
{ \lambda_2( {\bf \Sigma}_1 ) \cdot  \lambda_2( {\bf \Sigma}_2^{-1} ) }
= \chi_1 \cdot \chi_2
\end{eqnarray}
where $\chi_i = \frac{ \lambda_1({\bf \Sigma}_i)}
{ \lambda_2( {\bf \Sigma}_i)}, \hsppp i = 1,2$. In other words, the more
ill-conditioned ${\bf \Sigma}_1$ is, the larger the high-$\snr$ statistical
beamforming sum-rate asymptote is (and {\em vice versa}).

On the other hand, the ergodic sum-rate in~(\ref{rhs3}) is not monotonic
in $\frac{\tau_1}{\tau_2}$. Nevertheless, it can be seen that as a
function of $\frac{\tau_1}{\tau_2}$, it has local maxima as
$\frac{\tau_1}{\tau_2} \rightarrow 0$ and $\frac{\tau_1}
{\tau_2} \rightarrow \infty$, and a minimum at $\frac{\tau_1}{\tau_2} = 1$.
The more well-conditioned ${\bf \Sigma}_2$ is, the more closer $\frac{\tau_1}
{\tau_2}$ is to $1$ and hence, the high-${\sf SNR}$ statistical beamforming
sum-rate asymptote is minimized. If ${\bf \Sigma}_2$ is ill-conditioned,
the value taken by $\frac{\tau_1}{\tau_2}$ depends on the angle between the
dominant eigenvectors of ${\bf \Sigma}_2$ and ${\bf \Sigma}$. If the two
eigenvectors are nearly parallel, $\frac{\tau_1}{\tau_2}$ is close to zero
and if they are nearly perpendicular, $\frac{\tau_1}{\tau_2}$ is very large.
In either case, the high-${\sf SNR}$ statistical beamforming sum-rate
asymptote is locally maximized.

The conclusion from the above analysis is that among all possible channels,
the ergodic sum-rate is maximized (or minimized) when ${\bf \Sigma}_1$ and
${\bf \Sigma}_2$ are both ill- (or well-)conditioned.
In other words, if both the users encounter poor scattering (that leads to
an ill-conditioning of their respective covariance matrices), their fading is
spatially localized. The transmitter can simultaneously excite these spatial
localizations without causing a proportional increase in the interference
level of the other user thus resulting in a higher ergodic sum-rate. On the
other hand, rich scattering implies that fading is spatially isotropic for
both the users. Any spatially localized excitation for one user will cause an
isotropic interference level at the other user thus resulting in a smaller
ergodic sum-rate.

\item
A special case that is of considerable interest is when ${\bf \Sigma}_1$
and ${\bf \Sigma}_2$ have the same set of orthonormal eigenvectors. This
would be a suitable model for certain indoor scenarios where the antenna
separation for the two users is the same~\cite{canonical_jayesh}.
Denoting (for simplicity) the set of common eigenvectors by ${\bf u}_1$
and ${\bf u}_2$, we can decompose ${\bf \Sigma}_1$ and ${\bf \Sigma}_2$ as
\begin{eqnarray}
{\bf \Sigma}_1 & = & \big[ {\bf u}_1, {\bf u}_2 \big] \hspp
{\sf diag}\big([\lambda_1, \hspp \lambda_2 ]\big) \hspp
\big[ {\bf u}_1, {\bf u}_2 \big]^H, \\
{\bf \Sigma}_2 & = & \big[ {\bf u}_1, {\bf u}_2 \big] \hspp
{\sf diag} \big([\mu_1, \hspp \mu_2 ]\big) \hspp
\big[ {\bf u}_1, {\bf u}_2 \big]^H.
\end{eqnarray}
We re-use the notations $\chi_1$ and $\chi_2$ to denote
\begin{eqnarray}
\chi_1 \triangleq \frac{\lambda_1}{\lambda_2}
&{\rm and}&
\chi_2 \triangleq \frac{\mu_1}{\mu_2}.
\end{eqnarray}
Without loss in generality, we can assume that $\chi_1 \geq 1$. Two
scenarios\footnote{These possibilities arise because
even though the set of eigenvectors of ${\bf \Sigma}_1$ and ${\bf \Sigma}_2$
are the same, there is no specific reason to expect the dominant
eigenvector of ${\bf \Sigma}_1$ to also be a dominant eigenvector of
${\bf\Sigma}_2$. Observe that the first case subsumes the setting where
$\mu_1 = \mu_2 = \mu$ and ${\bf \Sigma}_2 = \mu {\bf I}$.} arise depending
on the relationship between $\chi_1$ and $\chi_2$: i)
$\chi _1  \geq \chi_2$, and ii) $\chi _1 < \chi _2$.
\begin{thm}
\label{thm1}
In the high-$\snr$ extreme, the ergodic sum-rate is maximized by the
following choice of beamforming vectors:
\begin{eqnarray}
\begin{array}{cc}
{\bf w}_{1, \hsppp {\sf opt}} = e^{j \nu_1} \hsppp {\bf u}_1,
\hspp \hspp {\bf w}_{2, \hsppp {\sf opt}} =
e^{j \nu_2} \hsppp {\bf u}_2 & {\rm if}
\hspp \hspp
\chi_1  \geq \chi _2 
, \\
{\bf w}_{1, \hsppp {\sf opt}} = e^{j \nu_2} \hsppp {\bf u}_2,
\hspp \hspp {\bf w}_{2, \hsppp {\sf opt}} =
e^{j \nu_1} \hsppp {\bf u}_1 & {\rm if} \hspp \hspp 
\chi _1  < \chi _2  
\end{array}
\end{eqnarray}
for some choice of $\nu_i \in [0, 2\pi), \hspp i = 1,2$. The optimal
ergodic sum-rate satisfies
\begin{eqnarray}
\label{rhs4}
\lim_{\rho \rightarrow \infty} {\cal R}_{\sf sum} =
\left\{
\begin{array}{cc}
\frac{\chi_1 \hsppp \cdot \hsppp
\log \left( \chi_1  \right)}
{\chi_1 -1} +
\frac{ \log \left( \chi_2  \right)}
{\chi_2 -1} & {\rm if} \hspp
\chi_1  \geq \chi_2  \\
\frac{\chi_2  \hsppp \cdot \hsppp
\log \left( \chi_2  \right)}
{\chi_2 -1} +
\frac{\log \left( \chi_1  \right)}
{\chi_1 -1} & {\rm if} \hspp
\chi_1  < \chi_2.
\end{array}
\right.
\end{eqnarray}
\end{thm}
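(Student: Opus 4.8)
The plan is to obtain Theorem~\ref{thm1} as a direct specialization of Theorem~\ref{thm2} to the commuting case ${\bf \Sigma}_1 {\bf \Sigma}_2 = {\bf \Sigma}_2 {\bf \Sigma}_1$, i.e., when both covariance matrices share the common eigenbasis $\{ \bu_1, \bu_2 \}$. First I would evaluate the whitened matrix ${\bf \Sigma}$ of~(\ref{defn1}) in this setting. Since ${\bf \Sigma}_2^{-\frac{1}{2}}$ is diagonalized by $\{ \bu_1, \bu_2 \}$ with eigenvalues $\mu_1^{-\frac{1}{2}}, \mu_2^{-\frac{1}{2}}$, a one-line computation gives
\begin{eqnarray}
{\bf \Sigma} = {\bf \Sigma}_2^{-\frac{1}{2}} {\bf \Sigma}_1 {\bf \Sigma}_2^{-\frac{1}{2}} = \big[ \bu_1, \bu_2 \big] \, {\sf diag}\Big( \big[ \tfrac{\lambda_1}{\mu_1}, \, \tfrac{\lambda_2}{\mu_2} \big] \Big) \, \big[ \bu_1, \bu_2 \big]^H ,
\end{eqnarray}
so that the eigenvectors $\{ \bv_1, \bv_2 \}$ of ${\bf \Sigma}$ are exactly $\{ \bu_1, \bu_2 \}$ (up to ordering) and the eigenvalues are $\lambda_1/\mu_1$ and $\lambda_2/\mu_2$.

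Next I would resolve the ordering $\eta_1 \geq \eta_2$ demanded by~(\ref{defn1}). Since $\lambda_1/\mu_1 \geq \lambda_2/\mu_2$ is equivalent to $\chi_1 \geq \chi_2$, the dominant eigenvector is $\bv_1 = \bu_1$ (with $\bv_2 = \bu_2$) when $\chi_1 \geq \chi_2$, and $\bv_1 = \bu_2$ (with $\bv_2 = \bu_1$) when $\chi_1 < \chi_2$; this is precisely the case split in the statement. Feeding these into the optimal beamformers of~(\ref{bformer}) and using that each $\bv_k$ is itself an eigenvector of ${\bf \Sigma}_2$ — so that ${\bf \Sigma}_2^{-\frac{1}{2}} \bv_k$ is merely a scalar multiple of $\bv_k$ and the normalization collapses — yields $\bw_{1,\opt} = e^{j\nu_1} \bu_1$, $\bw_{2,\opt} = e^{j\nu_2} \bu_2$ in the case $\chi_1 \geq \chi_2$ and the swapped assignment otherwise, matching the claimed beamformers.

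The third step is to substitute into the sum-rate~(\ref{rhs3}). Because $\bv_1, \bv_2 \in \{ \bu_1, \bu_2 \}$ are ${\bf \Sigma}_2$-eigenvectors, the quadratic forms in~(\ref{defn2}) collapse to $\tau_k \in \{ \mu_1^{-1}, \mu_2^{-1} \}$, so the ratios $\kappa_1, \kappa_2$ of~(\ref{k12gen}) evaluate to $\kappa_1 = \chi_1$, $\kappa_2 = \chi_2$ when $\chi_1 \geq \chi_2$, giving the first branch of~(\ref{rhs4}) immediately. When $\chi_1 < \chi_2$ the same bookkeeping gives $\kappa_1 = 1/\chi_1$, $\kappa_2 = 1/\chi_2$; applying the elementary identities $\frac{\log(1/\chi)}{1/\chi - 1} = \frac{\chi \log \chi}{\chi - 1}$ and $\frac{(1/\chi) \log(1/\chi)}{1/\chi - 1} = \frac{\log \chi}{\chi - 1}$ converts the reciprocal arguments back to $\chi_1, \chi_2$ and reproduces the second branch of~(\ref{rhs4}).

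I do not expect a genuinely hard optimization step here: the heavy lifting has already been done in Theorem~\ref{thm2}, and the remaining work is careful bookkeeping of the eigenvalue ordering and of which common eigenvector plays the role of $\bv_1$ versus $\bv_2$ in each branch. The one point that deserves a word of care is the boundary $\chi_1 = \chi_2$, where ${\bf \Sigma}$ degenerates to a scalar multiple of the identity and $\{ \bv_1, \bv_2 \}$ is no longer uniquely determined; there the optimal value is continuous across the two branches (both reducing to $1 + \frac{\log \chi_2}{\chi_2 - 1}$ as $\kappa_1 \to 1$), so the stated case split is unambiguous. I would therefore present Theorem~\ref{thm1} as a corollary-style specialization of Theorem~\ref{thm2} rather than re-running the six-parameter optimization of Appendix~\ref{app_gen}.
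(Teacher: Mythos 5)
Your core computation is correct, and it is a genuinely different route from the paper's own proof. The paper proves Theorem~\ref{thm1} from scratch in Appendix~\ref{app_thm1}: it decomposes ${\bf w}_1, {\bf w}_2$ along the common orthonormal basis $\{{\bf u}_1,{\bf u}_2\}$, puts the high-$\snr$ sum-rate in the form~(\ref{thm1_eqnimp}), eliminates the phases via $|\beta\gamma-\alpha\delta| \leq 1$, and then runs a three-case analysis (i) $\chi_1 > 1 \geq \chi_2$, (ii) $\chi_1 > \chi_2 > 1$, (iii) $\chi_2 \geq \chi_1 > 1$, combining the monotonicity of $f(\bullet)$ and $g(\bullet)$ with the parametrization $|\alpha| = \sin\theta$, $|\gamma| = \sin\phi$. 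The paper explicitly notes, in the proof sketch following the theorem, that ``Theorem~\ref{thm2} can be particularized to this special case easily'' --- which is exactly your plan --- but deliberately pursues the standalone argument, which exploits structure available only in the commuting case ($\tau_3 = 0$ and a definite ordering between $\tau_1$ and $\tau_2$); Appendix~\ref{app_comp} then carries out essentially your consistency check via the bijection~(\ref{transf}), confirming that~(\ref{rhs3}) reduces to~(\ref{rhs4}). Your bookkeeping is right: the eigenvalues of ${\bf \Sigma}$ are $\lambda_k/\mu_k$, the ordering $\eta_1 \geq \eta_2$ is equivalent to the case split on $\chi_1$ versus $\chi_2$, the vectors ${\bf \Sigma}_2^{-1/2}{\bf v}_k$ collapse to ${\bf u}_k$ after normalization, and $(\kappa_1,\kappa_2) = (\chi_1,\chi_2)$ in the first branch and $(1/\chi_1, 1/\chi_2)$ in the second, with your reciprocal identities converting the latter into the stated expression. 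Your route buys brevity and a clean corollary-style structure; the paper's route buys a self-contained, more elementary derivation that also serves as an independent cross-check of the delicate optimization in Appendix~\ref{app_gen}.

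The one genuine (but patchable) flaw is your handling of the boundary $\chi_1 = \chi_2$. First, the value you quote there is wrong: at $\chi_1 = \chi_2 = \chi$ both branches of~(\ref{rhs4}) equal $\frac{(\chi+1)\log\chi}{\chi-1}$, whereas $1 + \frac{\log\chi_2}{\chi_2-1}$ is the limit as $\chi_1 \rightarrow 1$ (equivalently $\kappa_1 \rightarrow 1$) --- a different degeneration. Second, and more substantively, at $\chi_1 = \chi_2$ we have ${\bf \Sigma} \propto {\bf I}$, so \emph{every} orthonormal pair is an eigenbasis of ${\bf \Sigma}$, and the prescription of Theorem~\ref{thm2} genuinely depends on which pair is chosen: taking ${\bf v}_k = {\bf u}_k$ yields the beamformers $({\bf u}_1,{\bf u}_2)$ with sum-rate $\frac{(\chi+1)\log\chi}{\chi-1}$, while taking the basis at $45^\circ$ to $\{{\bf u}_1,{\bf u}_2\}$ yields $\tau_1 = \tau_2$, hence $\kappa_1 = \kappa_2 = 1$ and sum-rate $2$, which is strictly smaller for $\chi > 1$. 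So at this degenerate point you cannot simply cite Theorem~\ref{thm2}; you must either treat it separately (as the paper's case (iii) in Appendix~\ref{app_thm1} implicitly does), or argue by continuity --- note that the limiting objective is jointly continuous and bounded (since $2\log 2 \leq g(\bullet) \leq 2$) and the feasible set ${\cal G}(2,1)\times{\cal G}(2,1)$ is compact, so the true optimum is continuous in $({\bf \Sigma}_1,{\bf \Sigma}_2)$, agrees with the continuous expression~(\ref{rhs4}) off the boundary, hence also on it, and the choice $({\bf u}_1,{\bf u}_2)$ attains that value by direct substitution. With that repair your proof is complete.
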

{\vspace{0.15in}}
\begin{proof}
While Theorem~\ref{thm2} can be particularized to this special case
easily, we pursue an alternate proof technique in Appendix~\ref{app_thm1}
that exploits the comparative relationship between $\tau_1$ and $\tau_2$ (which
is possible in the special case) and the fact that $\tau_3 = 0$.
\end{proof}
A comparison of the proof techniques of Theorems~\ref{thm2} and~\ref{thm1}
is presented in Appendix~\ref{app_comp}.

\item
Some remarks on the optimization set-up of this paper are necessary. The
proofs of Theorems~\ref{thm2} and~\ref{thm1} require us to consider
a six-dimensional optimization over the parameter space of $\{ |\alpha|,
|\gamma|, \theta_{\alpha}, \theta_{\beta}, \theta_{\gamma}, \theta_{\delta}
\}$. As a result, any geometric interpretation of the optimization is
impossible. A naive approach to the six-dimensional optimization
problems in this paper
is to adopt the method of
matrix differentiation calculus. For this approach to work, we need
to show that both the set over which optimization is done as well as
the optimized function are convex, neither of which is true in our case.
Specifically, neither ${\cal G}(2,1)$ nor ${\sf St}(2,1)$ are
convex sets. 
It also turns out that the ergodic sum-rate is neither convex nor
concave\footnote{It is possible that some function of the ergodic
sum-rate may be convex. But we are not aware of any likely candidate
that could work.} over the set of beamforming vectors, even over a
locally convex domain or an extended convex domain (like the interior of
the sphere).

\ignore{
Curiously, setting the first derivative of the ergodic sum-rate
expression in Theorem~\ref{thm2} to zero produces the generalized
eigenvector(s) of the pair of covariance matrices of the two users as
the solution. However, this approach is not rigorous enough to warrant
theoretical merit. In fact, it would not be possible to claim the local
optimality\footnote{If the first derivative condition alone holds, we
can only claim a local extremum, not even a local maximum.} of the
generalized eigenvector solution unless we can show that the associated
Hessian is also negative-definite. Computing the Hessian is cumbersome
and numerical computations show that the Hessian is actually not
negative-definite in our setting.
}

\item
The approach adopted in Appendices~\ref{app_gen} and~\ref{app_thm1}
overcomes these difficulties, and it consists of two steps. In the first
step, we produce an upper bound to the ergodic sum-rate that is independent
of the optimization parameters. In the second step, we show that this upper
bound can be realized by a specific choice of beamforming vectors thereby
confirming that choice's optimality. This approach seems to be the most
natural (and first principles-based) recourse to solving the non-convex
optimization problem at hand. An alternate approach to optimize the
ergodic sum-rate is non-linear optimization theory~\cite{floudas}. But
this approach is fraught with complicated Hessian calculations and technical
difficulties such as distinguishing between local and global extrema.
\end{enumerate}

\subsection{Intermediate-$\snr$ Regime: Candidate Beamforming Vectors}
While physical intuition on the structure of the optimal ergodic sum-rate
maximizing beamforming vectors has been obtained in the low- and the
high-$\snr$ extremes, the essentially intractable nature of the
Exponential integral in the ergodic rate expressions of
Theorem~\ref{prop_basic_rate} means that such a possibility at an
arbitrary $\snr$ is difficult. Nevertheless, the single-user
set-up~\cite{vasanth_isit07_heath,vasanth_semiunitary} suggests that the optimal
beamforming vectors (that determine the modes that are excited) and the power
allocation across these modes can be continuously parameterized by a function
of the $\snr$. Motivated by the single-user case, a desirable
quality for a ``good'' beamforming vector structure
($\left\{ {\bf w}_{i, \hsppp {\sf cand}}(\rho), \hsppp i = 1,2 \right\}$)
at an arbitrary $\snr$ of $\rho$ is that the limiting behavior of such a
structure in the low- and the high-$\snr$ extremes should be the solutions of
Prop.~\ref{prop_snr_low} and Theorem~\ref{thm2}. That is,
\begin{eqnarray}
\lim_{ \rho \rightarrow 0}
{\bf w}_{i, \hsppp {\sf cand}}(\rho) & = & e^{j \nu_i} \hsppp
{\bf u}_1( {\bf \Sigma}_i),
\label{lowsnr_limit} \\
\lim_{ \rho \rightarrow \infty}
{\bf w}_{i, \hsppp {\sf cand}}(\rho) & = & e^{j \nu_i} \hsppp
\frac{ {\bf \Sigma}_2^{ - \frac{1}{2} } {\bf v}_i }
{ \| {\bf \Sigma}_2^{ - \frac{1}{2} } {\bf v}_i \|}, \hsp i = 1,2,
\label{highsnr_limit}
\end{eqnarray}
where the above limits are seen as manifold operations~\cite{arias_smith}
on ${\cal G}(2,1)$.

A natural candidate that meets~(\ref{lowsnr_limit}) and~(\ref{highsnr_limit})
is the following choice parameterized by $\alpha(\rho)$ and $\beta(\rho)$
satisfying $\left\{ \alpha(\rho), \hsppp \beta(\rho) \right\} \in [0, \infty)$
and $\nu_i \in [0, 2\pi), \hsppp i = 1,2$:
\begin{eqnarray}
\label{proposed1}
{\bf w}_{1, \hsppp {\sf cand}}(\rho) & = &
e^{j \nu_1} \cdot
{\sf Dom. \hsppp eig.} \left(  \big( \alpha(\rho) {\bf \Sigma}_2 + {\bf I}
\big)^{-1} \hsppp {\bf \Sigma}_1 \right) \\
{\bf w}_{2, \hsppp {\sf cand}}(\rho) & = &
e^{j \nu_2} \cdot
{\sf Dom. \hsppp eig.} \left(  \big( \beta(\rho) {\bf \Sigma}_1 + {\bf I}
\big)^{-1} \hsppp {\bf \Sigma}_2 \right)
\label{proposed2}
\end{eqnarray}
where the notation ${\sf Dom. \hsppp eig}(\bullet)$ stands for the
unit-norm dominant eigenvector operation. These vectors can be seen
to be solutions to the following optimization problems:
\begin{eqnarray}
\argmax \limits_{ {\bf w}_i \hsppp : \hsppp
{\bf w} i^H {\bf w}_i = 1 } {\sf SINR}_i =
{\bf w}_{i, \hsppp {\sf cand}}(\rho), \hspp i = 1,2
\end{eqnarray}
where
\begin{eqnarray}
{\sf SINR}_1 & = & \frac{  {\bf w}_1^H {\bf \Sigma}_1 {\bf w}_1 }
{ {\bf w}_1^H {\bf w}_1 + \alpha(\rho) \hsppp {\bf w}_1^H {\bf \Sigma}_2
{\bf w}_1 } \\
{\sf SINR}_2 & = & \frac{  {\bf w}_2^H {\bf \Sigma}_2 {\bf w}_2 }
{ {\bf w}_2^H {\bf w}_2 + \beta(\rho) \hsppp {\bf w}_2^H {\bf \Sigma}_1
{\bf w}_2 }.
\end{eqnarray}
The choice in~(\ref{proposed1})-(\ref{proposed2}) is a low-dimensional
mapping from ${\cal G}(2,1) \times {\cal G}(2,1)$ to ${\mathbb R}^+
\times {\mathbb R}^+$ thus considerably simplifying the search space for
candidate beamforming vectors. It must be noted that while the search
space is simplified, the generalized eigenvector operation is a
non-linear mapping~\cite{stewart} in $\alpha(\rho)$ and $\beta(\rho)$.

\ignore{
Despite this difficulty, we now show that
the choice in~(\ref{proposed1})-(\ref{proposed2}) is indeed optimal.
\begin{thm}
\label{thm_general}
In the intermediate-$\snr$ regime, the ergodic sum-rate is maximized
by the following choice of beamforming vectors:
\begin{eqnarray}
{\bf w}_{1, \hsppp {\sf opt} } & = & e^{j \nu_1} \cdot
{\sf Dom. \hsppp eig.} \left(  \big( \alpha^{\star}(\rho)
{\bf \Sigma}_2 + {\bf I} \big)^{-1} \hsppp {\bf \Sigma}_1 \right)
\label{cand1}
\\
{\bf w}_{2, \hsppp {\sf opt} } & = & e^{j \nu_2} \cdot
{\sf Dom. \hsppp eig.} \left(  \big( \beta^{\star}(\rho)
{\bf \Sigma}_1 + {\bf I} \big)^{-1} \hsppp {\bf \Sigma}_2 \right)
\label{cand2}
\end{eqnarray}
for some choice of $\nu_i \in [0, 2 \pi), \hsppp i = 1,2$.
The notation ${\sf Dom. \hsppp eig}(\bullet)$ stands for the
unit-norm dominant eigenvector operation and
\begin{eqnarray}
\left\{ \alpha^{\star}(\rho), \hspp \beta^{\star}(\rho) \right\}
& = & \argmax \limits_{ \{ \alpha(\rho), \hsppp \beta(\rho) \}  }
E \left[ R_1 \right] + E \left[ R_2 \right] \nonumber \\
& & {\hspace{0.3in}}
{\rm s.t.} \hspp \hspp
{\bf w}_1 \hsppp = \hsppp {\bf w}_{1, \hsppp {\sf cand}}(\rho), \hspp
{\bf w}_2  \hsppp = \hsppp {\bf w}_{2, \hsppp {\sf cand}}(\rho).
\label{alphabetastar}
\end{eqnarray}
\end{thm}
\begin{proof}
Instead of viewing the optimization of ${\cal R}_{\sf sum}$ as that of a
smooth map over the product manifold ${\cal G}(2,1) \times {\cal G}(2,1)$,
we can also treat it as the optimization of a constrained non-linear function
in the variable
\begin{eqnarray}
{\bf x} = \left[
\begin{array}{c}
{\bf w}_1 \\
{\bf w}_2
\end{array}
\right] \in {\mathbb C}^4.
\end{eqnarray}
The global maximizer $\big( {\bf x}^{\star} \big)^T
= \big[ \begin{array}{cc}
{\bf w}_{1, \hsppp {\sf opt} } ^T &
{\bf w}_{2, \hsppp {\sf opt} }^T \end{array} \big]$ of ${\cal R}_{\sf sum}$
is also a
local maximizer. Further, it can be checked that the linear independence
constraint qualification condition holds for ${\bf x}^{\star}$ because
$\left\{ {\bf w}_{1, \hsppp {\sf opt} }, \hsppp {\bf w}_{2, \hsppp {\sf opt} }
\right\} \in {\cal G}(2,1)$. Thus, using the Karush-Kuhn-Tucker
necessary conditions~\cite{floudas} with Lagrange multipliers $\{ \gamma_1 ,
\hspp \gamma_2 \}$, we have
\begin{eqnarray}
\label{nabl1}
\nabla {\cal R}_{\sf sum} \left({\bf x}^{\star} \right) +
\left[ \gamma_1 \hspp \gamma_2 \right] \cdot \nabla \ell \left( {\bf x}^{\star}
\right) & = & {\bf 0} \\
\label{nabl2}
\ell \left( {\bf x}^{\star} \right) & = & {\bf 0}
\end{eqnarray}
where $\nabla(\bullet)$ denotes the differential operator and
\begin{eqnarray}
\ell \left( {\bf x} \right) = \left[
\begin{array}{c}
{\bf x}^H {\bf A} {\bf x} - 1 \\
{\bf x}^H {\bf B} {\bf x} - 1
\end{array}
\right], \hsp
{\bf A} = \left[ \begin{array}{cc}
{\bf I} & {\bf 0} \\
{\bf 0} & {\bf 0}
\end{array} \right], \hsp
{\bf B} = \left[ \begin{array}{cc}
{\bf 0} & {\bf 0} \\
{\bf 0} & {\bf I}
\end{array} \right].
\end{eqnarray}
It can be seen that the conditions in~(\ref{nabl1})-(\ref{nabl2}) can be
rewritten as
\begin{eqnarray}
\frac{\partial}{ \partial {\bf w}_1 } {\cal R}_{\sf sum}
\Big|_{ \{
{\bf w}_{1, \hsppp {\sf opt} }, \hsppp {\bf w}_{2, \hsppp {\sf opt} }
\} } + \gamma_1 \cdot {\bf w}_{1, \hsppp {\sf opt} } & = & {\bf 0} \\
\frac{\partial}{ \partial {\bf w}_2 } {\cal R}_{\sf sum}
\Big|_{ \{ {\bf w}_{1, \hsppp {\sf opt} }, \hsppp {\bf w}_{2, \hsppp {\sf opt} }
\} } + \gamma_2 \cdot {\bf w}_{2, \hsppp {\sf opt} } & = & {\bf 0} \\
{\bf w}_{i, \hsppp {\sf opt} } ^H {\bf w}_{i, \hsppp {\sf opt} } & = & 1,
\hspp i = 1,2.
\end{eqnarray}

Since the objective function is a continuous map, ${\cal R}_{\sf sum}
\hsppp : \hsppp {\cal G}(2,1) \times {\cal G}(2,1) \mapsto {\mathbb R}^+$,
the critical points of this map are either local maximizers, local
minimizers or saddle points~\cite{absil}. Thus, a necessary condition
for ${\bf w}_{i, \hsppp {\sf opt}}, \hsppp i = 1,2$ is to satisfy

It is obvious that the choice of ${\bf w}_1$ and ${\bf w}_2$ as
in~(\ref{cand1})-(\ref{cand2}) would result in a lower bound to
${\cal R}_{\sf sum}$ as ${\cal R}_{\sf sum}$ is the optimized value
over ${\cal G}(2,1) \times {\cal G}(2,1)$. Since the upper bound to
${\cal R}_{\sf sum}$ meets the lower bound, the choice
in~(\ref{cand1})-(\ref{cand2}) is optimal.
\end{proof}
}

\subsection{Numerical Studies}
We now study the ergodic sum-rate performance with ${\bf w}_1$
and ${\bf w}_2$ as in~(\ref{proposed1})-(\ref{proposed2}) via two
numerical examples.
In the first study, we consider a system (note that ${\sf Tr}(
{\bf \Sigma}_1) = {\sf Tr}( {\bf \Sigma}_2) = M = 2$) with
\begin{eqnarray}
{\bf \Sigma}_1 & = & \left[
\begin{array}{cc}
1.7745   &  -0.5178 + 0.0247i \\
-0.5178 - 0.0247i &   0.2255
\end{array} \right], 
\label{sig1_better}
\\
{\bf \Sigma}_2 & = & \left[
\begin{array}{cc}
1.2522  & -0.8739 - 0.2711i \\
-0.8739 + 0.2711i &  0.7478
\end{array} \right].
\label{sig2_better}
\end{eqnarray}
Fig.~\ref{fig3}(a) shows the ergodic sum-rate as a function of $\rho$
for four schemes. For the first scheme, for every $\rho$, an optimal
choice $\{ \alpha^{\star}(\rho), \beta^{\star}(\rho) \}$ is obtained
from the search space $\alpha(\rho) \times \beta(\rho) \in [0, \infty) \times
[0, \infty)$ as follows:
\begin{eqnarray}
\left\{ \alpha^{\star}(\rho), \hspp \beta^{\star}(\rho) \right\}
& = & \argmax \limits _{ \{ \alpha(\rho), \hsppp \beta(\rho) \}  }
E \left[ R_1 \right] + E \left[ R_2 \right] \nonumber \\
& & {\hspace{0.3in}}
{\rm s.t.} \hspp \hspp
{\bf w}_1 \hsppp = \hsppp {\bf w}_{1, \hsppp {\sf cand}}(\rho), \hspp
{\bf w}_2  \hsppp = \hsppp {\bf w}_{2, \hsppp {\sf cand}}(\rho).
\end{eqnarray}
The performance of the beamforming vectors with $\alpha^{\star}(\rho)$
and $\beta^{\star}(\rho)$ for every $\rho$ is plotted along with the
performance of the candidate obtained via a numerical (Monte Carlo)
search over ${\cal G}(2,1) \times {\cal G}(2,1)$. As
motivated in the prior discussion, while we expect the performance with
$\{ \alpha^{\star}(\rho), \beta^{\star}(\rho) \}$ to be good, it is
surprising that this choice is indeed optimal. Further, the performance
of a set of beamforming vectors with $\alpha(\rho) = \beta(\rho) = 0$ and
$\alpha(\rho) = 100, \beta(\rho) = 15$ (fixed for all $\rho$
in~(\ref{proposed1})-(\ref{proposed2})) are also plotted. Observe that
these two sets approximate the low- and the high-$\snr$ solutions of
Prop.~\ref{prop_snr_low} and Theorem~\ref{thm2}, respectively.
\begin{figure}[htb!]
\begin{center}
\begin{tabular}{cc}
\includegraphics[height=2.5in,width=3in]{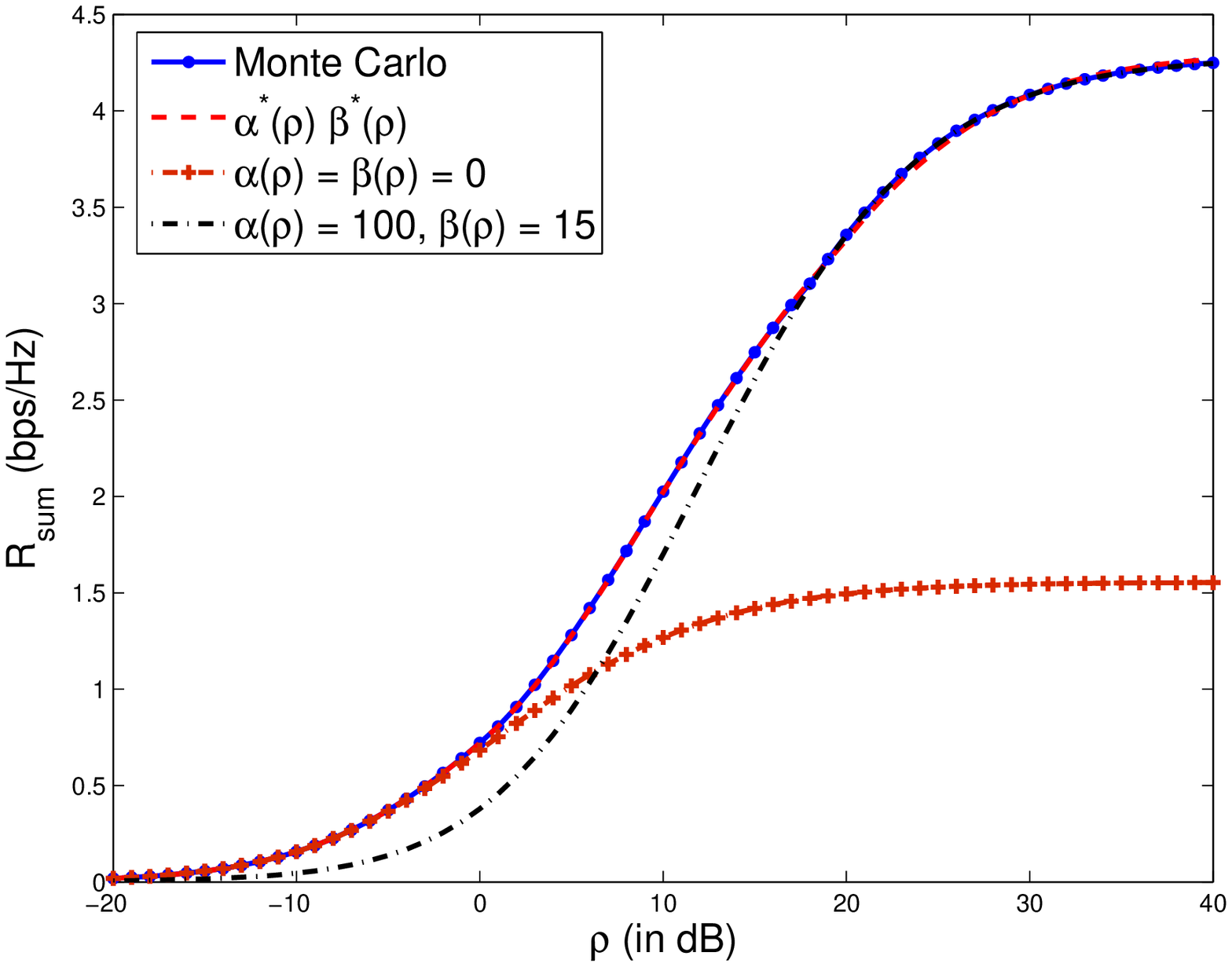} &
\includegraphics[height=2.5in,width=3in]{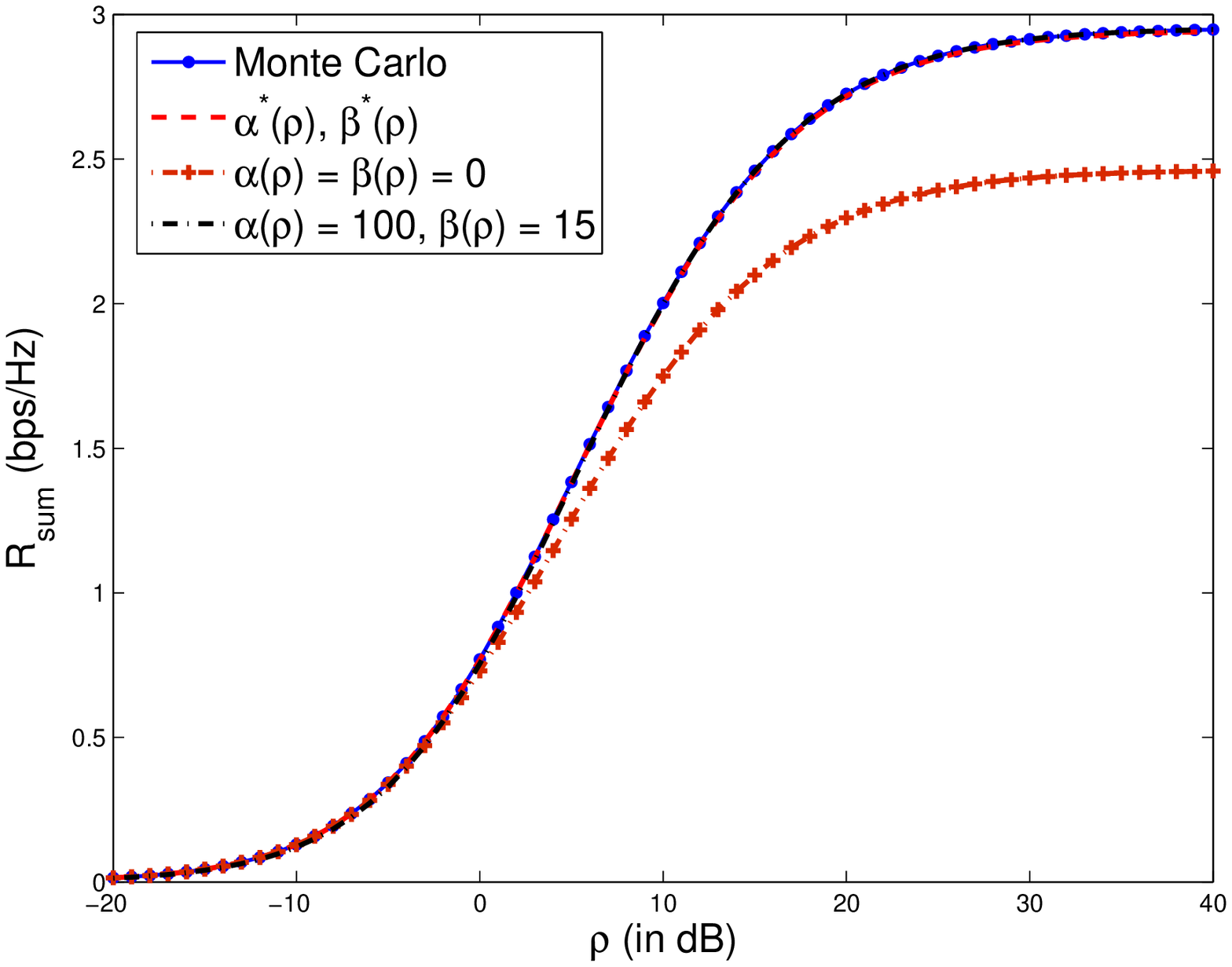}
\\ (a) & (b)
\end{tabular}
\caption{\label{fig3} Performance of proposed scheme with ${\bf \Sigma}_1$
and ${\bf\Sigma}_2$: (a) as in~(\ref{sig1_better})-(\ref{sig2_better}),
(b) as in~(\ref{sig1_worse})-(\ref{sig2_worse}).}
\end{center}
\end{figure}

In the second study, we consider a system (again, note that
${\sf Tr}( {\bf \Sigma}_1) = {\sf Tr}( {\bf \Sigma}_2) = M = 2$) with
\begin{eqnarray}
{\bf \Sigma}_1 & = & \left[
\begin{array}{cc}
1.3042 &   0.0543 - 0.2540i \\
0.0543 + 0.2540i  & 0.6958
\end{array} \right], 
\label{sig1_worse} \\
{\bf \Sigma}_2 & = & \left[
\begin{array}{cc}
1.1161 & -0.2195 + 0.4340i \\
-0.2195 - 0.4340i  & 0.8839
\end{array} \right].
\label{sig2_worse}
\end{eqnarray}
Fig.~\ref{fig3}(b) plots the performance of the proposed scheme, the low-
and the high-$\snr$ solutions in addition to the candidate obtained via
a numerical search over ${\cal G}(2,1) \times {\cal G}(2,1)$. As before,
the low- and the
high-$\snr$ solutions are optimal in their respective extremes while the
candidate $\{ \alpha^{\star}(\rho), \beta^{\star}(\rho) \}$ is optimal
across all $\rho$.

Note that in Fig.~\ref{fig3}(a) there exists an $\snr$-regime where both
the low- and the high-$\snr$ solutions are sub-optimal. In
contrast, in Fig.~\ref{fig3}(b), the high-$\snr$ solution essentially
coincides with the numerical search for all $\rho$ whereas at the
low-$\snr$ extreme, the performance of the low-$\snr$ solution is as
expected. We now explain why the high-$\snr$ solution performs as well as
$\{ \alpha^{\star}(\rho), \beta^{\star}(\rho) \}$ for all $\rho$. For this,
we need to understand the behavior of the angle between the proposed set of
beamforming vectors in~(\ref{proposed1})-(\ref{proposed2}) and the
low-$\snr$ solution as a function of $\alpha$ and $\beta$. In
Fig.~\ref{fig4}(b), we plot $\cos \left( {\sf Angle}_1(\alpha(\rho)) \right)$
as a function of $\alpha(\rho)$ and $\cos \left( {\sf Angle}_2(\beta(\rho))
\right)$ as a function of $\beta(\rho)$ where
\begin{eqnarray}
\label{angla1}
\cos \left( {\sf Angle}_1(\alpha(\rho)) \right) & = & \left|
\left( {\sf Dom. \hsppp eig.} \left(  \big( \alpha(\rho) {\bf \Sigma}_2 + {\bf I}
\big)^{-1} \hsppp {\bf \Sigma}_1 \right)  \right)^H
{\sf Dom. \hsppp eig.} \left( {\bf \Sigma}_1 \right) \right| \\
\cos \left( {\sf Angle}_2(\beta(\rho)) \right) & = & \left|
\left( {\sf Dom. \hsppp eig.} \left(  \big( \beta(\rho) {\bf \Sigma}_1 + {\bf I}
\big)^{-1} \hsppp {\bf \Sigma}_2 \right)  \right)^H
{\sf Dom. \hsppp eig.} \left( {\bf \Sigma}_2 \right) \right|.
\label{angla2}
\end{eqnarray}
From Fig.~\ref{fig4}(b), we note that the chordal distance\footnote{In
short, the chordal distance is the square-root of the difference of $1$
and the square of the quantity computed in~(\ref{angla1})
(or~(\ref{angla2})). See~(\ref{chordal}) for more details.} between the
low- and the
high-$\snr$ solutions is small (on the order of $0.05$). Also, observe
that there is a quick convergence of~(\ref{proposed1})-(\ref{proposed2})
as $\alpha$ (or $\beta$) increases to the high-$\snr$ solution and hence
the high-$\snr$ solution is a good approximation to the choice
$\{ \alpha^{\star}(\rho), \beta^{\star}(\rho) \}$ over a large $\snr$
range. On the other hand, from Fig.~\ref{fig4}(a), we see that the chordal
distance between the low- and the high-$\snr$ solutions is large (on the
order of $0.90$), which translates to the sub-optimality gap in
Fig.~\ref{fig3}(a).

\begin{figure}[htb!]
\begin{center}
\begin{tabular}{cc}
\includegraphics[height=2.5in,width=3in]{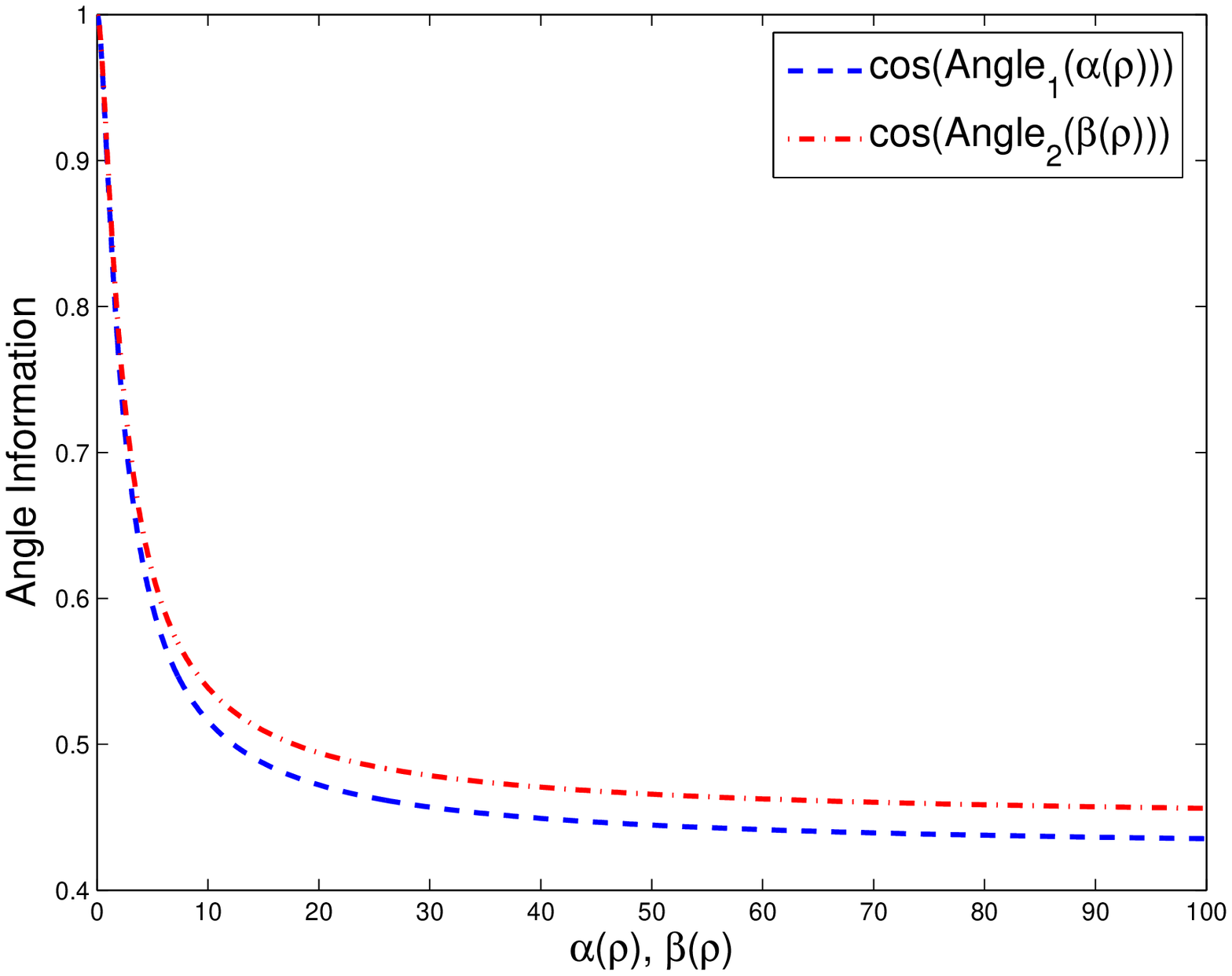} &
\includegraphics[height=2.5in,width=3in]{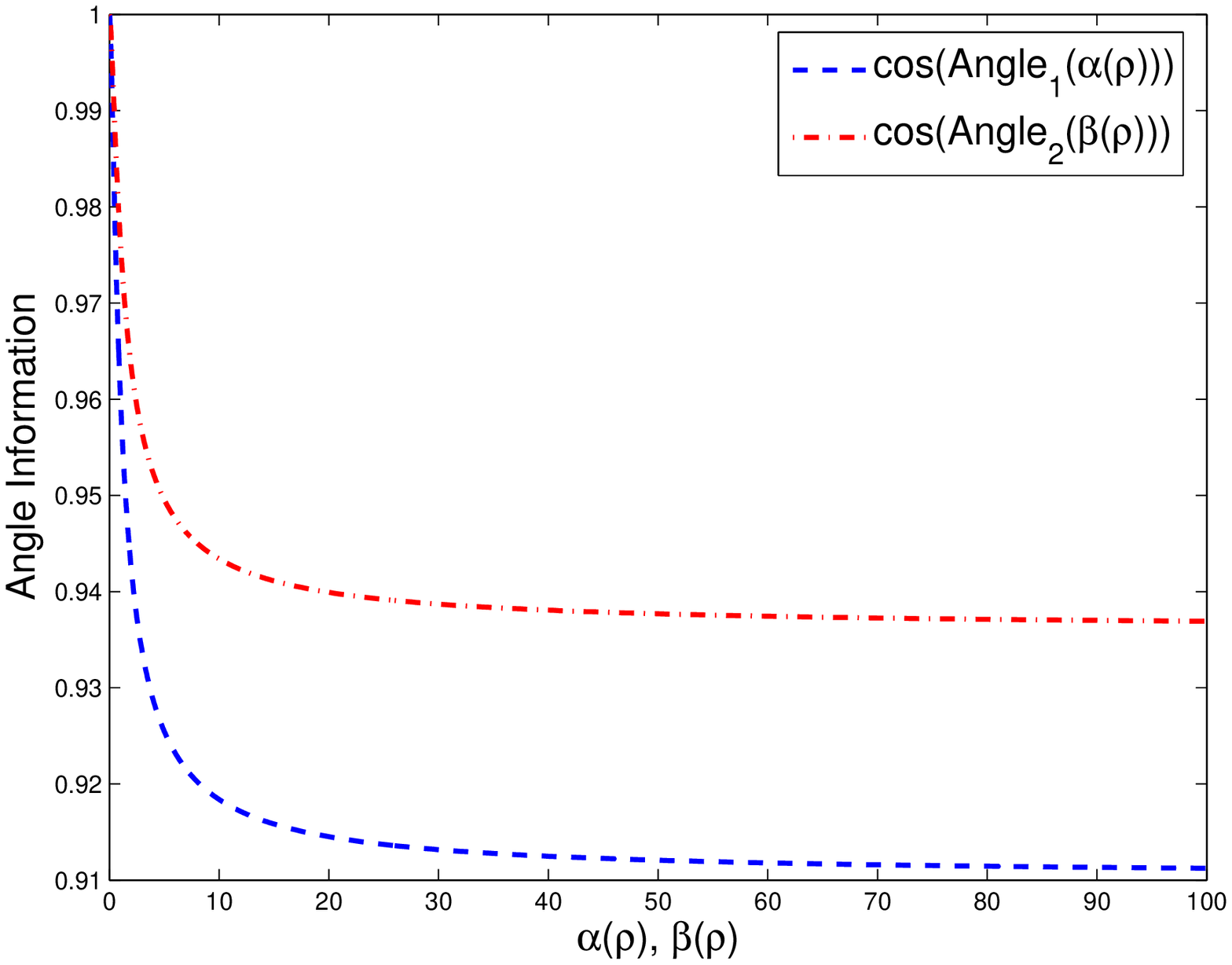}
\\ (a) & (b)
\end{tabular}
\caption{\label{fig4} Angle$_1$ and Angle$_2$, defined
in~(\ref{angla1})-(\ref{angla2}), as a function of $\alpha$ and $\beta$
for the setting in: (a)~(\ref{sig1_better})-(\ref{sig2_better}),
(b)~(\ref{sig1_worse})-(\ref{sig2_worse}).}
\end{center}
\end{figure}

While we are unable to prove the optimality structure of the proposed
scheme in the intermediate-$\snr$ regime, motivated by our numerical
studies, we make the following conjecture.
\begin{conj}
\label{thm_general}
In the intermediate-$\snr$ regime, the ergodic sum-rate is maximized
by the following choice of beamforming vectors:
\begin{eqnarray}
{\bf w}_{1, \hsppp {\sf opt} } & = & e^{j \nu_1} \cdot
{\sf Dom. \hsppp eig.} \left(  \big( \alpha^{\star}(\rho)
{\bf \Sigma}_2 + {\bf I} \big)^{-1} \hsppp {\bf \Sigma}_1 \right)
\label{cand1}
\\
{\bf w}_{2, \hsppp {\sf opt} } & = & e^{j \nu_2} \cdot
{\sf Dom. \hsppp eig.} \left(  \big( \beta^{\star}(\rho)
{\bf \Sigma}_1 + {\bf I} \big)^{-1} \hsppp {\bf \Sigma}_2 \right)
\label{cand2}
\end{eqnarray}
for some choice of $\nu_i \in [0, 2 \pi), \hsppp i = 1,2$.
The notation ${\sf Dom. \hsppp eig}(\bullet)$ stands for the
unit-norm dominant eigenvector operation and
\begin{eqnarray}
\left\{ \alpha^{\star}(\rho), \hspp \beta^{\star}(\rho) \right\}
& = & \argmax \limits_{ \{ \alpha(\rho), \hsppp \beta(\rho) \}  }
E \left[ R_1 \right] + E \left[ R_2 \right] \nonumber \\
& & {\hspace{0.3in}}
{\rm s.t.} \hspp \hspp
{\bf w}_1 \hsppp = \hsppp {\bf w}_{1, \hsppp {\sf cand}}(\rho), \hspp
{\bf w}_2  \hsppp = \hsppp {\bf w}_{2, \hsppp {\sf cand}}(\rho).
\label{alphabetastar}
\end{eqnarray}
\endproof
\end{conj}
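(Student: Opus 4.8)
The plan is to adapt the two-step \emph{upper-bound-and-achieve} method that already succeeds for Theorems~\ref{thm2} and~\ref{thm1} to a fixed, finite $\snr$. First I would reuse the adapted decomposition~(\ref{new_eqn1})--(\ref{new_eqn2}) of $\{ {\bf w}_1, {\bf w}_2 \}$, reducing the problem to the six real parameters $\{ |\alpha|, |\gamma|, \theta_\alpha, \theta_\beta, \theta_\gamma, \theta_\delta \}$. Next I would use Theorem~\ref{prop_basic_rate} to write ${\cal R}_{\sf sum}$ explicitly in terms of $A_i, B_i, C_i$ and seek an upper bound that depends only on a reduced parameter set, then exhibit the two-parameter family of Conjecture~\ref{thm_general} as the achiever of that bound, with $\{ \alpha^\star(\rho), \beta^\star(\rho) \}$ identified as the maximizers of the reduced optimization. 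The structural plausibility is that this family is exactly the set of regularized-${\sf SINR}$ maximizers displayed after~(\ref{proposed2}), and it interpolates continuously between the two regimes already settled rigorously: $\alpha=\beta=0$ recovers Prop.~\ref{prop_snr_low} and $\alpha,\beta\to\infty$ recovers Theorem~\ref{thm2}.

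As a second, complementary route I would impose the first-order conditions. Since ${\cal R}_{\sf sum}$ is smooth on the compact manifold ${\cal G}(2,1)\times{\cal G}(2,1)$, a global maximizer exists and is critical, so with unit-norm multipliers $\gamma_i$ the stationarity condition is $\partial {\cal R}_{\sf sum}/\partial {\bf w}_1^H = \gamma_1 {\bf w}_1$ (and symmetrically for ${\bf w}_2$). Because $E[R_1]$ depends on ${\bf w}_1$ only through $A_1={\bf w}_1^H{\bf \Sigma}_1{\bf w}_1$ and $C_1=|{\bf w}_1^H{\bf \Sigma}_1{\bf w}_2|$, while $E[R_2]$ depends on ${\bf w}_1$ only through $B_2={\bf w}_1^H{\bf \Sigma}_2{\bf w}_1$ and $C_2=|{\bf w}_1^H{\bf \Sigma}_2{\bf w}_2|$, differentiating the closed form (the eigenvalue derivatives bring in $h'$ through Theorem~\ref{prop_basic_rate}) yields
\[
 p_1\,{\bf \Sigma}_1{\bf w}_1 + p_2\,{\bf \Sigma}_2{\bf w}_1 + q_1\,({\bf w}_2^H{\bf \Sigma}_1{\bf w}_1)\,{\bf \Sigma}_1{\bf w}_2 + q_2\,({\bf w}_2^H{\bf \Sigma}_2{\bf w}_1)\,{\bf \Sigma}_2{\bf w}_2 = \gamma_1\,{\bf w}_1,
\]
with real coefficients $p_1,p_2,q_1,q_2$. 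The target is to collapse this to ${\bf \Sigma}_1{\bf w}_1 = a\,{\bf \Sigma}_2{\bf w}_1 + b\,{\bf w}_1$ with $a\ge 0$, which is exactly the statement that ${\bf w}_1$ is the dominant eigenvector of $\big(\alpha{\bf \Sigma}_2+{\bf I}\big)^{-1}{\bf \Sigma}_1$ with $\alpha=a=\alpha^\star(\rho)$. The obstruction is the pair of cross terms ${\bf \Sigma}_1{\bf w}_2$ and ${\bf \Sigma}_2{\bf w}_2$: in the $M=2$ setting $\{ {\bf w}_1,{\bf w}_2 \}$ is generically a basis of ${\mathbb C}^2$, so I would resolve the stationarity equation along ${\bf w}_1$ and ${\bf w}_2$ and attempt to show the ${\bf w}_2$-component must vanish at the optimum, leaving precisely the two-term form above.

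The hard part, and the reason the result is stated as a conjecture rather than a theorem, is twofold. Analytically, at intermediate $\snr$ the rate is governed by $h(x)=\exp(1/x)\,E_1(1/x)$, which has no elementary form and only the loose two-sided bounds used in Prop.~\ref{prop_lowsnr}; the exact telescoping that made the upper bound \emph{tight} in Theorems~\ref{thm2} and~\ref{thm1} relied on the logarithmic asymptotics of $h$, and no such clean identity holds for general $\rho$, so the upper-bound and achievable values cannot be shown to coincide in closed form. On the optimization side, ${\cal R}_{\sf sum}$ is neither concave nor concavifiable on any natural relaxation and its Hessian is not negative-definite, so the first-order conditions alone cannot separate the global maximum from saddles or local minima. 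A feasible fallback I would pursue is a homotopy argument: the global maximizer is continuous in $\rho$ and coincides with the conjectured family at both $\rho\to 0$ (Prop.~\ref{prop_snr_low}) and $\rho\to\infty$ (Theorem~\ref{thm2}); if one can prove the two-parameter family is \emph{invariant} under the stationarity map for all intermediate $\rho$, continuity would confine the optimizer inside it. Establishing that invariance---equivalently, the vanishing of the ${\bf w}_2$-cross-component in the displayed gradient for every $\rho$---is the single step I expect to be genuinely hard.
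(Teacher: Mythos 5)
There is no proof in the paper for you to be compared against: the statement you were asked to prove is stated by the authors as Conjecture~\ref{thm_general}, precisely because they could not prove it. The only support the paper offers is numerical (the Monte Carlo comparisons in Fig.~\ref{fig3}) plus the structural motivation that the candidate family solves a regularized average ${\sf SINR}$ problem and interpolates between the two proven extremes (Prop.~\ref{prop_snr_low} as $\rho \rightarrow 0$, Theorem~\ref{thm2} as $\rho \rightarrow \infty$). Your proposal, to its credit, recognizes this and is framed as a plan rather than a finished argument; judged as a proof, however, it contains the same gap the authors could not close, and both of your routes stall exactly where theirs did. Route one (upper-bound-and-achieve at finite $\rho$) needs a bound on ${\cal R}_{\sf sum}$ that is simultaneously independent of the six decomposition parameters and tight for the conjectured family; in Appendices~\ref{app_gen} and~\ref{app_thm1} that tightness comes from the special logarithmic structure of the high-$\snr$ limit expressed through $f(\bullet)$ and $g(\bullet)$, and no analogue exists for $h(x) = \exp(1/x)\, E_1(1/x)$ at finite $\rho$ --- as you yourself concede.

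Route two is, in fact, an approach the authors drafted and abandoned: the paper's source contains a commented-out ``theorem'' with a KKT/stationarity argument essentially identical to your second route, and the surviving remark on the optimization set-up in Sec.~\ref{sec4} explains why it was dropped --- neither ${\cal G}(2,1)$ nor ${\sf St}(2,1)$ is convex, the sum-rate is neither convex nor concave on any natural relaxation, and the authors' own (commented-out) observation is that the Hessian at the candidate point is not negative definite, so first-order conditions cannot separate the global maximum from other critical points. Beyond the cross-term obstruction you correctly name (showing the ${\bf \Sigma}_1 {\bf w}_2$ and ${\bf \Sigma}_2 {\bf w}_2$ components of the gradient vanish at the optimum), two further steps in your sketch are unresolved: even if the stationarity equation collapses to ${\bf \Sigma}_1 {\bf w}_1 = a \, {\bf \Sigma}_2 {\bf w}_1 + b \, {\bf w}_1$, this only makes ${\bf w}_1$ \emph{some} unit-norm eigenvector of $\big( \tfrac{a}{b} {\bf \Sigma}_2 + {\bf I} \big)^{-1} {\bf \Sigma}_1$ (note also the normalization: the regularizer is $a/b$, not $a$), not the \emph{dominant} one, so a second-order or global argument is still required to select the top eigenvector; and your homotopy fallback presupposes that the global maximizer is unique and varies continuously in $\rho$, which is not established and could fail at crossover points of a non-convex objective. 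The honest verdict is that your proposal is a sensible research plan whose obstructions are real and correctly identified --- they are the same obstructions that forced the authors to leave the statement as a conjecture --- but it does not constitute a proof, and neither does the paper contain one.
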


\ignore{
Once a closed-form expression is
obtained, in Sec.~\ref{sec4} we will characterize the structure of
the optimal beamforming vectors as a function of the channel statistics
and ${\sf SNR}$ ($\rho$).

Define the condition numbers $\kappa_1$ and $\kappa_2$ as
\begin{eqnarray}
\kappa_1 \triangleq \frac{\lambda_1}{\lambda_2} \hspp \hspp {\rm and }
\hspp \hspp \kappa_2 \triangleq \frac{\mu_1}{\mu_2}.
\end{eqnarray}
}

\section{Ergodic Sum-Rate: Generalizations}
\label{sec5}
We studied the structure of ergodic sum-rate maximizing beamforming vectors in
Sec.~\ref{sec4}. In this section, we consider more general problems of
this nature.
\subsection{Maximizing $E \left[R_i \right]$}
Consider a system where the Quality-of-Service metric of one user
significantly dominates that of the other user. For example, one user
is considerably more important to the network operator than the other.
The relevant metric to optimize in this scenario is not the ergodic
sum-rate, but the rate achievable by the more important user. In this
setting, we have the following result.
\begin{prp}
\label{prop1}
The optimal choice of the pair $\left( {\bf w}_{1, \hsppp {\sf opt}},
{\bf w}_{2, \hsppp {\sf opt}} \right)$ that maximizes $E \left[  R_i \right]$
is:
\newline
\noindent {\bf \em i) Low-$\snr$ Extreme:}
\begin{eqnarray}
{\bf w}_{i, \hsppp {\sf opt}} = e^{j \nu_1} \hsppp {\bf u}_1
\left( {\bf \Sigma}_i  \right) \hsp {\rm and} \hsp
{\bf w}_{j, \hsppp {\sf opt}} = {\sf any} \hspp {\sf vector \hspp on} \hspp
{\cal G}(2,1), \hspp j \neq i 
\label{opt_prop34}
\end{eqnarray}
{\vspace{-0.2in}}
\newline
\noindent {\bf \em ii) High-$\snr$ Extreme:}
\begin{eqnarray}
{\bf w}_{i, \hsppp {\sf opt}} = e^{j \nu_1} \hsppp {\bf u}_1
\left( {\bf \Sigma}_i  \right)
\hsp {\rm and} \hsp {\bf w}_{j, \hsppp {\sf opt}} = e^{j \nu_2}
\hsppp {\bf u}_2 \left( {\bf \Sigma}_i  \right), \hspp j \neq i
\label{opt_prop3}
\end{eqnarray}
for some choice of $\nu_i \in [0, 2\pi), \hsppp i = 1,2$.
\end{prp}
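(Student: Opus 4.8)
The plan is to handle the two $\snr$ extremes separately, exploiting that $E\left[R_i\right]$ depends only on ${\bf \Sigma}_i$ (not on ${\bf \Sigma}_j$), so that the maximization over the pair $\left( {\bf w}_i, {\bf w}_j \right)$ becomes a single-matrix problem that I can attack with the closed forms of Props.~\ref{prop_lowsnr} and~\ref{prp_highsnr}. Throughout I write $\lambda_1 \geq \lambda_2 > 0$ for the eigenvalues of ${\bf \Sigma}_i$, and recall $A_i = {\bf w}_i^H {\bf \Sigma}_i {\bf w}_i \in [\lambda_2, \lambda_1]$, $B_i = {\bf w}_j^H {\bf \Sigma}_i {\bf w}_j \in [\lambda_2, \lambda_1]$ and $C_i = |{\bf w}_i^H {\bf \Sigma}_i {\bf w}_j| \geq 0$.

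For the low-$\snr$ extreme the argument is immediate. Prop.~\ref{prop_lowsnr} gives $E\left[R_i\right]/\rho \to A_i/2 = {\bf w}_i^H {\bf \Sigma}_i {\bf w}_i / 2$ as $\rho \to 0$. The leading coefficient is a Rayleigh quotient in ${\bf w}_i$ that does not involve ${\bf w}_j$ at all, so it is maximized over the unit sphere exactly by the dominant eigenvector, ${\bf w}_{i, \hsppp {\sf opt}} = e^{j \nu_1} {\bf u}_1({\bf \Sigma}_i)$, while ${\bf w}_j$ is left free; this is~(\ref{opt_prop34}).

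For the high-$\snr$ extreme I would start from~(\ref{highsnr}), written compactly as $E\left[R_i\right] \to g({\bf \Lambda}_{i,1}, {\bf \Lambda}_{i,2}) - \log B_i$ with $g(x,y) \triangleq (x \log x - y \log y)/(x-y)$. Since $g$ is symmetric, the two eigenvalues enter only through $s \triangleq {\bf \Lambda}_{i,1} + {\bf \Lambda}_{i,2} = A_i + B_i$ and $p \triangleq {\bf \Lambda}_{i,1} {\bf \Lambda}_{i,2} = A_i B_i - C_i^2$, so I regard $g$ as a function $G(s,p)$ on the feasible set $\{\, 0 \leq p \leq s^2/4 \,\}$. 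The crux is a monotonicity lemma: $G$ is nondecreasing in each of $s$ and $p$. I would prove it by differentiation in the ratio $r = {\bf \Lambda}_{i,1}/{\bf \Lambda}_{i,2} \geq 1$: for fixed $p$ the sign of $\partial_s G$ reduces to $m(r) \triangleq r - 1/r - 2\log r \geq 0$, and for fixed $s$ the sign of $\partial_p G$ reduces to $2(r-1) - (r+1)\log r \leq 0$; each follows by checking the value and first derivative at $r = 1$ together with the sign of the second derivative.

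Granting the lemma the conclusion is short. Because $A_i \leq \lambda_1$ and $C_i^2 \geq 0$, we have $s \leq \lambda_1 + B_i$ and $p \leq A_i B_i \leq \lambda_1 B_i$; raising $s$ first and then $p$ keeps the pair feasible (AM-GM gives $\lambda_1 B_i \leq (\lambda_1 + B_i)^2/4$), and since $\{\lambda_1, B_i\}$ is precisely the pair with sum $\lambda_1 + B_i$ and product $\lambda_1 B_i$, monotonicity yields $G(s,p) \leq g(\lambda_1, B_i)$. Thus $E\left[R_i\right] \leq g(\lambda_1, B_i) - \log B_i = \lambda_1 \log(\lambda_1/B_i)/(\lambda_1 - B_i)$; with $w = \lambda_1/B_i \geq 1$ this is $w \log w/(w-1)$, which is increasing in $w$ and hence maximized over $B_i \in [\lambda_2, \lambda_1]$ at $B_i = \lambda_2$. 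The upper bound $\lambda_1 \log(\lambda_1/\lambda_2)/(\lambda_1 - \lambda_2)$ is attained by ${\bf w}_{i, \hsppp {\sf opt}} = e^{j \nu_1} {\bf u}_1({\bf \Sigma}_i)$ and ${\bf w}_{j, \hsppp {\sf opt}} = e^{j \nu_2} {\bf u}_2({\bf \Sigma}_i)$, which give $A_i = \lambda_1$, $B_i = \lambda_2$, $C_i = 0$, hence ${\bf \Lambda}_{i,1} = \lambda_1$, ${\bf \Lambda}_{i,2} = \lambda_2$; this proves~(\ref{opt_prop3}), the attained value being the $\frac{\chi_1 \log \chi_1}{\chi_1 - 1}$ term of~(\ref{rhs4}) with $\chi_1 = \lambda_1/\lambda_2$. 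The main obstacle is the monotonicity lemma for $G(s,p)$: once it is established, the two successive bounds and the one-dimensional optimization over $B_i$ are routine, as is the achievability check. A minor edge case to dispatch is the coincidence ${\bf \Lambda}_{i,1} = {\bf \Lambda}_{i,2}$ (and $\lambda_1 = \lambda_2$), handled through the limiting value $g(x,x) = \log x + 1$.
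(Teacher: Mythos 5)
Your proof is correct, and its high-$\snr$ half takes a genuinely different route from the paper's own argument in Appendix~\ref{app_prop1}. (The low-$\snr$ half is the same one-line Rayleigh-quotient observation; the paper simply calls it obvious.) The paper divides through by $B_i$ and recasts the problem as a two-parameter optimization over $M_i = A_i/B_i$ and $N_i = C_i/B_i$ subject to $0 \le N_i^2 \le M_i \le \chi_i$; it then reuses machinery it has already built in Appendices~\ref{app_rewrite} and~\ref{app_prop2}, namely the identity $2E\left[R_i\right] + 2\log 2 = g\big(2\sqrt{M_i - N_i^2}/(M_i+1)\big) + 2\log(1+M_i)$ in terms of the semi-metric $d_{{\bf \Sigma}_i}$ and the proven monotonicity of the scalar function $g(\bullet)$, which eliminates $N_i$ in one stroke and leaves the one-dimensional problem of maximizing $M_i |\log M_i| / |M_i - 1|$, an increasing function, hence $M_i = \chi_i$. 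You instead work directly on the raw limit~(\ref{highsnr}), pass to the symmetric functions $s = A_i + B_i$ and $p = A_i B_i - C_i^2$ of the eigenvalues, and prove a fresh bivariate monotonicity lemma for $G(s,p)$. Your reductions of the two partial-derivative signs to $r - 1/r - 2\log r \ge 0$ and $2(r-1) - (r+1)\log r \le 0$ are both correct and provable exactly as you indicate (each vanishes to second order at $r = 1$ and has a second derivative of the appropriate sign on $r \ge 1$), the AM--GM feasibility check for the move $(s,p) \to (\lambda_1 + B_i, \lambda_1 B_i)$ is sound, and both routes terminate in the same increasing function $w \log w/(w-1)$ and the same achievability check, recovering~(\ref{eqn5}). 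What the paper's route buys is brevity and internal consistency: $f$, $g$ and $d_{{\bf \Sigma}_i}$ are established once and reused across Theorems~\ref{thm2} and~\ref{thm1} and this proposition. What yours buys is self-containedness — it never needs the semi-metric rewriting — plus a structural fact of independent interest (the high-$\snr$ signal term is monotone in both the sum and the product of the eigenvalues of ${\bf \Sigma}_i^{1/2}\big(\sum_j {\bf w}_j {\bf w}_j^H\big){\bf \Sigma}_i^{1/2}$) and an explicit treatment of the degenerate case ${\bf \Lambda}_{i,\hsppp 1} = {\bf \Lambda}_{i, \hsppp 2}$, which the paper's division by ${\bf \Lambda}_{i,\hsppp 1} - {\bf \Lambda}_{i,\hsppp 2}$ leaves implicit.
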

\begin{proof}
See Appendix~\ref{app_prop1}.
\end{proof}
With the above choice of beamforming vectors, $E \left[ R_i \right]$ can
be written as
\begin{eqnarray}
E \left[ R_i \right]
& \stackrel{\rho \rightarrow \infty}{ \rightarrow } &
\frac{\chi_i \log(\chi_i)}{\chi_i - 1}
\label{eqn5}
\end{eqnarray}
where $\chi_i = \frac{ \lambda_1( {\bf \Sigma}_i )}
{ \lambda_2( {\bf \Sigma}_i )}$.
From~(\ref{eqn5}), it is to be noted that $E\left[ R_i \right]$
increases 
as $\chi_i$ increases. That is, the more ill-conditioned ${\bf \Sigma}_i$
is, the larger the high-$\snr$ statistical beamforming rate asymptote is
(and {\em vice versa}). This should be intuitive as our goal is only to
maximize $E \left[ R_i \right]$ and the beamforming vectors
in~(\ref{opt_prop3}) achieve that goal.

\subsection{Weighted Ergodic Sum-Rate Maximization}
In a system where the Quality-of-Service metrics of the two users are
comparable (but not the same), the relevant metric to optimize is the
weighted-sum of ergodic rates achievable by the two users~\cite{kobayashi}.
Specifically, the objective function here is
\begin{eqnarray}
{\cal R}_{\sf weighted} = \zeta_1 E \left[R_1 \right] +
\zeta_2 E \left[ R_2 \right]
\end{eqnarray}
for some choice of weights $\zeta_1$ and $\zeta_2$ satisfying
(without loss in generality) $\left\{
\zeta_1, \hsppp \zeta_2 \right\} \in [0,1].$ Note that $E \left[ R_i \right]$
is a special case of this objective function with $\zeta_1 = 1, \zeta_2 = 0$
or $\zeta_1 = 0, \zeta_2 = 1$.

Maximizing ${\cal R}_{\sf weighted}$ to obtain a closed-form
characterization 
of the optimal beamforming
vectors seems hard in general. Motivated by the study for the sum-rate in
the intermediate-$\snr$ regime in Sec.~\ref{sec4}, we now consider a set
of candidate beamforming vectors that produce known optimal structures in
special cases. For this, it is important to note that no choice of
$\alpha(\rho)$ and $\beta(\rho)$ in~(\ref{proposed1})-(\ref{proposed2})
can produce the beamforming vectors in~(\ref{opt_prop3}). A candidate set
of beamforming vectors that not only produces the special (extreme) cases
in the ergodic sum-rate setting, but also~(\ref{opt_prop3}) is the
following choice parameterized by four quantities, $\{ \alpha(\rho),
\beta(\rho), \gamma(\rho), \delta(\rho) \} \in [0, \infty)$:
\begin{eqnarray}
\label{proposed3}
{\bf w}_{1, \hsppp {\sf weighted, \hsppp cand}}(\rho) & = &
e^{j \nu_1} \cdot
{\sf Dom. \hsppp eig.} \left(  \big( \alpha(\rho) {\bf \Sigma}_2 + {\bf I}
\big)^{-1} \hsppp  \big( \gamma(\rho) {\bf \Sigma}_1  + {\bf I} \big)
\right) \\
{\bf w}_{2, \hsppp {\sf weighted, \hsppp cand}}(\rho) & = &
e^{j \nu_2} \cdot
{\sf Dom. \hsppp eig.} \left(  \big( \beta(\rho) {\bf \Sigma}_1 + {\bf I}
\big)^{-1} \hsppp \big( \delta(\rho) {\bf \Sigma}_2 + {\bf I} \big)
\right)
\label{proposed4}
\end{eqnarray}
where the notation ${\sf Dom. \hsppp eig}(\bullet)$ stands for the usual
unit-norm dominant eigenvector operation. As
before,~(\ref{proposed3})-(\ref{proposed4}) corresponds to a low-dimensional
map from $\left\{ {\cal G}(2,1) \right\}^4$ to $\left\{ [0, \infty)
\right\}^4$ and thus a simplification in the search for a good beamformer
structure.

\begin{figure}[htb!]
\centering
\begin{tabular}{c}
\includegraphics[height=3in,width=3.8in]{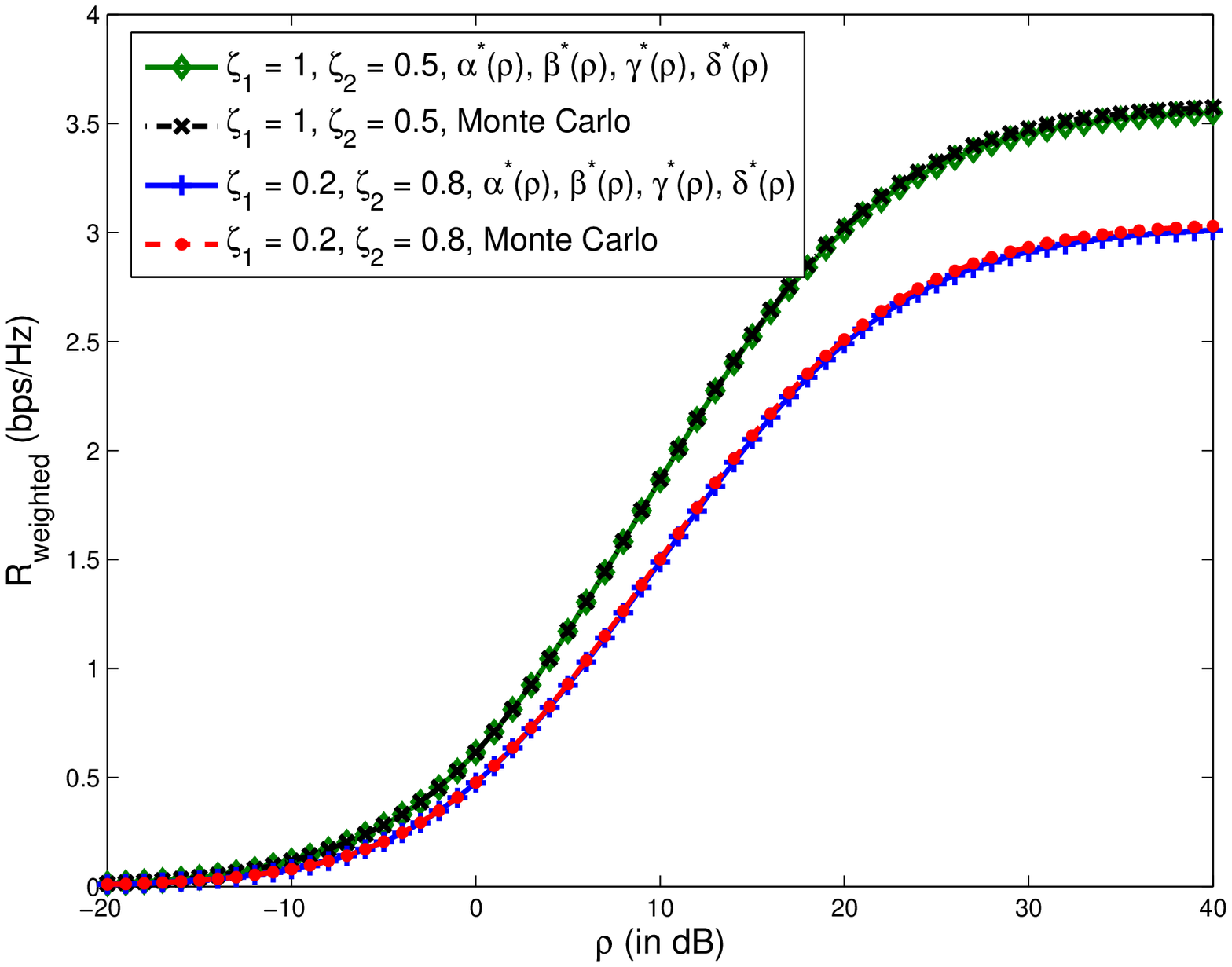}
\end{tabular}
\caption{Weighted ergodic sum-rate of proposed scheme with ${\bf \Sigma}_1$
and ${\bf \Sigma}_2$ as in~(\ref{sig1_better})-(\ref{sig2_better}).
\label{fig5}}
\end{figure}
We now study the performance of the proposed beamforming vectors
in~(\ref{proposed3})-(\ref{proposed4}) for the system with ${\bf \Sigma}_1$
and ${\bf \Sigma}_2$ as in~(\ref{sig1_better})-(\ref{sig2_better}).
Two sets of weights are considered: i) $\zeta_1 = 1, \zeta_2 = 0.5$
and ii) $\zeta_1 = 0.2, \zeta_2 = 0.8$. Fig.~\ref{fig5} plots the
performance of two schemes. The first scheme corresponds to a Monte Carlo
search over ${\cal G}(2,1)$, whereas the second scheme corresponds to the
use of an optimal choice $\{ \alpha^{\star}(\rho), \beta^{\star}(\rho),
\gamma^{\star}(\rho), \delta^{\star}(\rho) \}$ (for every $\rho$) with
\begin{align}
& \left\{ \alpha^{\star}(\rho), \hspp \beta^{\star}(\rho), \hspp
\gamma^{\star}(\rho), \hspp \delta^{\star}(\rho) \right\}
\nonumber \\
& {\hspace{0.5in}} =
\argmax \limits _{ \{ \alpha(\rho), \hsppp \beta(\rho),
\hsppp \gamma(\rho), \hsppp \delta(\rho) \}  }
E \left[ R_1 \right] + E \left[ R_2 \right] \nonumber \\
& {\hspace{0.85in}}
{\rm s.t.} \hspp \hspp
{\bf w}_1 \hsppp = \hsppp
{\bf w}_{1, \hsppp {\sf weighted, \hsppp cand}}(\rho), \hspp
{\bf w}_2  \hsppp = \hsppp
{\bf w}_{2, \hsppp {\sf weighted, \hsppp cand}}(\rho).
\label{eq3}
\end{align}
As can be seen from Fig.~\ref{fig5}, the proposed scheme
in~(\ref{proposed3})-(\ref{proposed4}) performs as well as the Monte
Carlo search for both sets of weights. Numerical studies suggest that
similar performance is seen across all possible ${\bf \Sigma}_1$
and ${\bf\Sigma}_2$, and all possible weights $\zeta_1$ and $\zeta_2$.
Motivated by these studies, we pose the following conjecture.
\begin{conj}
\label{conj2}
In the intermediate-$\snr$ regime, the weighted ergodic sum-rate,
${\cal R}_{\sf weighted}$, is maximized by the following choice of
beamforming vectors:
\begin{eqnarray}
{\bf w}_{1} & = & e^{j \nu_1} \cdot
{\sf Dom. \hsppp eig.} \left(  \big( \alpha^{\star}(\rho) {\bf \Sigma}_2 +
{\bf I} \big)^{-1} \hsppp  \big( \gamma^{\star}(\rho) {\bf \Sigma}_1  +
{\bf I} \big) \right) \\
{\bf w}_{2} & = & e^{j \nu_2} \cdot
{\sf Dom. \hsppp eig.} \left(  \big( \beta^{\star}(\rho) {\bf \Sigma}_1 +
{\bf I} \big)^{-1} \hsppp  \big( \delta^{\star}(\rho) {\bf \Sigma}_2  +
{\bf I} \big) \right)
\end{eqnarray}
for some choice of $\nu_i \in [0, 2 \pi), \hsppp i = 1,2$ and where
$\{ \alpha^{\star}(\rho), \hspp \beta^{\star}(\rho), \hspp
\gamma^{\star}(\rho), \hspp \delta^{\star}(\rho) \}$ are as in~(\ref{eq3}).
\endproof
\end{conj}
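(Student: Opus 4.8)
\noindent\emph{Sketch of a possible proof of Conjecture~\ref{conj2}.}
The plan is to follow the two-stage philosophy that settled the high-$\snr$ extremes in Theorems~\ref{thm2} and~\ref{thm1}: first isolate the structural form of every stationary point of ${\cal R}_{\sf weighted}$ on the product manifold ${\cal G}(2,1)\times{\cal G}(2,1)$, and then certify that the best such stationary point is the global maximizer. Existence of a global maximizer is immediate, since by Theorem~\ref{prop_basic_rate} the map ${\cal R}_{\sf weighted}=\zeta_1 E[R_1]+\zeta_2 E[R_2]$ is continuous on a compact manifold; at a maximizer the first-order (Karush--Kuhn--Tucker) conditions $\frac{\partial}{\partial{\bf w}_i}{\cal R}_{\sf weighted}=\gamma_i{\bf w}_i$ hold with real multipliers $\gamma_i$ enforcing $\|{\bf w}_i\|=1$, the residual phase freedom being exactly the $e^{j\nu_i}$ in the statement.

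First I would carry out the first-order analysis explicitly. Recalling from Theorem~\ref{prop_basic_rate} that $E[R_i]$ depends on $({\bf w}_1,{\bf w}_2)$ only through $A_i={\bf w}_i^H{\bf \Sigma}_i{\bf w}_i$, $B_i={\bf w}_j^H{\bf \Sigma}_i{\bf w}_j$ and $C_i=|{\bf w}_i^H{\bf \Sigma}_i{\bf w}_j|$, a Wirtinger-calculus differentiation through the chain $A_i,B_i,C_i\mapsto{\bf \Lambda}_{i,1},{\bf \Lambda}_{i,2},\widetilde{\bf \Lambda}_{i,1}\mapsto h(\cdot)$ gives
\[\frac{\partial{\cal R}_{\sf weighted}}{\partial{\bf w}_1}=\Big[\zeta_1\frac{\partial E[R_1]}{\partial A_1}{\bf \Sigma}_1+\zeta_2\frac{\partial E[R_2]}{\partial B_2}{\bf \Sigma}_2\Big]{\bf w}_1+\Big[\zeta_1\frac{\partial E[R_1]}{\partial C_1}\frac{e^{-j\phi_1}}{2}{\bf \Sigma}_1+\zeta_2\frac{\partial E[R_2]}{\partial C_2}\frac{e^{j\phi_2}}{2}{\bf \Sigma}_2\Big]{\bf w}_2,\]
with $\phi_1=\arg({\bf w}_1^H{\bf \Sigma}_1{\bf w}_2)$, $\phi_2=\arg({\bf w}_2^H{\bf \Sigma}_2{\bf w}_1)$. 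Setting this equal to $\gamma_1{\bf w}_1$ (together with the symmetric condition for ${\bf w}_2$) produces a coupled pair of stationarity equations. The crux is the phase degrees of freedom: optimizing over them (the analogue of the phase sweep in the proof of Theorem~\ref{thm2}) should force the cross term — the bracket acting on ${\bf w}_2$ — to become collinear with ${\bf w}_1$ at the optimum, so that it only renormalizes $\gamma_1$ without rotating the beamformer direction. Once this collinearity is established, the equation decouples into the statement that ${\bf w}_1$ is an eigenvector of the effective Hermitian combination $p_1{\bf \Sigma}_1+p_2{\bf \Sigma}_2$, with $p_1=\zeta_1\,\partial E[R_1]/\partial A_1>0$ and $p_2=\zeta_2\,\partial E[R_2]/\partial B_2<0$. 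For such a sign pattern these eigenvectors are precisely those captured by the pencil $\big(\alpha^\star(\rho){\bf \Sigma}_2+{\bf I}\big)^{-1}\big(\gamma^\star(\rho){\bf \Sigma}_1+{\bf I}\big)$, the scalars being read off from the rate-derivatives; the dominant eigenvector of $p_1{\bf \Sigma}_1+p_2{\bf \Sigma}_2$ maps to the dominant pencil eigenvector. This reproduces the conjectured form~(\ref{proposed3})--(\ref{proposed4}), shows the four-parameter family is no loss of generality, and identifies the reduced search~(\ref{eq3}) as the correct self-consistent reparameterization. Recovering Prop.~\ref{prop_snr_low} and Theorem~\ref{thm2} as the $\rho\to0$ and $\rho\to\infty$ endpoints serves as a consistency check.

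The hard part — and the reason the statement is posed as a conjecture rather than a theorem — is upgrading ``critical point of the prescribed form'' to ``global maximizer.'' The first-order analysis only certifies that the global maximizer \emph{is} of the conjectured form; it excludes neither spurious saddles nor competing local maxima, and the second-order test is inconclusive because the Hessian is generically indefinite at the candidate. Since ${\cal R}_{\sf weighted}$ is neither convex nor concave and ${\cal G}(2,1)$ is not a convex set, the Lagrangian route is unavailable, and one is forced back onto the parameter-free upper bound of the Theorems~\ref{thm2} and~\ref{thm1} method. Constructing such a bound at finite $\rho$ is the genuine obstacle: it hinges on controlling the divided-difference term $\frac{{\bf \Lambda}_{i,1}h(\rho{\bf \Lambda}_{i,1}/2)-{\bf \Lambda}_{i,2}h(\rho{\bf \Lambda}_{i,2}/2)}{{\bf \Lambda}_{i,1}-{\bf \Lambda}_{i,2}}$, that is, on the Exponential integral $h(x)=\exp(1/x)E_1(1/x)$ of~(\ref{hx}), which in the intermediate regime admits none of the closed-form collapses that made the two extremes tractable. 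I would attempt to replace this mean-slope of $g(\lambda)=\lambda\,h(\rho\lambda/2)$ by a parameter-independent envelope built from the monotonicity and convexity of $g$, but I expect this is precisely where a fully rigorous argument stalls, and where the numerical evidence of Fig.~\ref{fig5} must, for now, stand in for a proof.
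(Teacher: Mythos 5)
First, a point about the comparison you were asked for: the statement is posed in the paper as Conjecture~\ref{conj2}, and the paper contains \emph{no proof of it} --- its only support is the numerical study of Fig.~\ref{fig5} together with the observation that the candidate family~(\ref{proposed3})--(\ref{proposed4}) reproduces the known optimal structures in the special cases. So there is no paper proof to measure your sketch against, and your sketch, which explicitly declines to claim a complete argument, correctly reflects the status of the statement. Judged on its own terms, however, it has one genuine gap and one structural misreading.

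The gap is the ``collinearity'' step. Writing your stationarity condition as $\left[ p_1 {\bf \Sigma}_1 + p_2 {\bf \Sigma}_2 \right] {\bf w}_1 + \left[ {\rm cross} \right] {\bf w}_2 = \gamma_1 {\bf w}_1$, the assertion that the cross term is collinear with ${\bf w}_1$ is \emph{equivalent} to saying that $\left( \gamma_1 {\bf I} - p_1 {\bf \Sigma}_1 - p_2 {\bf \Sigma}_2 \right) {\bf w}_1$ is proportional to ${\bf w}_1$, i.e., that ${\bf w}_1$ is an eigenvector of $p_1 {\bf \Sigma}_1 + p_2 {\bf \Sigma}_2$ --- which is exactly the conclusion you are trying to reach. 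No argument is offered for it, and the analogy with the phase sweep of Appendix~\ref{app_gen} does not transfer: there the phases enter the objective only through $|\beta \gamma - \alpha \delta|$ and ${\sf arg}(\tau_3)$, and they are used to tighten an upper bound on the rate, not to annihilate cross terms in a gradient equation. (The sign claims $p_1 > 0$ and $p_2 < 0$ also require proof, though they are plausible from the monotonicity of $h(\cdot)$; your algebraic passage from eigenvectors of $p_1 {\bf \Sigma}_1 + p_2 {\bf \Sigma}_2$ with $p_2 < 0$ to the pencil $\big( \alpha {\bf \Sigma}_2 + {\bf I} \big)^{-1} \big( \gamma {\bf \Sigma}_1 + {\bf I} \big)$ is fine.) The misreading is in your final paragraph: you claim that even a rigorous first-order analysis would leave the ``hard part'' of certifying global optimality. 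For this particular statement that is backwards. ${\cal R}_{\sf weighted}$ is smooth on the compact, boundaryless manifold ${\cal G}(2,1) \times {\cal G}(2,1)$, so its global maximizer is automatically a critical point; and Conjecture~\ref{conj2} asserts only that the maximizer lies in the four-parameter family, with the parameters then fixed by the inner maximization~(\ref{eq3}). Hence if your first-order step were made rigorous --- every critical point (or merely the global maximizer) has the pencil form --- the conjecture would be proved \emph{in full}; no finite-$\rho$ analogue of the parameter-free upper bounds of Theorems~\ref{thm2} and~\ref{thm1} is needed. That machinery would be required only to identify $\alpha^{\star}, \beta^{\star}, \gamma^{\star}, \delta^{\star}$ in closed form, which the conjecture does not ask for. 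Where your diagnosis is on target is that the Exponential-integral structure of $h(\cdot)$ in Theorem~\ref{prop_basic_rate} obstructs both routes at intermediate $\snr$; that is consistent with the paper's own explanation of why it stops at a conjecture.
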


\subsection{Rank-Deficient Case}
Following up on Remark 3 in Sec.~\ref{sec4b}, we now consider the extreme
case where both\footnote{The case when only one of the ${\bf \Sigma}_i$ is
rank-deficient can be studied along analogous lines and no details are
provided.} ${\bf \Sigma}_1$ and ${\bf \Sigma}_2$ are rank-deficient in
more detail.
\begin{prp}
The ergodic information-theoretic rate achievable at user $i$ is
\begin{eqnarray}
E \left[ R_i \right] & = &
h \left( \frac{\rho}{2} \hsppp \lambda_1 \big( {\bf \Sigma}_i \big)
\hsppp \left( | {\bf u}_1 \big( {\bf \Sigma}_i \big)^H {\bf w}_i |^2
+ | {\bf u}_1 \big( {\bf \Sigma}_i \big)^H {\bf w}_j |^2 \right) \right)
- h \left( \frac{\rho}{2} \hsppp \lambda_1 \big( {\bf \Sigma}_i \big)
\hsppp | {\bf u}_1 \big( {\bf \Sigma}_i \big)^H {\bf w}_j |^2 \right),
\nonumber \\
& & {\hspace{4in}}
\hspp j \neq i, \hspp i = 1,2
\end{eqnarray}
where $h(\bullet)$ is as in~(\ref{hx}) and the eigen-decomposition of
${\bf \Sigma}_i$ is
\begin{eqnarray}
{\bf \Sigma}_i = \lambda_1 \big( {\bf \Sigma}_i \big) \cdot
{\bf u}_1 \big( {\bf \Sigma}_i \big) {\bf u}_1 \big(
{\bf \Sigma}_i \big)^H, \hspp i = 1,2.
\end{eqnarray}
\end{prp}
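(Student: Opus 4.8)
The plan is to specialize the general two-user rate formula of Theorem~\ref{prop_basic_rate} to the situation where both covariance matrices have unit rank. Since $M=2$, a rank-deficient positive-semidefinite ${\bf \Sigma}_i$ is necessarily rank one, so the eigen-decomposition stated in the proposition holds. Writing $\lambda \triangleq \lambda_1({\bf \Sigma}_i)$, ${\bf u}_1 \triangleq {\bf u}_1({\bf \Sigma}_i)$, $a \triangleq |{\bf u}_1^H {\bf w}_i|^2$ and $b \triangleq |{\bf u}_1^H {\bf w}_j|^2$, I would first evaluate the three scalars of Theorem~\ref{prop_basic_rate}: $A_i = {\bf w}_i^H {\bf \Sigma}_i {\bf w}_i = \lambda a$, $B_i = {\bf w}_j^H {\bf \Sigma}_i {\bf w}_j = \lambda b$, and $C_i = |{\bf w}_i^H {\bf \Sigma}_i {\bf w}_j| = \lambda \sqrt{a\,b}$, the last because ${\bf w}_i^H {\bf \Sigma}_i {\bf w}_j = \lambda\, \overline{({\bf u}_1^H {\bf w}_i)}\,({\bf u}_1^H {\bf w}_j)$.

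The key simplification is that the discriminant factorizes: $(A_i - B_i)^2 + 4 C_i^2 = \lambda^2 (a-b)^2 + 4\lambda^2 a b = \lambda^2 (a+b)^2$, so its square root equals $\lambda(a+b) = A_i + B_i$. Hence the two eigenvalues of ${\bf \Sigma}_i^{1/2}\big(\sum_k {\bf w}_k {\bf w}_k^H\big){\bf \Sigma}_i^{1/2}$ collapse to ${\bf \Lambda}_{i,1} = A_i + B_i = \lambda(a+b)$ and ${\bf \Lambda}_{i,2} = 0$, while $\widetilde{\bf \Lambda}_{i,1} = B_i = \lambda b$ as before. This is geometrically expected, since ${\bf \Sigma}_i^{1/2}$ has rank one and therefore so does the product, forcing the smaller eigenvalue to vanish.

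The only subtle point is that the first term of the rate formula in Theorem~\ref{prop_basic_rate} carries ${\bf \Lambda}_{i,1} - {\bf \Lambda}_{i,2}$ in the denominator and evaluates $h(\cdot)$ at $\rho {\bf \Lambda}_{i,2}/2 = 0$, whereas $h$ in~(\ref{hx}) is defined only on $(0,\infty)$. I would resolve this by a continuity argument: since $h(x) \le \log(1+x) \le x$ near the origin, the product ${\bf \Lambda}_{i,2}\, h(\rho {\bf \Lambda}_{i,2}/2)$ tends to $0$ as ${\bf \Lambda}_{i,2}\downarrow 0$, and the full ergodic-rate expression is continuous in the covariance entries; letting ${\bf \Lambda}_{i,2}\downarrow 0$ then sends the first fraction to $h(\rho {\bf \Lambda}_{i,1}/2)$. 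Subtracting $h(\rho \widetilde{\bf \Lambda}_{i,1}/2) = h(\rho \lambda b/2)$ yields exactly the claimed expression, with argument $\frac{\rho}{2}\lambda(a+b)$ in the first term and $\frac{\rho}{2}\lambda b$ in the second.

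I expect the cleanest and fully rigorous route, which avoids the degenerate denominator altogether, is to re-derive $E[I_{i,1}]$ directly rather than passing to a limit, and this is where the minor bookkeeping effort lies. With a single nonzero eigenvalue the weighted norm reduces, in the eigenbasis, to $\widehat{\bf h}_{\iid,i}^H {\bf \Lambda}_i \widehat{\bf h}_{\iid,i} = {\bf \Lambda}_{i,1}\,|\widehat{\bf h}_{\iid,i}(1)|^2$, so that $\|{\bf h}_{\iid,i}\|^2\, \widehat{\bf h}_{\iid,i}^H {\bf \Lambda}_i \widehat{\bf h}_{\iid,i} = {\bf \Lambda}_{i,1}\,|{\bf h}_{\iid,i}(1)|^2$ is ${\bf \Lambda}_{i,1}$ times a unit-mean exponential random variable, the change of basis being a fixed unitary that preserves the i.i.d.\ structure. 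Then $E[I_{i,1}] = \int_{0}^{\infty} \log\big(1 + \frac{\rho}{2}{\bf \Lambda}_{i,1} z\big)\, e^{-z}\, dz = h(\rho {\bf \Lambda}_{i,1}/2)$ via the identity $\int_0^\infty \log(1+cz)\, e^{-z}\, dz = e^{1/c} E_1(1/c)$, and the same computation applied to $I_{i,2}$, whose only nonzero eigenvalue is $\widetilde{\bf \Lambda}_{i,1} = B_i$, gives $h(\rho B_i/2)$. Taking the difference reproduces the stated result without any indeterminate form.
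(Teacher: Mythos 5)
Your proposal is correct, and the direct computation you settle on at the end is the right way to make it rigorous. It is worth comparing with the paper's own (much terser) proof, since the key ingredient differs. The paper argues exactly as you anticipate in spirit: Theorem~\ref{prop_basic_rate} cannot be cited verbatim because it presumes full-rank covariances, but Lemmas~\ref{lemma_density2} and~\ref{lemma_density34} (the uniform density of $\widehat{ {\bf h} }_{\iid, \hsppp i}^H {\bf \Lambda}_i \widehat{ {\bf h} }_{\iid, \hsppp i}$) remain valid when diagonal entries of ${\bf \Lambda}_i$ are zero, so the double-integral computation of $E[I_{i,\hsppp 1}]$ goes through with ${\bf \Lambda}_{i,\hsppp 2}=0$ — just as was already done implicitly for $E[I_{i,\hsppp 2}]$ inside Theorem~\ref{prop_basic_rate}. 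Your derivation replaces that ingredient with something more elementary: with a single nonzero eigenvalue, the product $\| {\bf h}_{\iid, \hsppp i} \|^2 \, \widehat{ {\bf h} }_{\iid, \hsppp i}^H {\bf \Lambda}_i \widehat{ {\bf h} }_{\iid, \hsppp i} = {\bf \Lambda}_{i, \hsppp 1} | {\bf h}_{\iid, \hsppp i}(1) |^2$ is ${\bf \Lambda}_{i,\hsppp 1}$ times a unit-mean exponential, so the rate collapses to the one-dimensional identity $\int_0^\infty \log(1+cz)\,e^{-z}\,dz = h(c)$ and no density lemma or double integral is needed. Your eigenvalue bookkeeping (observing $C_i^2 = A_i B_i$, hence ${\bf \Lambda}_{i,\hsppp 1} = A_i + B_i = \lambda(a+b)$ and ${\bf \Lambda}_{i,\hsppp 2}=0$) is a useful sanity check the paper omits, and it makes explicit why the stated formula is the ${\bf \Lambda}_{i,\hsppp 2}\downarrow 0$ specialization of Theorem~\ref{prop_basic_rate}. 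One caveat: your first (fallback) route, invoking continuity of the ergodic rate in the covariance entries to pass to the limit in Theorem~\ref{prop_basic_rate}, is asserted rather than proved — it would need a dominated-convergence argument for $E[R_i]$ along ${\bf \Sigma}_i + \epsilon {\bf I} \to {\bf \Sigma}_i$ — but since you yourself flag this and supply the direct computation as the rigorous path, there is no gap in the proposal as a whole.
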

{\vspace{0.05in}}
\begin{proof}
While Theorem~\ref{prop_basic_rate} (as stated) is explicitly dependent
on both 
${\bf \Sigma}_1$ and ${\bf \Sigma}_2$ being of full rank and is
hence not directly applicable in this extreme setting, much of the
analysis follows through. The key to the proof is that all the results in
Appendix~\ref{app_cdf} (Lemmas~\ref{lemma_density2} and~\ref{lemma_density34})
also hold when some of the diagonal entries of ${\bf \Lambda}_i$ are zero.
In fact, this fact is implicitly used to compute $E \left[ I_{i,2} \right]$
in Theorem~\ref{prop_basic_rate}.
\end{proof}

\subsection{Three-User Case: $M = 3$}
We now consider the task of generalizing Theorem~\ref{prop_basic_rate} to the
three-user ($M = 3$) case.
\begin{prp}
The ergodic information-theoretic rate achievable at user $i$ (where
$i = 1,2,3$) with linear beamforming in the three-user case is
\begin{align}
& E \left[R_{i} \right] =
E \left[  I_{i, \hsppp 1} \right] - E \left[ I_{i, \hsppp 2}  \right]
\nonumber
\\ & {\hspace{0.1in}}
= \frac{ {\bf \Lambda}_{i, \hsppp 1}^2 \cdot h \left(
\frac{\rho  {\bf \Lambda}_{i, \hsppp 1} } {3} \right) }
{ \left( {\bf \Lambda}_{i, \hsppp 1} - {\bf \Lambda}_{i, \hsppp 2}
\right) \left( {\bf \Lambda}_{i, \hsppp 1} -
{\bf \Lambda}_{i, \hsppp 3} \right) }
-
\frac{ {\bf \Lambda}_{i, \hsppp 2}^2
\cdot h \left( \frac{\rho  {\bf \Lambda}_{i, \hsppp 2} }{3} \right)
}
{ \left( {\bf \Lambda}_{i, \hsppp 1} - {\bf \Lambda}_{i, \hsppp 2}
\right) \left( {\bf \Lambda}_{i, \hsppp 2} -
{\bf \Lambda}_{i, \hsppp 3} \right) }
+ \frac{ {\bf \Lambda}_{i, \hsppp 3}^2 \cdot h \left(
\frac{\rho  {\bf \Lambda}_{i, \hsppp 3} }{3} \right) }
{ \left( {\bf \Lambda}_{i, \hsppp 1} - {\bf \Lambda}_{i, \hsppp 3}
\right) \left( {\bf \Lambda}_{i, \hsppp 2} -
{\bf \Lambda}_{i, \hsppp 3} \right) }
\nonumber \\
& {\hspace{0.1in}}
+
\frac{ \widetilde{ {\bf \Lambda}}_{i, \hsppp 1}^2
\cdot h \left( \frac {\rho  \widetilde{ {\bf \Lambda}}_{i, \hsppp 1} } {3}
\right) }
{ \left( \widetilde{{\bf \Lambda}}_{i, \hsppp 1} -
\widetilde{{\bf \Lambda}}_{i, \hsppp 2}
\right) \left( \widetilde{{\bf \Lambda}}_{i, \hsppp 1} -
\widetilde{{\bf \Lambda}}_{i, \hsppp 3} \right) }
-
\frac{ \widetilde{ {\bf \Lambda}}_{i, \hsppp 2}^2
\cdot h \left( \frac{\rho  \widetilde{{\bf \Lambda}} _{i, \hsppp 2} }
{3} \right) } { \left( \widetilde{ {\bf \Lambda} }_{i, \hsppp 1} -
\widetilde{ {\bf \Lambda}}_{i, \hsppp 2} \right)
\left( \widetilde{ {\bf \Lambda}}_{i, \hsppp 2} -
\widetilde{ {\bf \Lambda}}_{i, \hsppp 3} \right) }
+
\frac{ \widetilde{ {\bf \Lambda}}_{i, \hsppp 3}^2 \cdot
h \left( \frac {\rho  \widetilde{ {\bf \Lambda}}_{i, \hsppp 3} }
{3} \right)  }
{ \left( \widetilde{ {\bf \Lambda}}_{i, \hsppp 1} -
\widetilde{ {\bf \Lambda}}_{i, \hsppp 3}
\right) \left( \widetilde{ {\bf \Lambda}}_{i, \hsppp 2} -
\widetilde{ {\bf \Lambda}} _{i, \hsppp 3} \right) }
\label{three_user_formula}
\end{align}
where $h(\bullet)$ is as in~(\ref{hx}).
The eigenvalue matrices ${\bf \Lambda}_{i} = {\sf diag}\big( [
{\bf \Lambda}_{i, \hsppp 1}, \hsppp
{\bf \Lambda}_{i, \hsppp 2}, \hsppp
{\bf \Lambda}_{i, \hsppp 3}] \big)$ and
${\widetilde{ \bf \Lambda}}_{i} = {\sf diag}\big([
\widetilde{ {\bf \Lambda}} _{i, \hsppp 1}, \hsppp
\widetilde{ {\bf \Lambda}} _{i, \hsppp 2}, \hsppp
\widetilde{ {\bf \Lambda}}_{i, \hsppp 3}] \big)$ are defined as
in~(\ref{eqna7}) and~(\ref{eqna9}), and can be obtained in terms of the
beamforming vectors and the covariance matrices by solving the
associated cubic equations.
\end{prp}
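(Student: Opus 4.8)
The plan is to follow the template of the proof of Theorem~\ref{prop_basic_rate}, separating $E[R_i] = E[I_{i,\hsppp 1}] - E[I_{i,\hsppp 2}]$ and computing each term via the two-stage expectation over the magnitude $\bX = \|\bggi\|^2$ and the direction $\widehat{\bh}_{\iid,\hsppp i}$. For $M=3$ the only structural changes from the two-user case are that (i) the magnitude $\bX$ is now Gamma$(3,1)$-distributed with density $\frac{x^2 e^{-x}}{2}$ (from~(\ref{decomp1}) with $2M=6$ chi-squares), and (ii) the weighted norm $\widehat{\bh}_{\iid,\hsppp i}^H \, {\bf \Lambda}_i \, \widehat{\bh}_{\iid,\hsppp i} = \sum_{j=1}^3 {\bf \Lambda}_{i,\hsppp j}\,|\widehat{\bh}_{\iid,\hsppp i}(j)|^2$ is now a convex combination of three eigenvalues. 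Writing $E[I_{i,\hsppp 1}] = E_{\bX}\big[\int \log(1+\frac{\rho}{3}xy)\,p_i(y)\,dy\big]$ reduces everything to identifying the density $p_i(y)$ of this weighted norm, exactly as in the two-user derivation.

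The crux --- and the step I expect to be the main obstacle --- is the three-point analogue of the uniform density used for $M=2$. Since $\widehat{\bh}_{\iid,\hsppp i}$ is isotropic on $\mathcal{G}(3,1)$~\cite{arias_smith,hassibi1}, the vector of squared magnitudes $(|\widehat{\bh}_{\iid,\hsppp i}(1)|^2,|\widehat{\bh}_{\iid,\hsppp i}(2)|^2,|\widehat{\bh}_{\iid,\hsppp i}(3)|^2)$ is uniform on the two-dimensional probability simplex. Consequently $p_i(y)$ is the piecewise-linear ``tent'' supported on $[{\bf \Lambda}_{i,\hsppp 3},{\bf \Lambda}_{i,\hsppp 1}]$, equal to $\frac{2(y-{\bf \Lambda}_{i,\hsppp 3})}{({\bf \Lambda}_{i,\hsppp 1}-{\bf \Lambda}_{i,\hsppp 3})({\bf \Lambda}_{i,\hsppp 2}-{\bf \Lambda}_{i,\hsppp 3})}$ on $[{\bf \Lambda}_{i,\hsppp 3},{\bf \Lambda}_{i,\hsppp 2}]$ and to $\frac{2({\bf \Lambda}_{i,\hsppp 1}-y)}{({\bf \Lambda}_{i,\hsppp 1}-{\bf \Lambda}_{i,\hsppp 3})({\bf \Lambda}_{i,\hsppp 1}-{\bf \Lambda}_{i,\hsppp 2})}$ on $[{\bf \Lambda}_{i,\hsppp 2},{\bf \Lambda}_{i,\hsppp 1}]$; this is the content of Lemma~\ref{lemma_density34} in Appendix~\ref{app_cdf}. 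The genuine difficulty is in the degenerate cases: coincident eigenvalues (where the denominators vanish and the tent must be read as a confluent limit) and, for the interference term, the fact that $\widetilde{\bf V}_i \, \widetilde{{\bf \Lambda}}_i \, \widetilde{\bf V}_i^H$ has rank at most $M-1=2$, forcing $\widetilde{{\bf \Lambda}}_{i,\hsppp 3}=0$ (and possibly $\widetilde{{\bf \Lambda}}_{i,\hsppp 2}=0$). These are handled by continuity, using that $h(\cdot)$ is continuous with $h(0^+)=0$ (from the bounds in Prop.~\ref{prop_lowsnr}), so that the stated formula is the continuous extension of the generic one.

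With $p_i(y)$ in hand, the remaining work is the double integral. First integrate over $y$: since $p_i$ is piecewise linear, integration by parts against $\log(1+\frac{\rho}{3}xy)$ is elementary and produces terms of the form $(\text{const})\cdot\xi^2\log(1+\frac{\rho}{3}x\xi)$ evaluated at the nodes $\xi\in\{{\bf \Lambda}_{i,\hsppp 1},{\bf \Lambda}_{i,\hsppp 2},{\bf \Lambda}_{i,\hsppp 3}\}$, with coefficients given by the partial-fraction weights $\prod_{k\neq j}({\bf \Lambda}_{i,\hsppp j}-{\bf \Lambda}_{i,\hsppp k})^{-1}$. Then integrate over $x$ against $\frac{x^2 e^{-x}}{2}$ using the exponential-integral table formula~\cite{gradshteyn} already invoked for $M=2$ (or its once-differentiated variant to absorb the extra power of $x$); each node integral evaluates to $h(\frac{\rho\xi}{3})$ plus pieces that are rational/polynomial in $\xi$, and the latter cancel when combined with the partial-fraction weights. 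Collecting the node contributions produces the partial-fraction expression for $E[I_{i,\hsppp 1}]$ in the statement; the identical computation with the interference eigenvalues $\{\widetilde{{\bf \Lambda}}_{i,\hsppp j}\}$ gives $E[I_{i,\hsppp 2}]$, and $E[R_i]=E[I_{i,\hsppp 1}]-E[I_{i,\hsppp 2}]$ completes the proof.

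A useful organizing remark, which also serves as a cross-check, is that each block is precisely the second divided difference of $\phi(\lambda)=\lambda^2\,h(\frac{\rho\lambda}{3})$ at the node set $\{{\bf \Lambda}_{i,\hsppp 1},{\bf \Lambda}_{i,\hsppp 2},{\bf \Lambda}_{i,\hsppp 3}\}$ (resp.\ the $\widetilde{{\bf \Lambda}}$-nodes), just as the two-user expression in Theorem~\ref{prop_basic_rate} is the first divided difference of $\lambda\,h(\frac{\rho\lambda}{2})$. This is not accidental: the uniform-simplex density makes $E_Y[g(Y)]=(M-1)!\,[{\bf \Lambda}_{i,\hsppp 1},\ldots,{\bf \Lambda}_{i,\hsppp M}]\,\Psi$ with $\Psi^{(M-1)}=g$ (the Hermite--Genocchi representation of the divided difference), so pulling the node-independent divided difference outside the $x$-integral reduces the whole computation to identifying a single antiderivative. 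Recognizing this structure both confirms that the polynomial-modulo-$\deg\le M-2$ ambiguity above is harmless (an order-$(M-1)$ divided difference annihilates polynomials of degree $\le M-2$) and gives an alternative, essentially calculation-free route; the degenerate-eigenvalue limits then coincide with the well-defined confluent divided differences, which is the cleanest way to justify the rank-deficient interference case.
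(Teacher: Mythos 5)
Your proposal is correct and takes essentially the same route as the paper's proof: characterize the piecewise-linear (tent) density of the weighted norm (Lemma~\ref{lemma_density34}), reduce $E[I_{i,\hsppp 1}]$ and $E[I_{i,\hsppp 2}]$ to a double integral over the Gamma-distributed magnitude and the direction, and evaluate it with the exponential-integral table formulas of~\cite{gradshteyn}. Your two additions --- deriving the tent density from uniformity of the squared entries on the simplex (the paper states Lemma~\ref{lemma_density34} without proof) and the divided-difference$/$Hermite--Genocchi reformulation that also handles the confluent and rank-deficient ($\widetilde{ {\bf \Lambda}}_{i,\hsppp 3}=0$) cases --- are sound refinements but do not change the underlying approach.
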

\begin{proof}
The proof is tedious, but follows along the lines of
Theorem~\ref{prop_basic_rate}. The first step is in characterizing
${p}_i(y)$, which is done in Lemma~\ref{lemma_density34} of
Appendix~\ref{app_cdf}. We can then generalize~(\ref{basic_eqn1})
using~(\ref{eqn_density3}) as
\begin{eqnarray}
E \left[ I_{i, \hsppp 1} \right] & = & \frac{ {\sf I}_1 }
{ \left( {\bf \Lambda}_{i, \hsppp 1} - {\bf \Lambda}_{i, \hsppp 2} \right)
\left( {\bf \Lambda}_{i, \hsppp 1} - {\bf \Lambda}_{i, \hsppp 3} \right)}
+ \frac{ {\sf I}_2 }
{ \left( {\bf \Lambda}_{i, \hsppp 1} - {\bf \Lambda}_{i, \hsppp 3} \right)
\left( {\bf \Lambda}_{i, \hsppp 2} - {\bf \Lambda}_{i, \hsppp 3} \right)} \\
{\sf I}_1 & = & \int_{x = 0}^{\infty} x^2 e^{-x} \int_{y = 0}
^{  {\bf \Lambda}_{i, \hsppp 1} - {\bf \Lambda}_{i, \hsppp 2}  }
y \log \left( 1 + \frac{\rho}{3} {\bf \Lambda}_{i, \hsppp 1} x
- \frac{\rho}{3} x y  \right) dy \hsppp dx \\
{\sf I}_2 & = & \int_{x = 0}^{\infty} x^2 e^{-x} \int_{y = 0}
^{  {\bf \Lambda}_{i, \hsppp 2} - {\bf \Lambda}_{i, \hsppp 3}  }
y \log \left( 1 + \frac{\rho}{3} {\bf \Lambda}_{i, \hsppp 3} x
+ \frac{\rho}{3} x y  \right) dy \hsppp dx.
\end{eqnarray}
These integrals are cumbersome, but straightforward to compute
using~\cite[$4.337(2), 4.337(5)$, p.\ 572]{gradshteyn}. The result
is the expression in the statement of the proposition.
\end{proof}

\subsection{General $M$-User Case}
As can be seen from Appendix~\ref{app_cdf} (Lemma~\ref{lemma_density34}),
the expression for ${p}_{i}(y)$ becomes more complicated as $M$
increases. Without a recourse to ${p}_i(y)$, closed-form expressions
for the ergodic sum-rate of the linear beamforming scheme
can be obtained using a recent advance in~\cite{hammarwall1,hammarwall2} that
allows the computation of the density function of weighted-sum of standard
central chi-squared terms ({\em generalized} chi-squared random variables).
For example, if $ {\bf \Lambda}_i = {\sf diag} \big( [ {\bf \Lambda}_{i, \hsppp 1},
\hspp \cdots, \hspp {\bf \Lambda}_{i, \hsppp M}] \big)$ and
${\bf \Lambda}_{i,j}, \hsppp j = 1, \cdots, M$ are distinct\footnote{More
complicated expressions can be obtained in case $\{ {\bf \Lambda}_{i,j} \}$
are not distinct. These expressions can be derived in a straightforward
manner using the results in~\cite{hammarwall1}.}, we have
\begin{eqnarray}
E \left[ I_{i, \hsppp 1} \right] = \sum_{k=1}^M \prod_{j = 1, \hsppp
j \neq k}^M \frac{ {\bf \Lambda}_{i, k} }
{  {\bf \Lambda}_{i, k} - {\bf\Lambda}_{i, j} } \cdot h \left(
\frac{ \rho {\bf \Lambda}_{i, k} }{M} \right).
\label{gen_form}
\end{eqnarray}
For $E \left[  I_{i, \hsppp 2} \right]$, we replace $\big\{
{\bf \Lambda}_{i, k} \big\}$ with $\big\{ {\widetilde{ \bf \Lambda}}_{i, k}
\big\}$. It can be checked that these expressions
match with the expressions in this paper for the $M = 2$ and $M = 3$
settings. Nevertheless, it is important to note that the formulas
in~(\ref{three_user_formula}) and~(\ref{gen_form}) are in terms of the
eigenvalue matrices $\{ {\bf \Lambda}_i, {\widetilde{ \bf \Lambda}}_i,
\hsppp i = 1, \cdots, M \}$, which become harder (and impossible for
$M \geq 5$) to compute in closed-form as a function of the beamforming
vectors and the transmit covariance matrices as $M$ increases. Tractable
approximations
to the ergodic sum-rate and beamforming vector optimization based on such
approximations are necessary, which is the subject of ongoing work.

\ignore{
\section{stuff to do}
1. prove lemma on density function.
6. show that the special case subcases are not exactly subcases but are two
different ramifications of the general case depending on the eigenvalues
7. provide insight on vectors for sum rate optimization.
9. antenna asymptotics.
10. an algorithm for arbitrary snr + arbitrary weights vector optimization,
Manton's algorithm
14. numerical plots
15. intro/conclusion + references to existing work.
}

\section{Concluding Remarks}
\label{sec6}
This paper considered the design of statistical beamforming
vectors in a MISO broadcast setting to maximize the ergodic sum-rate.
The approach pursued here for the simplest non-trivial problem with
two-users is as follows: first, the beamforming vectors are fixed
and ergodic rate expressions are computed in closed-form in terms of
the covariance matrices of the links and the beamforming vectors. The
optimization of this non-convex function results in a generalized
eigenvector structure for the optimal beamforming vectors, the solution to
maximizing an appropriately-defined ${\sf SINR}$ metric for each
user. This structure generalizes the single-user setup where the
dominant eigen-modes of the transmit covariance matrix of the links are
excited. The main results of this paper are presented in 
Table~I for different ${\sf SNR}$ ($\rho$) assumptions where we use
${\bf u}_1(\bullet)$ and ${\bf u}_2(\bullet)$ to denote the dominant and
sub-dominant eigenvectors of the matrix under consideration.

Possible extensions of this work include unifying the special case of
Theorem~\ref{thm1} with the general case of Theorem~\ref{thm2}, and
proving Conjectures~\ref{thm_general} and~\ref{conj2}. Developing
intuition in the three-user case as well as tractable approximations
in the general $M$-user ($M > 2$) case critically depend on exploiting the
functional structure of the ergodic sum-rate expression. The generalized
eigenvector solution has been seen in other multi-user scenarios as well,
for example, the interference channel problem with two
antennas~\cite{vasanth_isit10_intf}. Generalizing the theme developed in the
broadcast setting to the interference channel setting, the Rayleigh case
to the Ricean case, and the perfect CSI case to the case where only
statistical information is available are all important tasks. 
\begin{center}
\label{table1}
\begin{tabular}{|c|c|c|}
\hline
\hline
\multicolumn{3}{|c|}{Table I: Structure of Optimal Beamforming Vectors
} \\
\hline
\hline
{\rm Objective Function:} & $\argmax \limits_{  {\bf w}_1, \hsppp
{\bf w}_2 } E \left[R_i \right], \hspp i = 1,2$
& $ \argmax \limits_{  {\bf w}_1, \hsppp
{\bf w}_2 }  E \left[ R_1 \right] + E \left[ R_2 \right]$ \\
\hline
\multirow{3}{*}{$\rho \rightarrow 0$}
& ${\bf w}_{i, \hsppp {\sf opt}} = {\bf u}_1 \left( {\bf \Sigma}_i \right)$
& ${\bf w}_{1, \hsppp {\sf opt}} = {\bf u}_1 \left( {\bf \Sigma}_1 \right)$
\\
& ${\bf w}_{j, \hsppp {\sf opt}} = {\sf any} \hspp {\sf vector} \hspp
{\sf on} \hspp {\cal G}(2,1),$
& ${\bf w}_{2, \hsppp {\sf opt}} = {\bf u}_1 \left( {\bf \Sigma}_2 \right)$
\\
& $j \neq i$
(See Prop.~\ref{prop1}) & (See Prop.~\ref{prop_snr_low}) \\
\hline
\multirow{4}{*}{$\rho$ \hsppp {\rm intermediate}}
&  & ${\bf w}_{1,\hsppp {\sf opt}} = {\bf u}_1 \left(
\left( \alpha^{\star}(\rho) {\bf \Sigma}_2 + {\bf I} \right)^{-1} {\bf \Sigma}_1
\right)$ \\
& -- & ${\bf w}_{2,\hsppp {\sf opt}} = {\bf u}_1 \left(
\left( \beta^{\star}(\rho) {\bf \Sigma}_1 + {\bf I} \right)^{-1} {\bf \Sigma}_2
\right)$ \\
& & $\{ \alpha^{\star}(\rho), \beta^{\star}(\rho) \} \geq 0$, chosen \\
& & appropriately (See Conjecture~\ref{thm_general}) \\
\hline
\multirow{3}{*}{$\rho \rightarrow \infty$}
& ${\bf w}_{i, \hsppp {\sf opt}} = {\bf u}_1 \left( {\bf \Sigma}_i \right)$
& ${\bf w}_{1, \hsppp {\sf opt}} = {\bf u}_1 \left( {\bf \Sigma}_2^{-1}
{\bf \Sigma}_1 \right)$
\\
& ${\bf w}_{j, \hsppp {\sf opt}} =
{\bf u}_2 \left( {\bf \Sigma}_i \right),$
& ${\bf w}_{2, \hsppp {\sf opt}} = {\bf u}_1 \left( {\bf \Sigma}_1^{-1}
{\bf \Sigma}_2 \right)$
\\
& $j \neq i$ (See
Prop.~\ref{prop1}) &  (See Theorems~\ref{thm2} and~\ref{thm1})
\\
\hline \hline
\end{tabular}
\end{center}

\appendix
\subsection{Density Function of Weighted-Norm of Isotropically Distributed
Unit-Norm Vectors}
\label{app_cdf}
Towards computing $E[ I_{i,\hsppp 1}]$, we generalize the technique expounded
in~\cite{mukkavilli} where the surface area (that is required) to be computed
is treated as a
differential element of a corresponding solid volume (at a specific radius
value), and the volume of the necessary solid object is calculated using tools
from higher-dimensional integration (geometry). In this direction, we have
\begin{eqnarray}
{p}_i(x) \hsppp dx & \triangleq &
{\sf P} \left( \widehat{ {\bf h} }_{\iid, \hsppp i}^H
\hsppp {\bf \Lambda}_i \hsppp \widehat{ {\bf h} }_{\iid, \hsppp i} \in
[x, x + dx ] \right) \\
{p}_i(x) & = &  \frac{\partial}{  \partial x} \hsppp
{\sf P} \left( \widehat{ {\bf h} }_{\iid, \hsppp i}^H
\hsppp {\bf \Lambda}_i \hsppp \widehat{ {\bf h} }_{\iid, \hsppp i}
\leq x \right)
\end{eqnarray}
with
\begin{eqnarray}
{\sf P} \left( \widehat{ {\bf h} }_{\iid, \hsppp i}^H
\hsppp {\bf \Lambda}_i \hsppp \widehat{ {\bf h} }_{\iid, \hsppp i}
\leq x \right) & = & 1 -
\frac{ {\sf Area} \left(x, \hsppp 1 \right) }
{ {\sf Area} \left( 1 \right)  }
\end{eqnarray}
where
\begin{align}
& {\sf Area} \left(x, \hsppp y  \right) \triangleq
{\sf Area} \left( \widehat{ {\bf h} }_{\iid, \hsppp i}^H \hsppp
{\bf \Lambda}_i \hsppp \widehat{ {\bf h} }_{\iid, \hsppp i}
\geq x, \hsppp \| \widehat{ {\bf h} }_{\iid, \hsppp i} \|^2 = y \right)
\hsppp {\rm and}
\\ & {\hspace{0.8in}} {\sf Area} \left( y \right)  \triangleq
{\sf Area} \left( \| \widehat{ {\bf h} }_{\iid, \hsppp i} \|^2 = y \right)
\end{align}
denote the area of a (unit radius) spherical cap carved out by the
ellipsoid
\begin{eqnarray}
\left\{ \widehat{ {\bf h} }_{\iid, \hsppp i} :
\widehat{ {\bf h} }_{\iid, \hsppp i}^H \hsppp {\bf \Lambda}_i
\hsppp \widehat{ {\bf h} }_{\iid, \hsppp i} = x \right\}
\end{eqnarray}
and the area of a (unit radius) complex sphere, respectively. The
volume of the objects desired in the computation of ${p}_i(x)$ are
\begin{eqnarray}
{\sf Vol} \left( x, r^2 \right) & \triangleq & {\sf Vol} \left(
\widehat{ {\bf h} }_{\iid, \hsppp i}^H
\hsppp {\bf \Lambda}_i \hsppp \widehat{ {\bf h} }_{\iid, \hsppp i}
\geq x, \hspp
\| \widehat{ {\bf h} }_{\iid, \hsppp i} \|^2 \leq r^2
\right) \\
& = & \int_{y = 0}^{r^2} {\sf Area} \left(x, y \right) dy,
\\ {\sf Vol} (r^2) & \triangleq &
{\sf Vol} \left( \| \widehat{ {\bf h} }_{\iid, \hsppp i} \|^2 \leq r^2
\right)
= \int_{x = 0}^{r^2} {\sf Area}(x) dx. 
\end{eqnarray}
Thus, we have
\begin{eqnarray}
{\sf Area} \left( x,1 \right) & = &  \frac{ \partial} { \partial r^2}
{\sf Vol} \left( x, r^2  \right) \Big| _{r = 1} ,
\\ {\sf Area} \left( x \right) & = & \frac{ \partial}{\partial r^2}
{\sf Vol} (r^2) \Big|_{r = 1} \hspp {\rm and} \hspp {\rm hence,}
\\ {p}_i(x) & = & - \frac{ \frac{\partial^2} {\partial x r^2}
{\sf Vol} \left( x, r^2\right) \Big| _{r = 1} }
{ \frac{\partial}{\partial r^2} {\sf Vol} \left( r^2 \right)
\Big|_{r = 1} }. 
\label{eqn4}
\end{eqnarray}

It is important to realize that computing ${\sf Vol} \left( x, r^2\right)$
is {\em non-trivial} even in the simplest case of $M = 2$. This is because every
additional dimension to the complex ellipsoid corresponds to addition of two
real dimensions, thus rendering a geometric visualization impossible. For
example, with $M = 2$, we have the intersection of
two four-dimensional real objects. Nevertheless, the following lemma
captures the complete structure of ${p}_i(x)$ when $M = 2$.
\begin{lem}
\label{lemma_density2}
If $M = 2$, the random variable
$\widehat{ {\bf h} }_{\iid, \hsppp i}^H \hsppp {\bf \Lambda}_i \hsppp
\widehat{ {\bf h} }_{\iid, \hsppp i}$ is uniformly distributed in the
interval $[{\bf \Lambda}_{i, \hsppp 2}, \hsppp {\bf \Lambda}_{i,\hsppp 1}]$.
\end{lem}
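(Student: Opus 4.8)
The plan is to bypass the volume/area machinery assembled above and instead exploit the fact that, for $M=2$, the quadratic form reduces to an \emph{affine} function of a single real coordinate, whose law is elementary. Throughout, write $\widehat{\mathbf h} \equiv \widehat{\mathbf h}_{\iid, \hsppp i}$ for brevity.

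First I would write the isotropic unit-norm vector in coordinates as $\widehat{\mathbf h} = \big[\widehat{\mathbf h}(1), \widehat{\mathbf h}(2)\big]^T$ subject to $|\widehat{\mathbf h}(1)|^2 + |\widehat{\mathbf h}(2)|^2 = 1$. Since ${\bf \Lambda}_i = {\sf diag}\big([{\bf \Lambda}_{i, \hsppp 1}, {\bf \Lambda}_{i, \hsppp 2}]\big)$ is already diagonal, the quadratic form collapses to
\[
\widehat{\mathbf h}^H \, {\bf \Lambda}_i \, \widehat{\mathbf h}
= {\bf \Lambda}_{i, \hsppp 1}\,|\widehat{\mathbf h}(1)|^2 + {\bf \Lambda}_{i, \hsppp 2}\,|\widehat{\mathbf h}(2)|^2
= {\bf \Lambda}_{i, \hsppp 2} + \big({\bf \Lambda}_{i, \hsppp 1} - {\bf \Lambda}_{i, \hsppp 2}\big)\,|\widehat{\mathbf h}(1)|^2 .
\]
This is an increasing affine map of the scalar $t \triangleq |\widehat{\mathbf h}(1)|^2 \in [0,1]$, carrying $[0,1]$ onto $[{\bf \Lambda}_{i, \hsppp 2}, {\bf \Lambda}_{i, \hsppp 1}]$. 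Because an affine image of a uniform random variable is again uniform (on the image interval), the entire lemma reduces to showing that $t$ is uniform on $[0,1]$.

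For that remaining step I would use the standard Gaussian construction of the isotropic direction: represent $\widehat{\mathbf h} = \mathbf g / \|\mathbf g\|$ with $\mathbf g = [g_1, g_2]^T$ having i.i.d.\ $\mathcal{CN}(0,1)$ entries, so that $t = |g_1|^2 / \big(|g_1|^2 + |g_2|^2\big)$. Here $|g_1|^2$ and $|g_2|^2$ are i.i.d.\ unit-mean exponentials (each a $\tfrac12 \chi_2^2$ variable, cf.~(\ref{decomp1})), and the ratio $X/(X+Y)$ of two i.i.d.\ exponentials is $\mathrm{Beta}(1,1)$, i.e.\ uniform on $[0,1]$. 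Composing this with the affine map above yields the claimed uniform law on $[{\bf \Lambda}_{i, \hsppp 2}, {\bf \Lambda}_{i, \hsppp 1}]$, and hence the constant density $1/({\bf \Lambda}_{i, \hsppp 1} - {\bf \Lambda}_{i, \hsppp 2})$ used in Theorem~\ref{prop_basic_rate}.

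I do not expect a genuine obstacle on this route; the only point requiring care is the passage from the isotropic description on ${\cal G}(2,1)$ to the Gaussian-ratio representation, which is standard and follows from~\cite{hassibi1}. If one instead insists on the volume-based computation set up above, the real difficulty is evaluating ${\sf Vol}(x, r^2)$ --- the measure of the intersection of the ellipsoidal region with a ball in four real dimensions --- before differentiating via~(\ref{eqn4}); the affine reduction is precisely what makes that integral collapse to the elementary $\mathrm{Beta}(1,1)$ fact and explains why the resulting density is exactly flat.
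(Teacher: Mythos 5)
Your proof is correct, and it takes a genuinely different route from the paper's. The paper proves Lemma~\ref{lemma_density2} with the volume machinery set up at the start of Appendix~\ref{app_cdf} (generalizing~\cite{mukkavilli}): it parameterizes $\widehat{ {\bf h} }_{\iid, \hsppp i}(k) = r_k e^{j \phi_k}$, computes ${\sf Vol}(x, r^2)$ --- the four-real-dimensional volume of the intersection of the ellipsoid-exterior with the ball of radius $r$ --- in closed form as a piecewise quadratic function of $r^2$, and then extracts the density by the double differentiation in~(\ref{eqn4}). You instead note that on the unit sphere the quadratic form is the increasing affine image ${\bf \Lambda}_{i, \hsppp 2} + ({\bf \Lambda}_{i, \hsppp 1} - {\bf \Lambda}_{i, \hsppp 2}) \hsppp t$ of $t = |\widehat{ {\bf h} }_{\iid, \hsppp i}(1)|^2$, and that $t = |g_1|^2/(|g_1|^2 + |g_2|^2)$ is a ratio built from i.i.d.\ unit-mean exponentials, hence ${\rm Beta}(1,1)$, i.e.\ uniform on $[0,1]$; this Gaussian-ratio representation is exactly the one the paper itself uses in~(\ref{decomp1}), so nothing extraneous is assumed. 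Both arguments implicitly require ${\bf \Lambda}_{i, \hsppp 1} > {\bf \Lambda}_{i, \hsppp 2}$ (otherwise the law is a point mass, and the paper's density formula degenerates as well), so neither is more restrictive than the other. What your route buys is brevity and transparency: the flatness of the density is explained by a one-line Beta fact rather than emerging from an integral that happens to be quadratic in $r^2$. It also scales gracefully: since $\big(|\widehat{ {\bf h} }(1)|^2, \ldots, |\widehat{ {\bf h} }(M)|^2\big)$ is Dirichlet$(1,\ldots,1)$, i.e.\ uniform on the probability simplex, the densities stated without proof in Lemma~\ref{lemma_density34} for $M = 3,4$ follow as normalized volumes of simplex cross-sections, reproducing the piecewise-linear and piecewise-quadratic forms there. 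What the paper's approach buys is a self-contained geometric computation whose intermediate object ${\sf Vol}(x,r^2)$ is the quantity the authors chose to generalize for larger $M$, but as a proof of the $M=2$ lemma your argument is complete and considerably more economical.
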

\begin{proof}
First, note that it follows
from~\cite[Lemma 2]{mukkavilli} that
\begin{eqnarray}
{\sf Vol}(r^2) = \frac{ \pi^M r^{2M}  }{ M! }.
\end{eqnarray}
For computing ${\sf Vol} \left(x, r^2 \right)$, we follow the same variable
transformation as in~\cite{mukkavilli}. We set $\widehat{ {\bf h} }
_{ \iid, \hsppp i }(k) = r_k \exp(j \phi_k)$ for $k = 1, 2$. The ellipsoid
is contained completely in the sphere of radius $r$ if $r$ is such that
$r \geq \sqrt{ \frac{x}{ {\bf \Lambda}_{i,\hsppp 2} } }$, whereas the
sphere is contained completely in the ellipsoid if
$r \leq \sqrt{ \frac{x}{ {\bf \Lambda}_{i,\hsppp 1} } }$. In the
intermediate regime for $r$, a non-trivial intersection between the two objects
is observed and one can compute the volume by performing a two-dimensional
integration as follows:
\begin{eqnarray}
{\sf Vol} \left( x, r^2 \right) & = &
\iint_{{\cal A} } r_1 r_2 \phi_1 \phi_2 dr_1 dr_2 d\phi_1 d\phi_2
\\ & = & \left( 2 \pi \right)^2 \cdot
\iint_{ {\cal B} } r_1 dr_1 r_2 dr_2 
\\ & = & \left( 2 \pi \right)^2 \cdot
\int_0^{r^{\star} } r_2 dr_2 \int_{L}^U r_1 dr_1 
\end{eqnarray}
where
\begin{eqnarray}
{\cal A} & = & \Big\{ r_1, r_2 \hspp : \hspp r_1^2 {\bf \Lambda}_{i,\hsppp 1}
+ r_2^2 {\bf \Lambda}_{i, \hsppp 2} \geq x, \hspp r_1^2 + r_2^2 \leq r^2
\Big \} \nonumber \\
& & \hspp {\rm and} \hspp \Big \{ \phi_1, \phi_2 \hspp : \hspp
[0, 2 \pi)  \Big \},
\\ {\cal B} & = &
\Big \{ r_1, r_2 \hspp : \hspp r_1^2 {\bf \Lambda}_{i,\hsppp 1}
+ r_2^2 {\bf \Lambda}_{i, \hsppp 2} \geq x, \hspp r_1^2 + r_2^2 \leq r^2
\Big \}, 
\\ L & = & \sqrt{ \frac{ x - r_2^2 \hsppp {\bf \Lambda}_{i, \hsppp 2} }
{ {\bf \Lambda}_{i, \hsppp 1} } }, {\hspace{0.2in}}
U = \sqrt{ r^2 - r_2^2  },
{\hspace{0.2in}} r^{\star} =
\frac{ r^2\hsppp {\bf \Lambda}_{i, \hsppp 1} - x }
{  {\bf \Lambda}_{i,\hsppp 1} - {\bf \Lambda}_{i, \hsppp 2} }.
\end{eqnarray}
Trivial computation establishes the following:
\begin{eqnarray}
{\sf Vol} \left(x, r^2  \right) = \left \{
\begin{array}{cl}
0, & r \leq \sqrt{ \frac{x  }{ {\bf \Lambda}_{i,\hsppp 1} } }
\\
\frac{\pi^2}{2} \cdot \frac{ \left( r^2 \hsppp
{\bf \Lambda}_{i,\hsppp 1} - x \right)^2 }
{  {\bf \Lambda}_{i, \hsppp 1} \cdot \left( {\bf \Lambda}_{i, \hsppp 1} -
{\bf \Lambda}_{i, \hsppp 2}  \right) }, &
\sqrt{ \frac{ x}{ {\bf \Lambda}_{i, \hsppp 1} } } \leq r \leq
\sqrt{ \frac{ x}{ {\bf \Lambda}_{i, \hsppp 2} } } \\
\frac{ \pi^2}{2} \cdot \left( r^4 - \frac{x^2}{  {\bf \Lambda}_{i, \hsppp 1}
\hsppp {\bf \Lambda}_{i, \hsppp 2} }  \right), &
r \geq \sqrt{ \frac{x  }{ {\bf \Lambda}_{i,\hsppp 2} } }.
\end{array}
\right. 
\end{eqnarray}
Another trivial computation using~(\ref{eqn4}) results in
\begin{eqnarray}
{p}_i(x) = \frac{ 1}{ {\bf \Lambda}_{i, \hsppp 1} -
{\bf \Lambda}_{i, \hsppp 2} }. 
\end{eqnarray}
That is, $\widehat{ {\bf h} }_{\iid, \hsppp i}^H \hsppp
{\bf \Lambda}_i \hsppp \widehat{ {\bf h} }_{\iid, \hsppp i}$ is
uniformly distributed in its range.
\end{proof}
The structure of ${p}_i(x)$ gets more complicated as $M$ increases.
We now provide its structure in the $M = 3$ and $M = 4$ cases without proof.
\begin{lem}
\label{lemma_density34}
With $M= 3$, the density function ${p}_i(x)$ is of the form:
\begin{eqnarray}
{p}_i(x) =
\left \{
\begin{array}{cl}
0, & x \leq {\bf \Lambda}_{i, \hsppp 3} \\
\frac{2 \hsppp \left(x - {\bf \Lambda}_{i,\hsppp 3} \right) }
{ \left( {\bf \Lambda}_{i,\hsppp 1} - {\bf \Lambda}_{i, \hsppp 3} \right)
\hsppp \left(  {\bf \Lambda}_{i,\hsppp 2} - {\bf \Lambda}_{i, \hsppp 3}
\right) },
& {\bf \Lambda}_{i, \hsppp 3} \leq x \leq {\bf \Lambda}_{i, \hsppp 2}
\\ \frac{2 \hsppp \left( {\bf \Lambda}_{i,\hsppp 1} - x \right) }
{ \left( {\bf \Lambda}_{i,\hsppp 1} - {\bf \Lambda}_{i, \hsppp 2} \right)
\hsppp \left(  {\bf \Lambda}_{i,\hsppp 1} - {\bf \Lambda}_{i, \hsppp 3}
\right) },
& {\bf \Lambda}_{i, \hsppp 2} \leq x \leq {\bf \Lambda}_{i, \hsppp 1}
\\
0, & x \geq {\bf \Lambda}_{i, \hsppp 1} .
\end{array}
\right. 
\label{eqn_density3}
\end{eqnarray}
With $M= 4$, the density function ${p}_i(x)$ takes the form:
\begin{eqnarray}
{p}_i(x) = \left \{
\begin{array}{cl}
0, & x \leq {\bf \Lambda}_{i, \hsppp 4} \\
\frac{3 \hsppp \left(x - {\bf \Lambda}_{i,\hsppp 4} \right)^2 }
{ \left( {\bf \Lambda}_{i,\hsppp 1} - {\bf \Lambda}_{i, \hsppp 4} \right)
\hsppp
\left(  {\bf \Lambda}_{i,\hsppp 2} - {\bf \Lambda}_{i, \hsppp 4} \right)
\hsppp
\left(  {\bf \Lambda}_{i,\hsppp 3} - {\bf \Lambda}_{i, \hsppp 4} \right)
},
& {\bf \Lambda}_{i, \hsppp 4} \leq x \leq {\bf \Lambda}_{i, \hsppp 3}
\\ 
\frac{ 3 }
{ \left( {\bf \Lambda}_{i,\hsppp 1} - {\bf \Lambda}_{i, \hsppp 3} \right)
\hsppp
\left(  {\bf \Lambda}_{i,\hsppp 2} - {\bf \Lambda}_{i, \hsppp 4}
\right) } \cdot {\cal L}_0, 
& {\bf \Lambda}_{i, \hsppp 3} \leq x \leq {\bf \Lambda}_{i, \hsppp 2}
\\ 
\frac{3 \hsppp \left( {\bf \Lambda}_{i,\hsppp 1} - x \right)^2 }
{ \left( {\bf \Lambda}_{i,\hsppp 1} - {\bf \Lambda}_{i, \hsppp 2} \right)
\hsppp
\left(  {\bf \Lambda}_{i,\hsppp 1} - {\bf \Lambda}_{i, \hsppp 3} \right)
\hsppp
\left(  {\bf \Lambda}_{i,\hsppp 1} - {\bf \Lambda}_{i, \hsppp 4} \right)},
& {\bf \Lambda}_{i, \hsppp 2} \leq x \leq {\bf \Lambda}_{i, \hsppp 1}
\\
0, & x \geq {\bf \Lambda}_{i, \hsppp 1}
\end{array}
\right. 
\label{eqn_density4}
\end{eqnarray}
where
\begin{eqnarray}
{\cal L}_0 & = &
\frac{ \left( x - {\bf \Lambda}_{i, \hsppp 3} \right)
\left( {\bf \Lambda}_{i, \hsppp 2} - x \right) }
{ {\bf \Lambda}_{i, \hsppp 2} - {\bf \Lambda}_{i, \hsppp 3} }
+
\frac{ \left( x - {\bf \Lambda}_{i, \hsppp 4} \right)
\left( {\bf \Lambda}_{i, \hsppp 1} - x \right) }
{ {\bf \Lambda}_{i, \hsppp 1} - {\bf \Lambda}_{i, \hsppp 4} }.
\end{eqnarray}
\endproof
\end{lem}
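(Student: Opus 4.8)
The plan is to prove both formulas with the same volume-ratio device already used for $M=2$ in Lemma~\ref{lemma_density2}: compute ${\sf Vol}(x, r^2)$ explicitly and then apply~(\ref{eqn4}). The enabling observation is that, after writing $\widehat{ {\bf h} }_{\iid, \hsppp i}(k) = r_k \exp(j \phi_k)$ and integrating out the $M$ phases (each contributing a factor $2\pi$), the only relevant variables are the squared magnitudes $t_k \triangleq r_k^2$, in terms of which \emph{both} defining constraints are linear. Indeed the cap condition reads $\sum_{k} {\bf \Lambda}_{i, \hsppp k}\, t_k \geq x$ and the ball condition reads $\sum_k t_k \leq r^2$. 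Hence, up to the constant $(2\pi)^M$ and the Jacobians $r_k\, dr_k = \tfrac{1}{2}\, dt_k$, the quantity ${\sf Vol}(x, r^2)$ equals the Euclidean volume of the polytope obtained by intersecting the scaled corner simplex $\{ t_k \geq 0, \ \sum_k t_k \leq r^2 \}$ with the half-space $\{ \sum_k {\bf \Lambda}_{i, \hsppp k}\, t_k \geq x \}$. This reduces an apparently $2M$-real-dimensional geometric problem to the elementary computation of a polytope volume.

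I would then carry out this volume computation region by region in $x$, the combinatorial type of the intersection changing precisely as $x$ crosses each eigenvalue ${\bf \Lambda}_{i, \hsppp k}$ (equivalently, each threshold $r^2\, {\bf \Lambda}_{i, \hsppp k}$ before the limit $r \to 1$). In the top interval ${\bf \Lambda}_{i, \hsppp 2} \leq x \leq {\bf \Lambda}_{i, \hsppp 1}$ the feasible region is a single small simplex hanging off the dominant vertex, whose volume is proportional to $\big( r^2\, {\bf \Lambda}_{i, \hsppp 1} - x \big)^{M-1}$ divided by $\prod_{j \neq 1} ( {\bf \Lambda}_{i, \hsppp 1} - {\bf \Lambda}_{i, \hsppp j})$; the $\prod_{j \neq k}$ factors are exactly the determinants produced when the half-space is sheared onto the coordinate axes. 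Applying~(\ref{eqn4}) --- one derivative in $x$ and one in $r^2$ at $r=1$ --- drops the degree by one and reproduces the $\big( {\bf \Lambda}_{i, \hsppp 1} - x \big)^{M-2}$ numerators of the lemma. The bottom interval is handled symmetrically through the complementary volume, giving the $\big( x - {\bf \Lambda}_{i, \hsppp M} \big)^{M-2}$ branch; for $M = 3$ these two branches already exhaust $( {\bf \Lambda}_{i, \hsppp 3}, {\bf \Lambda}_{i, \hsppp 1})$ and the proof is complete.

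The main obstacle is the intermediate region, which first appears at $M = 4$ on ${\bf \Lambda}_{i, \hsppp 3} \leq x \leq {\bf \Lambda}_{i, \hsppp 2}$. Here the half-space separates two of the simplex vertices from the other two, so the cross-section is a frustum rather than a simplex, and its volume is most cleanly obtained as the small top simplex minus the two corner simplices that have been cut off (an inclusion--exclusion over the vertices on the $x$-side). Differentiating this expression is where the two-term structure ${\cal L}_0$ originates, each term corresponding to one of the cut-off corners. I expect the algebra here to be the delicate step: one must verify that the inclusion--exclusion produces exactly the stated ${\cal L}_0$ with denominator $( {\bf \Lambda}_{i, \hsppp 1} - {\bf \Lambda}_{i, \hsppp 3})( {\bf \Lambda}_{i, \hsppp 2} - {\bf \Lambda}_{i, \hsppp 4})$, which is a partial-fraction identity rather than a new geometric input.

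As an independent cross-check of all the case formulas, I would note that the reduction above identifies the law of $\widehat{ {\bf h} }_{\iid, \hsppp i}^H {\bf \Lambda}_i\, \widehat{ {\bf h} }_{\iid, \hsppp i}$ with that of $\sum_k {\bf \Lambda}_{i, \hsppp k}\, T_k$, where $(T_1, \dots, T_M)$ is uniform on the standard simplex (the squared magnitudes of an isotropic complex unit vector, being normalized i.i.d.\ exponentials, are distributed uniformly on that simplex). For distinct eigenvalues this has the classical divided-difference density
\begin{eqnarray}
{p}_i(x) = (M-1) \sum_{k=1}^{M} \frac{ \big( {\bf \Lambda}_{i, \hsppp k} - x \big)_{+}^{\, M-2} }{ \prod_{j \neq k} \big( {\bf \Lambda}_{i, \hsppp k} - {\bf \Lambda}_{i, \hsppp j} \big) },
\end{eqnarray}
which recovers Lemma~\ref{lemma_density2} at $M = 2$ and, after discarding the vanishing truncated-power terms on each subinterval and simplifying, matches~(\ref{eqn_density3}) and~(\ref{eqn_density4}) term by term.
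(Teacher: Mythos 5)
Your proposal is correct, and the first thing to note is that the paper itself states Lemma~\ref{lemma_density34} explicitly \emph{without proof}; what it supplies is the $M=2$ template (Lemma~\ref{lemma_density2}, following~\cite{mukkavilli}) and, later in Section~\ref{sec5}, a pointer to~\cite{hammarwall1,hammarwall2} for densities of weighted sums of chi-squared terms. Your argument completes both threads. The substitution $t_k = r_k^2$, which renders the cap constraint $\sum_k {\bf \Lambda}_{i,\hsppp k} t_k \geq x$ and the ball constraint $\sum_k t_k \leq r^2$ simultaneously linear, is precisely what makes the paper's $M=2$ volume computation generalize, and the region-by-region polytope bookkeeping you describe (single corner simplex in the extreme intervals, a frustum handled by inclusion--exclusion in the middle interval of $M=4$) is the right combinatorics. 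Moreover, your closing ``cross-check'' is stronger than a cross-check: since the squared magnitudes of an isotropically distributed unit vector are the normalized i.i.d.\ exponentials, hence uniform on the probability simplex, the classical divided-difference density $p_i(x) = (M-1)\sum_k ({\bf \Lambda}_{i,\hsppp k}-x)_+^{M-2}\big/\prod_{j\neq k}({\bf \Lambda}_{i,\hsppp k}-{\bf \Lambda}_{i,\hsppp j})$ is by itself a complete and rigorous proof of both~(\ref{eqn_density3}) and~(\ref{eqn_density4}); the middle branch of~(\ref{eqn_density4}) follows from it via the quadratic partial-fraction identity you anticipate (both sides are quadratics in $x$, so agreement at three points suffices), and the bottom branches follow from the standard fact that the untruncated divided difference of a degree-$(M-2)$ polynomial over $M$ distinct nodes vanishes.

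One slip you should fix in the polytope route: the corner-simplex volume in the top interval is proportional to $\big(r^2 {\bf \Lambda}_{i,\hsppp 1} - x\big)^{M}$, not $\big(r^2 {\bf \Lambda}_{i,\hsppp 1} - x\big)^{M-1}$ --- compare the paper's own $M=2$ computation, where the corresponding branch of ${\sf Vol}(x,r^2)$ is quadratic in $\big(r^2{\bf \Lambda}_{i,\hsppp 1}-x\big)$ --- and the two derivatives in~(\ref{eqn4}) then lower the degree by two, not by one. Your two off-by-one statements cancel, so the final numerators $\big({\bf \Lambda}_{i,\hsppp 1}-x\big)^{M-2}$ you claim are correct, but as written the intermediate assertions are each individually false and should be restated before this step can be called a proof.
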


\begin{figure}[htb!]
\centering
\begin{tabular}{c}
\includegraphics[height=3in,width=3.8in]{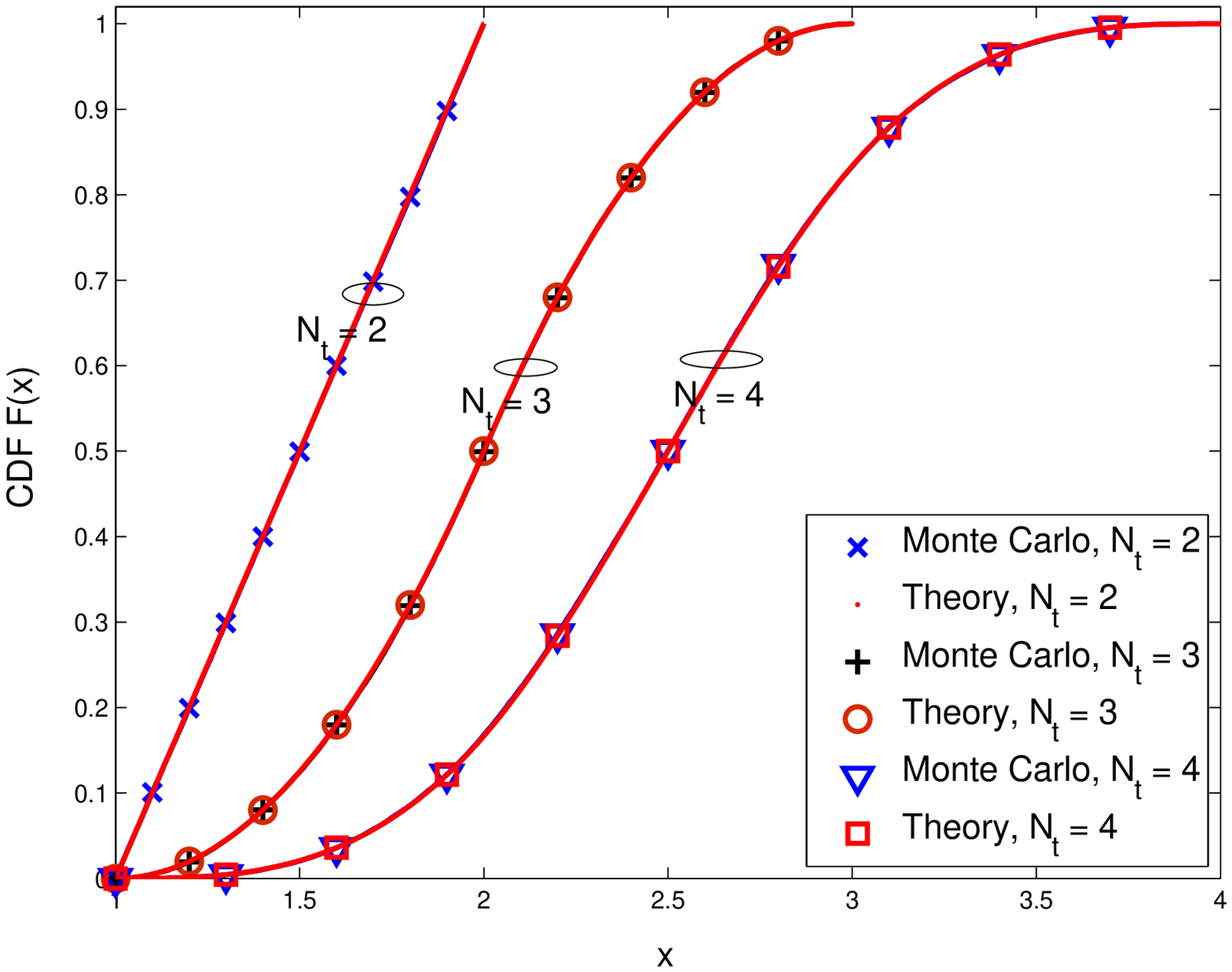}
\end{tabular}
\caption{CDF of weighted-norm of isotropically distributed beamforming
vectors.
\label{fig1}}
\end{figure}

Fig.~\ref{fig1} illustrates the trends of the cumulative distribution
function (CDF) by plotting the fit between the theoretical expressions
in Lemmas~\ref{lemma_density2} and~\ref{lemma_density34}, and the CDF
estimated by Monte Carlo methods. The cases considered are: a)
${\bf \Lambda}_i = {\sf diag}([2\hspp \hsppp 1])$ for $M = 2$,
b) ${\bf \Lambda}_i = {\sf diag}([3 \hsppp \hspp 2\hsppp \hspp 1])$
for $M = 3$, and c) ${\bf \Lambda}_i = {\sf diag}([4 \hspp\hsppp 3
\hsppp\hspp 2\hspp\hsppp 1])$ for $M = 4$. The figure shows the
excellent match between theory and Monte Carlo estimates.

\subsection{Rewriting the Rate Expression in the High-$\snr$ Extreme}
\label{app_rewrite}
A straightforward exercise shows that~(\ref{highsnr_simp}) can be
rewritten as in~(\ref{ri_highsnr}) below:
\begin{eqnarray}
E \left[ R_i \right] & \stackrel{ \rho \rightarrow \infty }
{ \rightarrow} &
\frac{1}{2} \hsppp g\left(d_{ {\bf \Sigma}_i } ( {\bf w}_1, {\bf w}_2  )
\right) + \log \left(1 + \frac{A_i}{B_i} \right) - \log(2)
\label{ri_highsnr}
\end{eqnarray}
where $g(\bullet)$ is a function defined as
\begin{eqnarray}
\label{gstruct}
g(z) & \triangleq &  f(z) + 2\log(z),  \\
f(z) & = & \frac{1}{\sqrt{1 - z^2}} \log \left(
\frac{1 + \sqrt{1 - z^2} }{1 - \sqrt{1 - z^2}} \right), \hsp
0 < z < 1.  \label{fstruct}
\end{eqnarray}
In~(\ref{ri_highsnr}), 
$d_{ {\bf \Sigma}_i } \left( {\bf w}_1, {\bf w}_2 \right)$
is defined as
\begin{eqnarray}
d_{ {\bf \Sigma}_i } \left( {\bf w}_1, {\bf w}_2  \right)
& \triangleq &
\sqrt{ \frac{ 4 \left(A_i B_i - C_i^2  \right) }
{ \left(A_i + B_i  \right)^2 } }
\end{eqnarray}
with $A_i, B_i$ and $C_i$ as in Theorem~\ref{prop_basic_rate}.
As illustrated in Fig.~\ref{fig2b}, $f (\bullet)$ is monotonically
decreasing as a function of its argument and $g(\bullet)$ is increasing
with
\begin{eqnarray}
2\log(2) = \lim_{z \rightarrow 0} g(z)
\leq & g(z) &  \leq \lim_{z \rightarrow 1} g(z) = 2 \\
\infty = \lim_{z \rightarrow 0} f(z) \geq & f(z) & \geq
\lim_{z \rightarrow 1} f(z) = 2.
\end{eqnarray}
\begin{figure}[htb!]
\centering
\begin{tabular}{c}
\includegraphics[height=3in,width=3.8in]{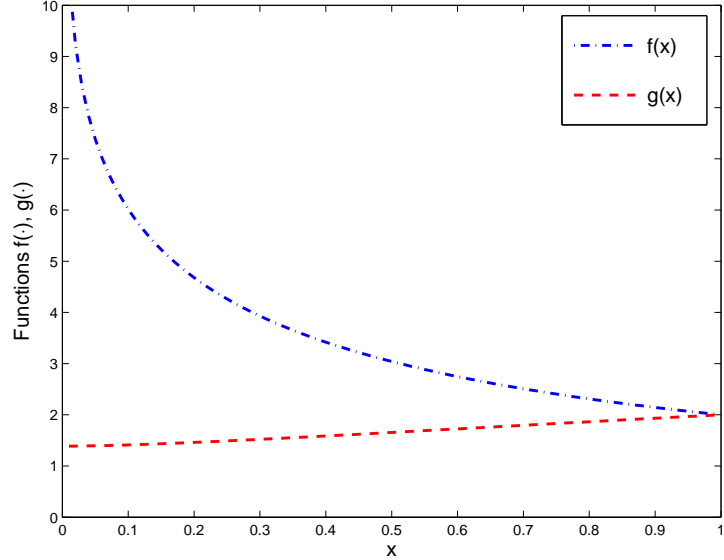}
\end{tabular}
\caption{The behavior of $f(x)$ and $g(x)$.\label{fig2b}}
\end{figure}
Formal proofs of these facts are provided in Appendix~\ref{app_prop2}
next. Some properties of $d_{ {\bf \Sigma}_i } ( {\bf w}_1, {\bf w}_2  )$
are now established.
\begin{itemize}
\item
The quantity $d_{ {\bf \Sigma}_i } \left( {\bf w}_1, {\bf w}_2  \right)$
is a generalized ``distance'' semi-metric\footnote{A semi-metric satisfies
all the properties necessary for a distance metric except the triangle
inequality.} between ${\bf w}_1$ and ${\bf w}_2$ satisfying
\begin{eqnarray}
0 \leq d_{ {\bf \Sigma}_i } \left( {\bf w}_1, {\bf w}_2  \right) \leq 1.
\label{ineq3}
\end{eqnarray}
To establish the lower bound in~(\ref{ineq3}), note that an application of the
Cauchy-Schwarz inequality implies that $C_i^2 \leq A_i B_i$.
Equality in the lower bound in~(\ref{ineq3}) is achieved if and only
if ${\bf \Sigma}_i^{1/2} {\bf w}_1 = \zeta {\bf \Sigma}_i^{1/2} {\bf w}_2$
for some $\zeta \in {\mathbb{C}}$. Since ${\bf \Sigma}_i$ is
positive-definite, this is possible only when ${\bf w}_1 = \zeta
{\bf w}_2$. Since both ${\bf w}_1$ and ${\bf w}_2$ are unit-norm, this
is possible only with $|\zeta| = 1$. In other words, equality in the
lower bound only occurs for ${\bf w}_1 = {\bf w}_2$ on ${\cal G}(2,1)$.
The fact that $d_{ {\bf \Sigma}_i } \left( {\bf w}_1, {\bf w}_2 \right)
\leq 1$ is obvious. Symmetry of the distance metric in ${\bf w}_1$ and
${\bf w}_2$ is obvious.

\item
The triangle inequality does not hold in general. One counter-example
is as follows:
\begin{eqnarray}
\label{excep1}
{\bf \Sigma}_i = {\sf diag} \big( [ 20, \hsppp 1] \big),
\hsp \hsp
{\bf w}_1 =  \left[ \frac{1}{\sqrt{3} }, \hsppp \sqrt{\frac{2}{3}} \right],
\\
{\bf w}_2 = \left[ \frac{1}{\sqrt{2}}, \hsppp \frac{-1} {\sqrt{2}} \right],
\hsp \hsp
{\bf w}_3 = \left[ -\sqrt{\frac{3.3}{7}}, \hsppp \sqrt{ \frac{3.7}{7} } \right].
\label{excep2}
\end{eqnarray}
This choice results in
\begin{eqnarray}
d_{ {\bf \Sigma}_i } \left( {\bf w}_1, {\bf w}_3 \right)
\approx 0.2536, \hspp \hspp
d_{ {\bf \Sigma}_i } \left( {\bf w}_1, {\bf w}_2 \right) +
d_{ {\bf \Sigma}_i } \left( {\bf w}_2, {\bf w}_3  \right)
\approx 0.2534.
\end{eqnarray}
Many such counter-examples can be listed out via a routine numerical
search. Numerical studies also suggest that the triangle inequality
holds for almost all choices of $\{ {\bf w}_i \}$ provided that
$\chi_i = \frac{ \lambda_1( {\bf \Sigma}_i)}
{ \lambda_2( {\bf \Sigma}_i)}$ is not too large (unlike the example
in~(\ref{excep1})-(\ref{excep2})).

\item
The 
upper bound in~(\ref{ineq3}) is achieved only if $A_i = B_i$ and $C_i = 0$.
By decomposing ${\bf w}_1$ and ${\bf w}_2$ along the orthonormal set of
basis vectors $\{ {\bf u}_1({\bf \Sigma}_i), {\bf u}_2({\bf \Sigma}_i) \}$,
it can be checked that $A_i = B_i$ and $C_i = 0$ is possible only if
\begin{eqnarray}
{\bf w}_1 & = & e^{j \nu_1} \cdot \left[ \hsppp {\bf u}_1({\bf \Sigma}_i)
\cdot \sqrt{ \frac{1} {\chi_i + 1} } + e^{j \nu_2} \cdot
{\bf u}_2({\bf \Sigma}_i) \cdot \sqrt{ \frac{\chi_i} {\chi_i + 1} }
\hsppp \right] \label{eq_w1} \\
{\bf w}_2 & = & e^{j (\nu_1 + \nu_3)} \cdot \left[ \hsppp
{\bf u}_1({\bf \Sigma}_i) \cdot \sqrt{ \frac{1} {\chi_i + 1} } -
e^{j \nu_2} \cdot {\bf u}_2({\bf \Sigma}_i) \cdot \sqrt{
\frac{\chi_i} {\chi_i + 1} } \hsppp \right]
\label{eq_w2}
\end{eqnarray}
for some choice of $\nu_j \in [0, 2\pi), \hsppp j = 1,2,3$.

\item
The semi-metric reduces to the standard {\em chordal} distance
metric~\cite{arias_smith} on ${\cal G}(2,1)$
\begin{eqnarray}
d_{ {\bf \Sigma}_i } \left( {\bf w}_1, {\bf w}_2  \right) =
\sqrt{1  - |{\bf w}_1^H {\bf w}_2|^2}
\label{chordal}
\end{eqnarray}
if ${\bf \Sigma}_i = \lambda \hsppp {\bf I}$ for some $\lambda > 0$.
\end{itemize}

\subsection{Monotonicity of $f(\bullet)$ and $g(\bullet)$}
\label{app_prop2}
We first claim that
\begin{eqnarray}
2 \leq \frac{1}{ \sqrt{1 - z^2}} \cdot \log \left(
\frac{1 + \sqrt{1 - z^2} } { 1 - \sqrt{1 - z^2}}  \right) \leq
\frac{2}{z^2}, \hspp \hspp 0  < z < 1,
\label{mission4}
\end{eqnarray}
which is equivalent to:
\begin{eqnarray}
\exp \left( 2 \cdot \sqrt{1 - z^2} \right) \leq
\frac{ 1 + \sqrt{1 - z^2} } { 1 - \sqrt{1 - z^2}} \leq
\exp \left( \frac{2 \sqrt{1 - z^2} } {z^2} \right),
\hspp \hspp 0 < z < 1.
\end{eqnarray}
For this, we start with the exponential series expansion of
$\exp \left( 2 \cdot \sqrt{1 - z^2} \right)$ that results in:
\begin{eqnarray}
\exp \left( 2 \cdot \sqrt{1 - z^2} \right) & = & 1 +
\sum_{k=1}^{\infty} \frac{ 2^k \cdot \left( 1 - z^2 \right)^{\frac{k}{2} } }
{ \Gamma(k + 1)} \\
& \leq & 1 + 2 \sum_{k=1}^{\infty} \left( 1 - z^2 \right)^{
\frac{k}{2} } = 1 + \frac{ 2\sqrt{1 - z^2}} { 1  - \sqrt{1 - z^2}}
\end{eqnarray}
where $\Gamma(\cdot)$ is the Gamma function,
the second inequality follows from the fact that $\frac{2 ^{k-1}}
{ \Gamma(k+1)} \leq 1$ for all $k \geq 1$, and the last equality from the
sum of an infinite geometric series. For the other side
of~(\ref{mission4}), note that
\begin{eqnarray}
\exp \left( \frac{2 \sqrt{1 - z^2} } {z^2} \right) & \geq & 1
+ \frac{ 2 \sqrt{1 - z^2} }{z^2} + \frac{ 2 (1 - z^2)}{z^4} \\
& \geq & 1 + \frac{2 \sqrt{1 - z^2} } {z^2} \left( 1 +
\sqrt{1 - z^2} \right) = 1 + \frac{ 2\sqrt{1 - z^2}} { 1  - \sqrt{1 - z^2}}
\end{eqnarray}
where the first inequality follows by truncating the terms of the
asymptotic expansion and the second follows by using the fact that
$z^2 < 1$. The proof is complete by noting that
\begin{eqnarray}
\frac{\partial f(z)}{ \partial z} & = & \frac{-z}{1 - z^2} \cdot
\left[ \frac{2}{z^2} - \frac{1}{\sqrt{1-z^2}} \log \left(
\frac{1 + \sqrt{1 - z^2}}{1 - \sqrt {1- z^2}} \right) \right] < 0 \\
\frac{\partial g(z)}{ \partial z} & = & \frac{z}{1 - z^2} \cdot
\left[ \frac{1}{\sqrt{1-z^2}} \log \left( \frac{1 + \sqrt{1 - z^2}}
{1 - \sqrt {1- z^2}} \right) - 2 \right] > 0.
\end{eqnarray}
\endproof

\subsection{Completing the Proof of Theorem~\ref{thm2}}
\label{app_gen}
With the description of ${\bf w}_1$ and ${\bf w}_2$ as
in~(\ref{new_eqn1})-(\ref{new_eqn2}), elementary algebra shows that
\begin{eqnarray}
A_1 & = & {\bf w}_1^H {\bf \Sigma}_1 {\bf w}_1 =
\frac{ |\alpha|^2 \eta_1 + |\beta|^2 \eta_2}{X^2}, \hspp
X = \| \alpha \hsppp {\bf \Sigma}_2^{-\frac{1}{2} } \hsppp {\bf v}_1 +
\beta \hsppp {\bf \Sigma}_2^{-\frac{1}{2} } \hsppp {\bf v}_2 \| \\
B_1 & = & {\bf w}_2^H {\bf \Sigma}_1 {\bf w}_2 =
\frac{ |\gamma|^2 \eta_1 + |\delta|^2 \eta_2}{Y^2}, \hspp
Y = \| \gamma \hsppp {\bf \Sigma}_2^{-\frac{1}{2} } \hsppp {\bf v}_1 +
\delta \hsppp {\bf \Sigma}_2^{-\frac{1}{2} } \hsppp {\bf v}_2 \|
\\
C_1 & = & | {\bf w}_1^H {\bf \Sigma}_1 {\bf w}_2| =
\frac{ \big| \alpha^{\star} \gamma \eta_1 + \beta^{\star} \delta \eta_2
\big| } {XY}
\\
A_2 & = & {\bf w}_2^H {\bf \Sigma}_2 {\bf w}_2 = \frac{1}{Y^2}
\\
B_2 & = & {\bf w}_1^H {\bf \Sigma}_2 {\bf w}_1 = \frac{1}{X^2}
\\
C_2 & = & | {\bf w}_1^H {\bf \Sigma}_2 {\bf w}_2 | = \frac{
\big| \alpha^{\star} \gamma + \beta^{\star} \delta \big|}{XY}.
\end{eqnarray}
As in the statement of the theorem, let $\tau_i, \hsppp i =1,2,3$ denote
\begin{eqnarray}
\tau_1 = {\bf v}_1^H \hsppp {\bf \Sigma}_2^{-1} \hsppp {\bf v}_1,
\hsp \hsp
\tau_2 = {\bf v}_2^H \hsppp {\bf \Sigma}_2^{-1} \hsppp {\bf v}_2,
\hsp \hsp
\tau_3 = {\bf v}_1^H \hsppp {\bf \Sigma}_2^{-1} \hsppp {\bf v}_2.
\end{eqnarray}
We can rewrite $X^2$ and $Y^2$ in terms of $\{ \tau_i \}$ as
\begin{eqnarray}
X^ 2 & = & |\alpha|^2 \tau_1 + |\beta|^2 \tau_2 + 2 |\alpha| |\beta|
|\tau_3| \cos( \theta_1)
\label{Xsquared}
\\
Y^2 & = & |\gamma|^2 \tau_1 + |\delta|^2 \tau_2 + 2 |\gamma| | \delta|
|\tau_3 | \cos(\theta_2)
\label{Ysquared}
\end{eqnarray}
where $\theta_1 = {\sf arg}( \tau_3 ) 
+ \theta_{\beta} - \theta_{\alpha}$ and $\theta_2 =
{\sf arg}( \tau_3) 
+ \theta_{\delta} - \theta_{\gamma}$. Now note that if $\left\{ {\bf v}_1,
\hsppp {\bf v}_2 \right\}$ is a pair of eigenvectors for ${\bf \Sigma}$, then so
is the pair $\left\{ e^{j \nu_1} \hsppp {\bf v}_1, \hsppp e^{j \nu_2} \hsppp
{\bf v}_2 \right\}$ for any choice of $\nu_1$ and $\nu_2$. In other words,
the choice of $\left\{ {\bf v}_1, \hsppp {\bf v}_2 \right\}$ is unique
{\em only} on ${\cal G}(2,1)$, and not on ${\sf St}(2,1)$.
Hence, ${\sf arg}(\tau_3)$
can be chosen arbitrarily and {\em independently} in
determining the values of $X^2$ and $Y^2$. With the specific choice that
${\sf arg}( {\bf v}_1^H {\bf \Sigma}_2^{-1} {\bf v}_2 ) = \frac{\pi}{2} +
\theta_{\alpha} - \theta_{\beta}$ in~(\ref{Xsquared}) and
${\sf arg}( {\bf v}_1^H {\bf \Sigma}_2^{-1} {\bf v}_2 ) =
\frac{\pi}{2} + \theta_{\gamma} - \theta_{\delta}$ in~(\ref{Ysquared}),
we have
\begin{eqnarray}
X^2 & = & |\alpha|^2 \tau_1 + |\beta|^2 \tau_2 \\
Y^2 & = & |\gamma|^2 \tau_1 + |\delta|^2 \tau_2.
\end{eqnarray}
Thus, the high-$\snr$ expression for the ergodic sum-rate can be simplified as
\begin{align}
& 2 E \left[ R_1 \right] + 2 E \left[ R_2 \right] + 4 \log(2)
\nonumber \\
& \hsp \hsp =
g \left( \frac{ 2 \sqrt{\eta_1 \eta_2} XY \cdot |\beta \gamma - \alpha \delta| }
{ \Big(  |\alpha|^2 \eta_1 + |\beta|^2 \eta_2
\Big) \cdot Y^2 + \Big( |\gamma|^2 \eta_1 + |\delta|^2 \eta_2 \Big)
\cdot X^2 } \right) + 2 \log \left( 1 + \frac{X^2}{Y^2} \right) \nonumber \\
& \hsp \hsp \hsp \hsp +
g \left( \frac{2 X Y \cdot | \beta \gamma - \alpha \delta| }
{ X^2 + Y^2 } \right)
+ 2 \log \left( 1 + \frac{Y^2}{X^2} \cdot
\frac{ |\alpha|^2 \eta_1 + |\beta|^2 \eta_2}
{ |\gamma|^2 \eta_1 + |\delta|^2 \eta_2}  \right) .
\label{thm2_eqnimp}
\end{align}
Using~(\ref{fstruct}) to rewrite the above equation in terms of
$f(\bullet)$, we have after simplification:
\begin{align}
& {\hspace{-0in}}
2 E \left[ R_1 \right] + 2 E \left[ R_2 \right] - \log(\eta_1 \eta_2)
= f \left( \frac{ 2 \sqrt{\eta_1 \eta_2} XY \cdot | \beta \gamma - \alpha \delta|  }
{ \big(  |\alpha|^2 \eta_1 + |\beta|^2 \eta_2 \big) Y^2 +
\big( |\gamma|^2 \eta_1 + |\delta|^2 \eta_2 \big) X^2 }
\right)
\nonumber \\ & {\hspace{2.2in}}
+ f \left( \frac{2 X Y \cdot | \beta \gamma - \alpha \delta|  }
{ X^2 + Y^2 } \right)
+ 2 \log \left( \frac{ | \beta \gamma - \alpha \delta|^2 }
{ |\gamma|^2 \eta_1 + |\delta|^2 \eta_2 } \right).
\end{align}
We now claim that the following two inequalities hold:
\begin{eqnarray}
\frac{ X Y }{ X^2+Y^2} & \geq & \frac{\sqrt{\tau_1\tau_2}}
{ \tau_1 + \tau_2}
\label{ineq1} \\
\frac{XY \cdot \big( |\gamma|^2 \eta_1 + |\delta|^2 \eta_2 \big) }
{ \big(  |\alpha|^2 \eta_1 + |\beta|^2 \eta_2 \big) Y^2 +
\big( |\gamma|^2 \eta_1 + |\delta|^2 \eta_2 \big) X^2} & \geq &
\frac{ \sqrt{ \tau_1 \tau_2} \cdot \eta_2 } { \tau_1 \eta_2 + \tau_2 \eta_1 }.
\label{ineq2}
\end{eqnarray}
The proof of~(\ref{ineq1}) and~(\ref{ineq2}) will be tackled later.

Using~(\ref{ineq1}) and~(\ref{ineq2}) in conjunction with the decreasing nature
of $f(\bullet)$, we have
\begin{align}
& {\hspace{-0in}}
2 E \left[ R_1 \right] + 2 E \left[ R_2 \right] - \log(\eta_1 \eta_2)
\leq f \left( \frac{ 2 \sqrt{\eta_1 \eta_2 \tau_1 \tau_2} \cdot \eta_2 }
{ \tau_1 \eta_2 + \tau_2 \eta_1} \cdot
\frac{ | \beta \gamma - \alpha \delta|  }
{\big( |\gamma|^2 \eta_1 + |\delta|^2 \eta_2 \big) } \right)
\nonumber \\
& {\hspace{1.6in}}
+ f \left( \frac{2 \sqrt{\tau_1\tau_2}} {\tau_1 + \tau_2} \cdot
| \beta \gamma - \alpha \delta| \right)
+ 2 \log \left( \frac{ | \beta \gamma - \alpha \delta|^2 }
{ |\gamma|^2 \eta_1 + |\delta|^2 \eta_2 } \right)
\\ & {\hspace{2in}} =
g \left( \frac{ 2 \sqrt{\eta_1 \eta_2 \tau_1 \tau_2} \cdot \eta_2 }
{ \tau_1 \eta_2 + \tau_2 \eta_1} \cdot
\frac{ | \beta \gamma - \alpha \delta|  }
{\big( |\gamma|^2 \eta_1 + |\delta|^2 \eta_2 \big) } \right)
\nonumber \\
& {\hspace{1.6in}}
+ g \left( \frac{2 \sqrt{\tau_1\tau_2}} {\tau_1 + \tau_2} \cdot
| \beta \gamma - \alpha \delta| \right)
+ 2 \log \left(
\frac{ \big( \tau_1 \eta_2 + \tau_2 \eta_1 \big)
\big(\tau_1 + \tau_2 \big) } { 4 \tau_1 \tau_2 \eta_2 \sqrt{\eta_1 \eta_2}
} \right).
\end{align}
Note that $\{ \theta_{\bullet} \}$ enter the above optimization only via
the term $| \beta \gamma - \alpha \delta |$ and
\begin{eqnarray}
|\beta \gamma - \alpha \delta | \leq |\alpha| \sqrt{1 - |\gamma|^2} +
|\gamma | \sqrt{1 - |\alpha|^2}
\label{eqn_ii3}
\end{eqnarray}
with equality achieved if and only if $\theta_{\alpha} + \theta_{\delta}
- \theta_{\beta} - \theta_{\gamma} = \pi$ (modulo $2\pi$). Parameterizing
$|\alpha|$ and $|\gamma|$ as $|\alpha| = \sin(\theta)$ and $|\gamma| =
\sin(\phi)$ for some $\{ \theta, \phi \} \in [0, \pi/2]$, we have
\begin{eqnarray}
|\beta \gamma - \alpha \delta | \leq \sin(\theta) \cos(\phi) + \cos(\theta)
\sin(\phi) = \sin(\theta + \phi) \leq 1
\end{eqnarray}
since $0 \leq \theta + \phi \leq \pi$. Further, $\eta_1 \geq \eta_2$
implies that
$|\gamma|^2 \eta_1 + |\delta|^2 \eta_2 \geq \eta_2$ and hence, we have
\begin{eqnarray}
\frac{ | \beta \gamma - \alpha \delta| }
{  |\gamma|^2 \eta_1 + |\delta|^2 \eta_2} \leq \frac{1}{\eta_2}.
\end{eqnarray}
Using the fact that $g(\bullet)$ is an increasing function, we get an upper
bound for the sum-rate as
\begin{eqnarray}
2 E \left[ R_1 \right] + 2 E \left[ R_2 \right]
\leq f \left( \frac{2 \sqrt{\eta_1 \eta_2 \tau_1\tau_2}  }
{\eta_1 \tau_2 + \eta_2 \tau_1} \right) +
f \left( \frac{ 2 \sqrt{\tau_1 \tau_2} } { \tau_1 + \tau_2} \right)
+ \log \left( \frac{\eta_1}{\eta_2} \right).
\end{eqnarray}
This upper bound is achievable with the choice of $|\alpha| = 1$ and $|\gamma| =
0$ in~(\ref{new_eqn1}) and~(\ref{new_eqn2}), which is equivalent to~(\ref{bformer}).
Substituting this choice in the sum-rate expression yields
\begin{eqnarray}
E \left[ R_1 \right] + E \left[ R_2 \right]
& \stackrel{\rho \rightarrow \infty} {\rightarrow} &
\frac{1}{2} \hsppp f \left( \frac{2 \sqrt{\eta_1 \eta_2 \tau_1\tau_2}  }
{\eta_1 \tau_2 + \eta_2 \tau_1} \right) +
\frac{1}{2} \hsppp
f \left( \frac{ 2 \sqrt{\tau_1 \tau_2} } { \tau_1 + \tau_2} \right)
+ \frac{1}{2} \log \left( \frac{\eta_1}{\eta_2} \right) \\
& = & \frac{1}{2} \cdot \frac{ \eta_1 \tau_2 + \eta_2 \tau_1}
{ | \eta_1 \tau_2 - \eta_2 \tau_1| } \cdot \log \left(
\frac{ \eta_1 \tau_2 + \eta_2 \tau_1 + | \eta_1 \tau_2 - \eta_2 \tau_1| }
{ \eta_1 \tau_2 + \eta_2 \tau_1 - | \eta_1 \tau_2 - \eta_2 \tau_1| }
\right) \nonumber \\
&&+ \frac{1}{2} \cdot \frac{\tau_1 + \tau_2}{|\tau_1 - \tau_2|}
\cdot \log \left( \frac{ \tau_1 + \tau_2 + | \tau_1 - \tau_2 |}
{ \tau_1 + \tau_2 - |\tau_1 - \tau_2|} \right) +
\frac{1}{2} \hsppp \log \left( \frac{\eta_1}{\eta_2} \right).
\label{sumrate}
\end{eqnarray}
To simplify~(\ref{sumrate}), we define $\kappa_{1}$
and $\kappa_{2}$ as 
\begin{eqnarray}
\kappa_{1} =
\frac{ \eta_1 \tau_2}{ \eta_2 \tau_1} & {\rm and} &
\kappa_{2} = \frac{\tau_2}{\tau_1}.
\end{eqnarray}
The fact that $\eta_1 \geq \eta_2$ implies that $\kappa_{1} \geq \kappa_{2}$.
Thus, there are three possibilities: i) $\kappa_{1} \geq \kappa_{2} \geq 1$,
ii) $\kappa_{1} \geq 1 \geq \kappa_{2}$, and iii) $1 \geq \kappa_{1} \geq
\kappa_{2}$. It is straightforward but tedious to check that in all of the
three cases,~(\ref{sumrate}) reduces to~(\ref{rhs3}) as in the statement of
the theorem. The proof will be complete if the inequalities~(\ref{ineq1})
and~(\ref{ineq2}) can be established.

\noindent {\bf \em Proof of~(\ref{ineq1}):} For the first inequality,
note that
\begin{eqnarray}
\frac{X^2 + Y^2}{XY} = \frac{X}{Y} + \frac{Y}{X} = t + \frac{1}{t} \triangleq
q(t)
\end{eqnarray}
can be written as a symmetric function $q(t)$ in $t$ where $t = \frac{X}{Y}$.
Further, noting that $q(t)$ is decreasing in $t$ for $t \leq 1$ and is
increasing in $t$ for $t \geq 1$, the maximum of $\frac{X^2 + Y^2}{XY}$ is
achieved either when $\frac{X}{Y}$ achieves its largest or smallest value.
The inequality in~(\ref{ineq1}) follows since
\begin{eqnarray}
\sqrt{ \frac{ \min(\tau_1, \tau_2) }  { \max(\tau_1, \tau_2) } }
\leq \frac{X}{Y} \leq
\sqrt{ \frac{ \max(\tau_1, \tau_2) } { \min(\tau_1, \tau_2) } }.
\end{eqnarray}

\noindent {\bf \em Proof of~(\ref{ineq2}):} The proof of~(\ref{ineq2})
is more involved. For this, note that
\begin{align}
& {\hspace{0.1in}}
{\cal L}_1 \triangleq
\frac{ \big(  |\alpha|^2 \eta_1 + |\beta|^2 \eta_2 \big) Y^2 +
\big( |\gamma|^2 \eta_1 + |\delta|^2 \eta_2 \big) X^2 }
{ XY \cdot  \big( |\gamma|^2 \eta_1 + |\delta|^2 \eta_2 \big) }
\\
& {\hspace{0.3in}} =
\sqrt{ \frac{ |\alpha|^2 (\tau_1 - \tau_2) + \tau_2 } {
|\gamma|^2 (\tau_1 - \tau_2) + \tau_2 } } \cdot
\left(1 + \frac{ |\alpha|^2 (\eta_1 - \eta_2) + \eta_2 } {
|\gamma|^2 (\eta_1 - \eta_2) + \eta_2 } \cdot
\frac { |\gamma|^2 (\tau_1 - \tau_2) + \tau_2  }
{|\alpha|^2 (\tau_1 - \tau_2) + \tau_2  } \right).
\label{eqn10}
\end{align}
By taking derivative with respect to $|\alpha|^2$, note that
the first term in~(\ref{eqn10}) is increasing in $|\alpha|^2$ for any fixed
choice of $|\gamma|$ if and only if $\frac{\tau_1}{\tau_2} \geq 1$. Similarly,
for any fixed choice of $|\gamma|$, the second term in~(\ref{eqn10}) is
increasing in $|\alpha|^2$ if and only if $\frac{\eta_1}{\eta_2} \geq
\frac{\tau_1}{\tau_2}$. Thus, the condition
\begin{eqnarray}
1 \leq \frac{\tau_1}{\tau_2} \leq \frac{\eta_1} {\eta_2}
\label{eqn11}
\end{eqnarray}
is necessary and sufficient to ensure that for any choice of $|\gamma|$,
${\cal L}_1$ is maximized by the choice $|\alpha| = 1$. On analogous lines,
taking the derivative with respect to $|\gamma|^2$, it can be seen that for
any fixed choice of $|\alpha|$, both the terms in~(\ref{eqn10}) are
decreasing in $|\gamma|^2$ if and only if the same condition in~(\ref{eqn11})
holds. In other words, under~(\ref{eqn11}), ${\cal L}_1$ is maximized by
$|\alpha| = 1$ and $|\gamma| = 0$. At this stage, two other possibilities
need to be considered: i)
$\frac{\tau_1}{\tau_2} \leq 1 \leq \frac{\eta_1} {\eta_2}$, and ii)
$1 \leq \frac{\eta_1}{\eta_2} \leq \frac{\tau_1} {\tau_2}$. In either
case, we will show that
\begin{eqnarray}
\left( \frac{ |\alpha|^2 (\eta_1 - \eta_2) + \eta_2}
{ |\gamma|^2 ( \eta_1 - \eta_2) + \eta_2} - \frac{\eta_1}{\eta_2} \right)
\cdot \frac{Y}{X} + \left( \frac{X}{Y} - \sqrt{\frac{\tau_1}{\tau_2}} \right)
\cdot \left( 1  - \frac{\eta_1}{\eta_2} \sqrt{\frac{ \tau_2}{\tau_1}} \frac{Y }{X}
\right) \leq 0,
\label{eqn15}
\end{eqnarray}
which is equivalent to~(\ref{ineq2}),
or the statement that ${\cal L}_1$ is maximized by $|\alpha| = 1$ and $|\gamma| = 0$. For
this, note that in either case, we have
\begin{eqnarray}
\frac{ |\alpha|^2 (\eta_1 - \eta_2) + \eta_2}
{ |\gamma|^2 ( \eta_1 - \eta_2) + \eta_2} \leq \frac{\eta_1}{\eta_2}.
\label{eqn13}
\end{eqnarray}
In the first case, we also have
\begin{eqnarray}
\sqrt{ \frac{\tau_1}{ \tau_2} } \leq \frac{X}{Y}
\leq \sqrt{ \frac{\tau_2}{ \tau_1} } \leq
\frac{\eta_1}{\eta_2} \sqrt{ \frac{\tau_2}{ \tau_1} },
\label{eqn14}
\end{eqnarray}
where the last step in~(\ref{eqn14}) follows from $\frac{\eta_1}{\eta_2}
\geq 1$. Combining~(\ref{eqn13}) and~(\ref{eqn14}), we note that~(\ref{eqn15})
is immediate when $\frac{\tau_1}{\tau_2} \leq 1 \leq \frac{\eta_1}{\eta_2}$.
In the second case, however,~(\ref{eqn14}) is replaced with
\begin{eqnarray}
\sqrt{ \frac{\tau_2}{ \tau_1} } \leq \frac{X}{Y}
\leq \sqrt{ \frac{\tau_1}{ \tau_2} }. \label{eqn14p}
\end{eqnarray}
It can be seen that if $|\alpha|$ and $|\gamma|$ are such that
\begin{eqnarray}
\frac{X}{Y} = D \cdot \frac{\eta_1}{\eta_2} \sqrt{\frac{\tau_2}{\tau_1} },
\end{eqnarray}
for some choice of $D$ satisfying
$1 \leq D \leq \frac{ \tau_1}{\tau_2} \cdot \frac{ \eta_2} {\eta_1}$,~(\ref{eqn15}) holds
immediately. Thus, we only need to show that~(\ref{eqn15}) holds when
$|\alpha|$ and $|\gamma|$ are such that
\begin{eqnarray}
\frac{X}{Y} = D \cdot \frac{\eta_1}{\eta_2} \sqrt{\frac{\tau_2}{\tau_1} },
\end{eqnarray}
for some choice of $D$ satisfying $\frac{\eta_2}{\eta_1} \leq D \leq 1$.
After some elementary algebra, our task is to show that
\begin{eqnarray}
\frac{ |\alpha|^2 (\eta_1 - \eta_2)  + \eta_2}
{ |\gamma|^2 (\eta_1 - \eta_2) + \eta_2} & \leq &
\frac{\eta_1}{\eta_2} \cdot \left( 1 - (1 - D) \cdot
\left( 1  - D \cdot \frac{\eta_1}{\eta_2} \cdot \frac{\tau_2}{\tau_1}
\right) \right) \label{eq20}
\\
\frac{X}{Y} & = &
D \cdot \frac{\eta_1}{\eta_2} \sqrt{\frac{\tau_2}{\tau_1} }
\end{eqnarray}
By bounding the denominator of~(\ref{eq20}) as
$ \eta_2 \leq |\gamma|^2 (\eta_1 - \eta_2 ) + \eta_2 \leq \eta_1$, it
can be seen that~(\ref{eq20}) holds if the following quadratic inequality
in $D$ is true:
\begin{eqnarray}
D^2 \cdot \frac{ \eta_1^2 \tau_2} {\eta_2^2 \tau_1} \cdot
\left( 2 - \frac{ \tau_1 \eta_2 - \tau_2 \eta_1} { \eta_1( \tau_1 - \tau_2)}
\right) - D \cdot \frac{\eta_1}{\eta_2} \cdot \left( 1 + \frac{ \eta_1 \tau_2}
{\eta_2 \tau_1} \right) +
\frac{ \tau_1 \eta_2 - \tau_2 \eta_1} { \eta_2( \tau_1 - \tau_2)} \leq 0.
\label{eq21}
\end{eqnarray}
For this, we note that the left-hand side represents a convex
parabola in $D$ with maximum achieved at either $D = \frac{\eta_2}{\eta_1}$
or $D = 1$. Substituting $D = \frac{\eta_2}{\eta_1}$ and $D = 1$ and
simplifying, we see that
\begin{eqnarray}
{\sf LHS \hspp of \hspp (\ref{eq21})} \Big|_{D = \frac{\eta_2}{\eta_1}}
& = & - \frac{ \tau_2}{ \eta_1 \eta_2 \tau_1} \cdot
\frac{ \eta_1 - \eta_2}{ \tau_1 - \tau_2} \cdot
\left( \eta_1 (\tau_1 - \tau_2) + \tau_1 (\eta_1 - \eta_2) \right) \leq 0
\\
{\sf LHS \hspp of \hspp (\ref{eq21})} \Big|_{D = 1}
& = & - \frac{ (\eta_1 - \eta_2) \cdot (\eta_2 \tau_1 - \eta_1 \tau_2)}
{ \eta_2^2 (\tau_1 - \tau_2)} \leq 0.
\end{eqnarray}
Since the maximum of the parabola in the domain $\frac{\eta_2}{\eta_1}
\leq D \leq 1$ is below $0$,~(\ref{eqn15}) holds. Thus, we are done
with the aspect of showing that $|\alpha| = 1, |\gamma| = 0$ is sum-rate
optimal.
\endproof

\subsection{Proof of Theorem~\ref{thm1}}
\label{app_thm1}
Following the logic of Appendix~\ref{app_gen}, we decompose ${\bf w}_1$
and ${\bf w}_2$ along $\{ {\bf u}_1, {\bf u}_2 \}$ since they form an
orthonormal basis:
\begin{eqnarray}
{\bf w}_1 & = & \alpha {\bf u}_1 + \beta {\bf u}_2
\label{ref1}
\\
{\bf w}_2 & = & \gamma {\bf u}_1 + \delta {\bf u}_2
\label{ref2}
\end{eqnarray}
for some choice of $\{ \alpha, \beta, \gamma, \delta \}$ with $\alpha =
|\alpha| e^{j \theta_{\alpha}}$ (similarly, for other quantities)
satisfying $| \alpha |^2 + |\beta |^2 = |\gamma|^2 + | \delta |^2 = 1$. We
now study the ergodic sum-rate optimization over the six-dimensional parameter
space. 
With the description of ${\bf w}_1$ and ${\bf w}_2$ as
in~(\ref{ref1})-(\ref{ref2}), elementary algebra shows that
\begin{eqnarray}
A_1 & = & {\bf w}_1^H {\bf \Sigma}_1 {\bf w}_1 =
|\alpha|^2 \lambda_1 + |\beta|^2 \lambda_2 \\
B_1 & = & {\bf w}_2^H {\bf \Sigma}_1 {\bf w}_2 =
|\gamma|^2 \lambda_1 + |\delta|^2 \lambda_2 \\
C_1 & = & | {\bf w}_1^H {\bf \Sigma}_1 {\bf w}_2| =
| \alpha^{\star} \gamma \lambda_1 + \beta^{\star} \delta
\lambda_2 | \\
A_2 & = & {\bf w}_2^H {\bf \Sigma}_2 {\bf w}_2 =
|\gamma|^2 \mu_1 + |\delta|^2 \mu_2 \\
B_2 & = & {\bf w}_1^H {\bf \Sigma}_2 {\bf w}_1 =
|\alpha|^2 \mu_1 + |\beta|^2 \mu_2 \\
C_2 & = & | {\bf w}_1^H {\bf \Sigma}_2 {\bf w}_2 | =
| \alpha^{\star} \gamma \mu_1 + \beta^{\star} \delta
\mu_2 |
\end{eqnarray}
and hence,
\begin{eqnarray}
d_{ {\bf \Sigma}_1 } ( {\bf w}_1, {\bf w}_2 )^2 & = &
\frac{ 4(A_1 B_1 - C_1^2)} { (A_1 + B_1)^2 } =
\frac{ 4 \lambda_1 \lambda_2 \cdot |\beta \gamma - \alpha \delta|^2 }
{ \Big[ (|\alpha|^2 + |\gamma|^2) \lambda_1 +
( |\beta|^2 + |\delta|^2 ) \lambda_2 \Big]^2 } \\
d_{ {\bf \Sigma}_2 } ( {\bf w}_1, {\bf w}_2 )^2 & = &
\frac{ 4(A_2 B_2 - C_2^2)} { (A_2 + B_2)^2 } =
\frac{ 4 \mu_1 \mu_2 \cdot |\beta \gamma - \alpha \delta|^2 }
{ \Big[ (|\alpha|^2 + |\gamma|^2) \mu_1 +
( |\beta|^2 + |\delta|^2 ) \mu_2 \Big]^2 }.
\end{eqnarray}
The high-$\snr$ expression of the ergodic sum-rate can be written as
\begin{align}
& 2 E \left[ R_1 \right] + 2 E \left[ R_2 \right] + 4 \log(2)
\nonumber \\
& \hsp \hsp = g \left( \frac{
\sqrt{ 4 \lambda_1 \lambda_2 } \cdot |\beta \gamma - \alpha \delta| }
{ (|\alpha|^2 + |\gamma|^2) \lambda_1 + ( |\beta|^2 + |\delta|^2 ) \lambda_2 }
\right) +
g \left( \frac{ \sqrt{4 \mu_1 \mu_2} \cdot |\beta \gamma - \alpha \delta| }
{ (|\alpha|^2 + |\gamma|^2) \mu_1 + ( |\beta|^2 + |\delta|^2 ) \mu_2 }
\right) \nonumber \\
& \hsp \hsp \hsp \hsp {\hspace{0.5in}} +
2 \log \left( 1 + \frac{ |\alpha|^2 \lambda_1 + |\beta|^2 \lambda_2 }
{ |\gamma|^2 \lambda_1 + |\delta|^2 \lambda_2 } \right)
+ 2 \log \left(1 + \frac{ |\gamma|^2 \mu_1 + |\delta|^2 \mu_2 }
{|\alpha|^2 \mu_1 + |\beta|^2 \mu_2 } \right).
\label{thm1_eqnimp}
\end{align}
Note that $\{ \theta_{\bullet} \}$ enter the above optimization only via
the term $| \beta \gamma - \alpha \delta|$ and as in~(\ref{eqn_ii3}), we have
\begin{eqnarray}
| \beta \gamma - \alpha \delta| \leq |\beta||\gamma| + |\alpha| |\delta|
\leq 1.
\end{eqnarray}
Given that $\chi_1 = \frac{\lambda_1}{\lambda_2} \geq 1$, three possibilities arise
depending on the relationship between $1$, $\chi_1$ and $\chi_2 =
\frac{\mu_1}{\mu_2}$: i) $\chi_1 >
1 \geq \chi_2$, ii) $\chi_1 > \chi_2 > 1$, and iii) $\chi_2 \geq
\chi_1 > 1$.

\noindent {\bf \em Case i):} In the first case where $\chi_2 \leq 1$,
we use the fact that $g(\bullet)$ is an increasing function to bound the
sum-rate as
\begin{align}
& 2 E \left[ R_1 \right] + 2 E \left[ R_2 \right] + 4 \log(2)
\nonumber \\
& \hsp \hsp \leq f \left( \frac{
\sqrt{ 4 \lambda_1 \lambda_2 } \cdot \sin(\theta + \phi) }
{  (\sin^2(\theta) + \sin^2(\phi)) (\lambda_1 - \lambda_2) +
2 \lambda_2 } \right) +
f \left( \frac{ \sqrt{4 \mu_1 \mu_2} \cdot \sin(\theta + \phi) }
{ 2 \mu_2 - (\sin^2(\theta) + \sin^2(\phi)) (\mu_2 - \mu_1) }
\right) \nonumber \\
& \hsp \hsp \hsp \hsp {\hspace{0.2in}}
+ 2 \log \left( \frac{ \sqrt{4 \lambda_1 \lambda_2}
\cdot \sin(\theta + \phi) }{ \sin^2(\phi) (\lambda_1 - \lambda_2)
+ \lambda_2 } \right) + 2 \log \left( \frac{ \sqrt{4 \mu_1 \mu_2} \cdot
\sin(\theta + \phi) } { \mu_2 - \sin^2(\theta)(\mu_2 - \mu_1) }
\right).
\end{align}
Observing that
\begin{eqnarray}
(\sin^2(\theta) + \sin^2(\phi)) (\lambda_1 - \lambda_2) +
2 \lambda_2 & \leq & \sin^2(\phi) (\lambda_1 - \lambda_2) +
\lambda_1 + \lambda_2 \\
2 \mu_2 - (\sin^2(\theta) + \sin^2(\phi)) (\mu_2 - \mu_1)
& \leq &
2 \mu_2 - \sin^2(\theta)(\mu_2 - \mu_1)
\end{eqnarray}
and $f(\bullet)$ is a decreasing function, we have
\begin{align}
& 2 E \left[ R_1 \right] + 2 E \left[ R_2 \right] + 4 \log(2)
\nonumber \\
& \hsp \hsp \leq f \left( \frac{
\sqrt{ 4 \lambda_1 \lambda_2 } \cdot \sin(\theta + \phi) }
{ \sin^2(\phi) (\lambda_1 - \lambda_2) + \lambda_1 + \lambda_2 }
\right) +
f \left( \frac{ \sqrt{4 \mu_1 \mu_2} \cdot \sin(\theta + \phi) }
{ 2 \mu_2 - \sin^2(\theta)(\mu_2 - \mu_1) } \right)
\nonumber \\
& \hsp \hsp \hsp \hsp + 2 \log \left( \frac{ \sqrt{4 \lambda_1 \lambda_2}
\cdot \sin(\theta + \phi) }{ \sin^2(\phi) (\lambda_1 - \lambda_2)
+ \lambda_2 } \right) + 2 \log \left( \frac{ \sqrt{4 \mu_1 \mu_2} \cdot
\sin(\theta + \phi) } {  \mu_2 - \sin^2(\theta)(\mu_2 - \mu_1) }
\right) \\
& \hsp \hsp = g \left( \frac{
\sqrt{ 4 \lambda_1 \lambda_2 } \cdot \sin(\theta + \phi) }
{ \sin^2(\phi) (\lambda_1 - \lambda_2) + \lambda_1 + \lambda_2 }
\right) +
g \left( \frac{ \sqrt{4 \mu_1 \mu_2} \cdot \sin(\theta + \phi) }
{ 2 \mu_2 - \sin^2(\theta)(\mu_2 - \mu_1) } \right)
\nonumber \\
& \hsp \hsp \hsp \hsp + 2 \log \left( 1 + \frac{\lambda_1}
{ \sin^2(\phi)(\lambda_1 - \lambda_2) + \lambda_2 } \right) +
2 \log \left( 1 + \frac{ \mu_2} {\mu_2 - \sin^2(\theta)
(\mu_2 - \mu_1) } \right).
\label{rhs2}
\end{align}
It is straightforward to note that the right-hand side of~(\ref{rhs2})
is decreasing in $\sin^2(\phi)$ and increasing in $\sin^2(\theta)$. If in
addition, the condition that $\theta + \phi = \pi/2$ is satisfied, then
an upper bound on $E \left[R_1 \right] + E \left[ R_2 \right]$ can be
maximized. This results in the choice $\theta = \pi/2$ and $\phi = 0$ (that is, $|\alpha|
= 1$ and $|\gamma| = 0$) and with this choice, we have
\begin{eqnarray}
2 E \left[ R_1 \right] + 2 E \left[ R_2 \right] \leq
f \left( \frac{ \sqrt{4 \lambda_1 \lambda_2}} {\lambda_1 + \lambda_2}
\right) +
f \left( \frac{ \sqrt{4 \mu_1 \mu_2} } {\mu_1 + \mu_2} \right) +
\log \left( \frac{\lambda_1} {\lambda_2} \right) +
\log \left( \frac{\mu_2}{\mu_1} \right) .
\end{eqnarray}
It is also straightforward to check that the choice of ${\bf w}_1$ and
${\bf w}_2$ as in the statement of the theorem meets this upper
bound and is thus optimal.

\noindent {\bf \em Case ii):} In the second case where $\chi_1 >
\chi_2 > 1$, we start as in {\bf \em Case i)} and after optimization
over $\{ \theta_{\bullet} \}$, we can bound the sum-rate as
\begin{align}
& 2 E \left[ R_1 \right] + 2 E \left[ R_2 \right] + 4 \log(2)
\nonumber \\
& \hsp \hsp \leq f \left( \frac{2 \sqrt{\chi_1} \cdot \sin(\theta + \phi) }
{ (\chi_1 - 1) (\sin^2(\theta) + \sin^2(\phi) ) + 2 } \right) +
f \left( \frac{2 \sqrt{\chi_2} \cdot \sin(\theta + \phi) }
{ (\chi_2 - 1) (\sin^2(\theta) + \sin^2(\phi) ) + 2 } \right) \nonumber \\
& \hsp \hsp \hsp \hsp + 2 \log \left( \frac{ 2 \sqrt{\chi_1} \cdot
\sin( \theta + \phi) } { (\chi_1 - 1) \sin^2(\phi) + 1} \right) +
2 \log \left( \frac{ 2 \sqrt{\chi_2} \cdot
\sin( \theta + \phi) } { (\chi_2 - 1) \sin^2(\theta) + 1} \right)
\\
& \hsp \hsp = g \left( \frac{2 \sqrt{\chi_1} \cdot \sin(\theta + \phi) }
{ (\chi_1 - 1) (\sin^2(\theta) + \sin^2(\phi) ) + 2 } \right) +
g \left( \frac{2 \sqrt{\chi_2} \cdot \sin(\theta + \phi) }
{ (\chi_2 - 1) (\sin^2(\theta) + \sin^2(\phi) ) + 2 } \right) \nonumber \\
& \hsp \hsp \hsp \hsp + 2 \log \left(
\frac{ (\chi_1 - 1) (\sin^2(\theta) + \sin^2(\phi) ) + 2}
{ (\chi_1 - 1) \sin^2(\phi) + 1} \right) +
2\log \left(  \frac{ (\chi_2 - 1) (\sin^2(\theta) + \sin^2(\phi) ) + 2 }
{  (\chi_2 - 1) \sin^2(\theta) + 1  }  \right).
\label{rhs}
\end{align}
The joint dependence between $\theta$ and $\phi$ in the right-hand side
of~(\ref{rhs}) precludes the possibility of breaking down the double
variable optimization of~(\ref{rhs}) into a pair of single variable
optimizations.
That is, the technique from {\bf \em Case i)} fails here and this case
needs to be studied differently.

The proof in this case follows in three steps. In the first step, when
$\chi_1 > \chi_2$, we show that ${\cal L}_2$ (defined as below) is
maximized by $\theta = \pi/2$ and $\phi = 0$:
\begin{align}
{\cal L}_2 & \triangleq
\frac{ (\chi_1 - 1) (\sin^2(\theta) + \sin^2(\phi) ) + 2}
{ (\chi_1 - 1) \sin^2(\phi) + 1} \cdot
\frac{ (\chi_2 - 1) (\sin^2(\theta) + \sin^2(\phi) ) + 2 }
{  (\chi_2 - 1) \sin^2(\theta) + 1  } \\
& = \frac{ (\chi_1-1)(\chi_2-1) \left[ \sin^2(\theta) +
\sin^2(\phi) \right]^2 + 2 \left(\chi_1 + \chi_2 -2 \right)
\left[ \sin^2(\theta) + \sin^2(\phi) \right] + 4}
{ \left[  (\chi_1-1) \sin^2(\phi) + 1 \right]
\left[ (\chi_2-1) \sin^2(\theta) + 1 \right] }.
\label{rhs1}
\end{align}
\ignore{
=  1 + \frac{ (\chi_1 -1) \sin^2(\theta) + 1 }
{ (\chi_1-1) \sin^2(\phi) + 1} +
\frac{ (\chi_2-1) \sin^2(\phi) + 1} { (\chi_2-1) \sin^2(\theta) + 1}
+ \frac{ (\chi_1 -1) \sin^2(\theta) + 1 }
{ (\chi_1-1) \sin^2(\phi) + 1} \cdot
\frac{ (\chi_2-1) \sin^2(\phi) + 1} { (\chi_2-1) \sin^2(\theta) + 1}.
\end{align}
Since
\begin{align}
\frac{ (\chi_2 - 1) \sin^2(\phi) + 1 } { (\chi_2 - 1) \sin^2(\theta) + 1}
= \frac{ (\chi_1 - 1) \sin^2(\phi) + 1 }
{ (\chi_1 - 1) \sin^2(\theta) + 1 } +
\frac{ (\chi_1 - \chi_2) \left( \sin^2(\theta) - \sin^2(\phi) \right)  }
{ \left[ (\chi_2 - 1) \sin^2(\theta) + 1 \right] \cdot
\left[ (\chi_1 - 1) \sin^2(\theta) + 1 \right] },
\end{align}
we can write ${\cal L}$ as
\begin{align}
{\cal L} & = & \frac{ (\chi_1-1)(\chi_2-1) \left[ \sin^2(\theta) +
\sin^2(\phi) \right]^2 + 2 \left(\chi_1 + \chi_2 -2 \right)
\left[ \sin^2(\theta) + \sin^2(\phi) \right] + 4}
{  \left[ (\chi_2-1) \sin^2(\theta) + 1 \right]
\left[  (\chi_1-1) \sin^2(\phi) + 1 \right]  }.
\end{align}
}
For this, we set $\sin^2(\theta) + \sin^2(\phi)$ to take a specific value
of $\alpha$. Since the numerator is only a function of $\alpha$, maximizing
${\cal L}_2$ is equivalent to minimizing the denominator of~(\ref{rhs1}).
There are two possible cases depending on whether $\alpha \leq 1$ or
$1 < \alpha \leq 2$. In the former case, it can be seen that the
denominator is minimized by $\phi = 0$ and $\theta = \sin^{-1}
(\sqrt{\alpha})$, whereas in the latter case, it is minimized by
$\phi = \sin^{-1}( \sqrt{\alpha-1} )$ and $\theta = \pi/2$. Substituting
these values, it can be seen that
\begin{eqnarray}
{\cal L}_2 \leq
\left\{
\begin{array}{cc}
\frac{ (\chi_1-1)(\chi_2-1) \hsppp \alpha^2 +
2 \alpha( \chi_1 + \chi_2 - 2) + 4}
{ 1 + (\chi_2-1) \hsppp \alpha} & {\rm if} \hspp \alpha \leq 1 \\
\frac{ (\chi_1-1)(\chi_2-1) \hsppp \alpha^2 + 2
\alpha( \chi_1 + \chi_2 - 2) + 4}
{ \chi_2 \cdot \left[ 1 + (\chi_1-1) \hsppp (\alpha-1)  \right]} &
{\rm if} \hspp 1 < \alpha \leq 2.
\end{array}
\right.
\label{eqn_nn}
\end{eqnarray}
A straightforward derivative calculation (using the critical fact that
$\chi_1 > \chi_2$) shows that while the right-hand
side of~(\ref{eqn_nn}) is increasing for $\alpha \leq 1$, it is decreasing
for $1 < \alpha \leq 2$. 
In other words,
\begin{eqnarray}
{\cal L}_2 \leq \frac{ (1 + \chi_1) \cdot ( 1 + \chi_2)}{\chi_2},
\end{eqnarray}
and this upper bound is achieved with $\theta = \pi/2, \phi= 0$.

In the second step, if $\theta + \phi > \pi/2$, we have
\begin{eqnarray}
\frac{ \sin(\theta + \phi) }{ (\chi_i-1) \left( \sin^2(\theta) +
\sin^2(\phi) \right) + 2 } \leq
\frac{ \sin(\theta+ \phi)}{ \chi_i + 1} \leq \frac{1}{\chi_i + 1},
\hsp i = 1,2
\label{holds}
\end{eqnarray}
since $\sin(\theta) > \sin( \pi/2 - \phi) = \cos(\phi)$.
Therefore,
\begin{align}
& 2 E \left[ R_1 \right] + 2 E \left[ R_2 \right] + 4 \log(2)
\nonumber \\
& \hsp \hsp \leq g \left( \frac{2 \sqrt{\chi_1} \cdot \sin(\theta + \phi) }
{ (\chi_1 - 1) (\sin^2(\theta) + \sin^2(\phi) ) + 2 } \right) +
g \left( \frac{2 \sqrt{\chi_2} \cdot \sin(\theta + \phi) }
{ (\chi_2 - 1) (\sin^2(\theta) + \sin^2(\phi) ) + 2 } \right)
+ 2 \log( {\cal L}_2) \\
& \hsp \hsp \leq g \left( \frac{2 \sqrt{\chi_1}}
{\chi_1+1} \right) + g \left( \frac{2 \sqrt{\chi_2}}
{\chi_2+1} \right) + 2 \log( {\cal L}_2), \\
& 2 E \left[ R_1 \right] + 2 E \left[ R_2 \right] \leq
f \left( \frac{2 \sqrt{\chi_1}} {\chi_1+1} \right)
+ f \left( \frac{2 \sqrt{\chi_2}} {\chi_2+1} \right)
+ \log \left(  \frac{\chi_1}{\chi_2} \right),
\end{align}
where the second inequality follows from the monotonicity of $g( \bullet )$
and the third from Step 1.

Note that~(\ref{holds}) fails if $\theta + \phi = \nu \leq \pi/2$.
Thus, in the third step, we consider this possibility. Here, a
straightforward manipulation shows that
\begin{eqnarray}
\sin^2(\theta) + \sin^2(\nu - \theta)
& = & \sin^2(\nu) - 2 \sin(\theta) \sin(\nu -\theta) \cos(\nu)
\leq \sin^2(\nu),
\end{eqnarray}
where the last inequality follows because $\nu \leq \pi/2$. Hence, we have
\begin{align}
& {\hspace{-0.2in}}
2 E \left[ R_1 \right] + 2 E \left[ R_2 \right] + 4 \log(2) \leq
g \left( \frac{ 2 \sqrt{\chi_1} \sin(\nu)}
{ (\chi_1-1) \sin^2(\nu)  + 2 }  \right) +
g \left( \frac{ 2 \sqrt{\chi_2} \sin(\nu)}
{ (\chi_2-1) \sin^2(\nu)  + 2 } \right)
\nonumber \\ & \hsp \hsp \hsp {\hspace{0.8in}}
+ 2\log \left( \frac{ \left[ (\chi_1-1) \sin^2(\nu)  + 2 \right]
\cdot
\left[ (\chi_2-1) \sin^2(\nu)  + 2 \right] }
{ \left[ (\chi_1-1) \sin^2(\nu-\theta) + 1 \right] \cdot
\left[ (\chi_2-1) \sin^2(\theta) +1 \right]} \right) \triangleq {\cal L}_3.
\end{align}
It can be easily seen that ${\cal L}_3$ is maximized by $\theta = \pi/2$ and
$\phi = 0$. The upper bound is the same in both the cases
$\theta + \phi \leq \pi/2$
and $\theta + \phi > \pi/2$. And this upper bound is met by the choice
$\theta = \pi/2$ and $\phi = 0$ (that is, $|\alpha| = 1$ and
$|\gamma| = 0$) and is hence optimal.

\ignore{
Since $\nu \leq \pi/2$ and $\cos(\nu) \geq 0$, we have
$\eta \leq \sin^2(\nu)$, and combining this with the decreasing nature
of $f(\bullet)$, we have
\begin{eqnarray}
{\cal L}_1 & \leq & f \left( \frac{2 \sqrt{\kappa_1} \sin(\nu)}
{ (\kappa_1-1) \sin^2(\nu) + 2 } \right)
+ \log \left( \frac{ (\kappa_2-1) \eta + 2}
{ (\kappa_1-1) \eta + 2}  \right) \nonumber \\
&& + \log \left( \frac{ 4 \kappa_1 \sin^2(\nu)}{
\left[ (\kappa_2-1) \sin^2(\theta) + 1 \right] \cdot
 \left[ (\kappa_1-1) \sin^2(\nu -\theta) + 1 \right] } \right)
\triangleq {\cal L}_1^u
\\
{\cal L}_2 & \leq & f \left( \frac{2 \sqrt{\kappa_2} \sin(\nu)}
{ (\kappa_2-1) \sin^2(\nu) + 2 } \right)
+ \log \left( \frac{ (\kappa_1-1) \eta + 2}
{ (\kappa_2-1) \eta + 2}  \right) \nonumber \\
&& + \log \left( \frac{ 4 \kappa_2 \sin^2(\nu)}{
\left[ (\kappa_2-1) \sin^2(\theta) + 1 \right] \cdot
 \left[ (\kappa_1-1) \sin^2(\nu -\theta) + 1 \right] } \right)
 \triangleq {\cal L}_2^{u}.
\end{eqnarray}
We optimize ${\cal L}_i^u, \hspp i = 1,2$ over $\theta$ by showing that
\begin{eqnarray}
\label{eq_deriv}
\frac{\partial {\cal L}_i^u} { \partial \sin(\theta) } \geq 0, \hsp
\hspp 0 \leq \theta \leq \nu \leq \pi/2, \hsp \hspp i = 1,2.
\end{eqnarray}
[For this, ] It is important to note that the assumptions of
$\kappa_1 > \kappa_2$ and $\nu \leq \pi/2$ are critical for the
conclusions in~(\ref{eq_deriv}). Substituting the optimal choice of
$\theta = \nu$, we have $\eta = \sin^2(\nu)$ and
\begin{align}
{\cal L}_i \leq  g \left( \frac{2 \sqrt{\kappa_i} \sin(\nu)}
{ (\kappa_i-1) \sin^2(\nu) + 2 } \right) + \log \left(
\frac{ \left[ (\kappa_1-1) \sin^2(\nu) + 2 \right]
\left[ (\kappa_2-1) \sin^2(\nu) + 2\right] } {
(\kappa_2-1) \sin^2(\nu) + 1 } \right), \hspp i = 1,2,
\end{align}
which we now show is maximized by $\nu = \pi/2$, resulting in
\begin{eqnarray}
{\cal L}_i \leq g \left( \frac{2 \sqrt{\kappa_i} }{ \kappa_i + 1} \right)
+ \log \left(  \frac{ (\kappa_1 + 1) \cdot (\kappa_2 + 1)}{\kappa_2} \right),
\hsp i = 1,2.
\end{eqnarray}
[For this, ]
}

\ignore{
and hence, we can rewrite the above upper bound as
\begin{eqnarray}
\begin{split}
& 2 E \left[ R_1 \right] + 2 E \left[ R_2 \right] + 4 \log(2)
\nonumber \\
& \hsp \hsp \leq g \left( \frac{2 \sqrt{\kappa_1} \cdot \sin(\theta + \phi) }
{ (\kappa_1 - 1) (\sin^2(\theta) + \sin^2(\phi) ) + 2 } \right) +
g \left( \frac{2 \sqrt{\kappa_2} \cdot \sin(\theta + \phi) }
{ (\kappa_2 - 1) (\sin^2(\theta) + \sin^2(\phi) ) + 2 } \right) \nonumber \\
& \hsp \hsp \hsp \hsp + 2 \log \left( 1 +
\frac{ (\kappa_1 - 1) \sin^2(\theta) + 1 }
{ (\kappa_1 - 1) \sin^2(\phi) + 1 }  \right) +
2 \log \left( 1 + \frac{ (\kappa_1 - 1) \sin^2(\phi) + 1 }
{ (\kappa_1 - 1) \sin^2(\theta) + 1 } \right) \nonumber \\
& \hsp \hsp \hsp \hsp + 2 \log \left( 1  +
\frac{ (\kappa_1 - \kappa_2) \left( \sin^2(\theta) - \sin^2(\phi) \right) }
{ \left[ (\kappa_2 - 1) \sin^2(\theta) + 1 \right] \cdot
\left[ (\kappa_1 - 1) \left( \sin^2(\theta) + \sin^2(\phi) \right) + 2 \right] }
\right) \nonumber \\
& \hsp \hsp \leq g \left( \frac{2 \sqrt{\kappa_1} \cdot \sin(\theta+ \phi) }
{ (\kappa_1 - 1) (\sin^2(\theta) + \sin^2(\phi) ) + 2 } \right) +
g \left( \frac{2 \sqrt{\kappa_2} \cdot \sin(\theta+ \phi)}
{ (\kappa_2 - 1) (\sin^2(\theta) + \sin^2(\phi) ) + 2 } \right) \nonumber \\
& \hsp \hsp \hsp \hsp + 2 \log \left( \frac{
\left( (\kappa_1-1) (\sin^2(\theta) + \sin^2(\phi) ) + 2 \right)^2 \cdot
\left( (\kappa_2-1) \sin^2(\theta) + 2 \right) }
{ \left((\kappa_1 - 1) \sin^2 (\phi) + 1 \right) \cdot
\left( (\kappa_1-1) \sin^2(\theta) + 2  \right) \cdot
\left( (\kappa_2-1) \sin^2(\theta) + 1 \right) }
\right)
\end{split}
\end{eqnarray}
where the second inequality follows from 
noticing that the last term in the right-hand side of the
first inequality is a decreasing function of $\sin^2(\phi)$ and by
replacing it with $\phi = 0$ and simplifying the resultant expression.

\begin{lem}
\label{lem_sub}
We now claim that the following functions are both maximized by
$\theta = \pi/2$ and $\phi = 0$:
\begin{eqnarray}
\begin{split}
& \hsp \hsp {\it i)} \hsp
{\cal L}_1 \triangleq g \left( \frac{2 \sqrt{\kappa_1} \cdot \sin(\theta+ \phi) }
{ (\kappa_1 - 1) (\sin^2(\theta) + \sin^2(\phi) ) + 2 } \right) +
\log \left( \frac{
\left( (\kappa_1-1) (\sin^2(\theta) + \sin^2(\phi) ) + 2 \right)^2}
{ \left((\kappa_1 - 1) \sin^2 (\phi) + 1 \right) } \right)
 \nonumber \\
& \hsp \hsp \hsp \hsp \hsp \hsp {\hspace{0.9in}} +
\log \left( \frac{(\kappa_2-1) \sin^2(\theta) + 2 }
{ \left( (\kappa_1-1) \sin^2(\theta) + 2  \right) \cdot
\left( (\kappa_2-1) \sin^2(\theta) + 1 \right) }
\right) \nonumber \\
& \hsp \hsp {\it ii)} \hsp
{\cal L}_2 \triangleq g \left( \frac{2 \sqrt{\kappa_2}
\cdot \sin(\theta+ \phi)}
{ (\kappa_2 - 1) (\sin^2(\theta) + \sin^2(\phi) ) + 2 } \right)
+ \log \left( \frac{
\left( (\kappa_1-1) (\sin^2(\theta) + \sin^2(\phi) ) + 2 \right)^2}
{ \left((\kappa_1 - 1) \sin^2 (\phi) + 1 \right) } \right)
 \nonumber \\
& \hsp \hsp \hsp \hsp \hsp \hsp {\hspace{0.9in}} +
\log \left( \frac{  (\kappa_2-1) \sin^2(\theta) + 2 }
{ \left( (\kappa_1-1) \sin^2(\theta) + 2  \right) \cdot
\left( (\kappa_2-1) \sin^2(\theta) + 1 \right) }
\right)
\end{split}
\end{eqnarray}
\end{lem}
{\vspace{0.1in}}
\begin{proof}
See Appendix~\ref{proof_lem_sub}.
\end{proof}
Since the upper bound to the sum-rate can be written as the sum of
${\cal L}_1$ and ${\cal L}_2$, with the optimizing choice of $\theta$
and $\phi$ from Lemma~\ref{lem_sub}, it is straightforward to check that
\begin{eqnarray}
2 E \left[ R_1 \right] + 2 E \left[ R_2 \right]
\leq f \left( \frac{2 \sqrt{\kappa_1}} {\kappa_1 + 1 } \right) +
f \left( \frac{ 2 \sqrt{\kappa_2}} {\kappa_2 + 1}  \right) +
\log \left( \frac{\kappa_1} {\kappa_2} \right)
\end{eqnarray}
}

\noindent {\bf \em Case iii):} Since the expression for the sum-rate
is symmetric in $\chi_1$ and $\chi_2$, an argument analogous to
{\bf \em Case ii)} completes the theorem in the case $\chi_1 \leq \chi_2$.

The optimal sum-rate in all the three cases is given by the unified
expression
\begin{eqnarray}
2 E \left[ R_1 \right] + 2 E \left[ R_2 \right]
\stackrel{\rho \rightarrow \infty}{\rightarrow}
f \left( \frac{ 2 \sqrt{\chi_1} } {\chi_1 + 1} \right) +
f \left( \frac{ 2 \sqrt{\chi_2} } {\chi_2 + 1} \right) +
\Big| \log \left( \chi_1 \right) - \log \left( \chi_2
\right) \Big|
\end{eqnarray}
where $f(\bullet)$ is as defined in~(\ref{fstruct}). This expression
can be simplified as in the statement of the theorem.
\endproof

\subsection{Comparison of Proof Techniques of Theorems~\ref{thm2}
and~\ref{thm1}}
\label{app_comp}
We first show that Theorem~\ref{thm2} reduces to Theorem~\ref{thm1}
under the assumption that the eigenvectors of ${\bf \Sigma}_1$ and
${\bf \Sigma}_2$ coincide. For this, we set
\begin{eqnarray}
\alpha' = \frac{ \alpha \sqrt{\tau_1} } {X}, \hsp \hsp
\beta' = \frac{ \beta \sqrt{\tau_2} } {X}, &&
\gamma' = \frac{ \gamma \sqrt{\tau_1} } {Y}, \hsp \hsp
\delta' = \frac{ \delta \sqrt{\tau_2} } {Y}
\label{transf}
\end{eqnarray}
where $X$ and $Y$ are as in~(\ref{Xsquared}) and~(\ref{Ysquared}),
respectively. Note that the above transformation is a bijection from the
space $\Big\{ \alpha, \beta, \gamma, \delta \hsppp : \hsppp |\alpha|^2 +
|\beta|^2 = 1 = |\gamma|^2 + |\delta|^2 \Big\}$ to the space $\Big\{
\alpha', \beta', \gamma', \delta' \hsppp : \hsppp |\alpha'|^2 +
|\beta'|^2 = 1 = |\gamma'|^2 + |\delta'|^2 \Big\}$. With this transformation,
it can be checked that~(\ref{thm2_eqnimp}) reduces to~(\ref{thm1_eqnimp}).
It can also be seen that the sum-rate expression in~(\ref{rhs3}) reduces
to that in~(\ref{rhs4}) in both cases.

The technique pursued in the general case diverges from that in
Appendix~\ref{app_thm1} in two ways.

\noindent {\bf \em Difference 1:} It can be easily checked that
$\tau_3 = 0$ if and only if the set of eigenvectors of ${\bf \Sigma}_1$
and ${\bf \Sigma}_2$ coincide. In general, $\tau_3 \neq 0$ and
${\sf arg}(\tau_3)$ could affect the sum-rate optimization. The first
step in Appendix~\ref{app_gen} is to show that this is not the case and
${\sf arg}(\tau_3)$ plays no role in the optimization.
This is done by exploiting the fact that the sum-rate optimization
(see~(\ref{prob_defn})) is a problem over ${\cal G}(2,1)$ and not over
${\sf St}(2,1).$

\noindent {\bf \em Difference 2:} The second complication
is that there is a definitive (and easily classifiable) comparative
relationship between
$\tau_1 = {\bf v}_1^H \hsppp {\bf \Sigma}_2^{-1} \hsppp {\bf v}_1$ and
$\tau_2 = {\bf v}_2^H \hsppp {\bf \Sigma}_2^{-1} \hsppp {\bf v}_2$ in the
special case. This comparative relationship does not generalize to the
setting where the eigenvectors of ${\bf \Sigma}_1$ and ${\bf \Sigma}_2$
are different.

Specifically, under Case i) of the discussion in Theorem~\ref{thm1},
$\tau_1 > \tau_2$ if and only if $\chi_2 < 1$ whereas under Case ii),
$\tau_1 > \tau_2$ if and only if $\chi_2 > 1$. On the other hand, in
the general case, all the three possibilities: i) $\tau_1 > \tau_2$,
ii) $\tau_1 < \tau_2$, iii) $\tau_1 = \tau_2$ can occur for appropriate
choices of ${\bf \Sigma}_1$ and ${\bf \Sigma}_2$. We now illustrate this
with a numerical example. Let ${\bf \Sigma}_2$ be fixed such that
\begin{eqnarray}
{\bf \Sigma}_2 = \left[
\begin{array}{cc}
1 & -0.6897 \\
-0.6897 & 1
\end{array}
\right].
\end{eqnarray}
With the choice
\begin{eqnarray}
{\bf \Sigma}_1 = \left[
\begin{array}{cc}
1 & 0.8 \\
0.8 & 1
\end{array} \right],
\end{eqnarray}
it can be seen that $\eta_1 = 5.8, \eta_2 = 0.1184, \tau_1 = 3.2222$ and
$\tau_2 = 0.5918$, whereas if
\begin{eqnarray}
{\bf \Sigma}_1 = \left[
\begin{array}{cc}
1 & -0.8 \\
-0.8 & 1
\end{array} \right],
\end{eqnarray}
it can be seen that $\eta_1 = 1.0653, \eta_2 = 0.6444, \tau_1 =0.5918$ and
$\tau_2 = 3.2222$. It can be seen that $\eta_1 = 1.4603, \eta_2 =
0.5397$ and $\tau_1 = \tau_2 = 1.90725$ if ${\bf \Sigma}_1$ satisfies
\begin{eqnarray}
{\bf \Sigma}_1 = \left[
\begin{array}{cc}
\frac{2}{3} & -0.34485 \\
-0.34485 & \frac{1}{3}
\end{array} \right].
\end{eqnarray}

\ignore{
In the special case where the eigenvectors of ${\bf \Sigma}_1$ and
${\bf \Sigma}_2$ coincide with $\chi_1 \geq \chi_2$ (Case i) of the
discussion in Theorem~\ref{thm1}), we have from~(\ref{defn1}) that
\begin{eqnarray}
{\bf v}_1 = {\bf u}_1 \hsp {\rm and} \hsp {\bf v}_2 = {\bf u}_2.
\end{eqnarray}
Further, $\kappa_1 = \chi_1$, $\kappa_2 = \chi_2$ and the quantities
defined in~(\ref{defn2}) and~(\ref{k12gen}) take the following values:
\begin{eqnarray}
\eta_1 = \frac{\lambda_1}{\mu_1} , \hsp \hsp
\eta_2 = \frac{ \lambda_2}{\mu_2}, \hsp
\hsp \tau_1 = \frac{1}{\mu_1} , \hsp \hsp \tau_2 = \frac{1}{\mu_2},
\hsp \hsp \tau_3 = 0.
\label{defn4}
\end{eqnarray}
In Case ii) ($\chi_1 < \chi_2$), we have
\begin{eqnarray}
{\bf v}_1 = {\bf u}_2 \hsp {\rm and} \hsp {\bf v}_2 = {\bf u}_1.
\end{eqnarray}
Further, $\kappa_1 = \frac{1}{\chi_1}$, $\kappa_2 = \frac{1}{\chi_2}$
with
\begin{eqnarray}
\eta_1 = \frac{\lambda_2}{\mu_2} , \hsp \hsp
\eta_2 = \frac{ \lambda_1}{\mu_1}, \hsp \hsp
\tau_1 = \frac{1}{\mu_2} , \hsp \hsp \tau_2 = \frac{1}{\mu_1},
\hsp \hsp \tau_3 = 0.
\label{defn3}
\end{eqnarray}

The

More

To illustrate,
}
These differences imply that, in general, there exists no bijective
transformation (as in~(\ref{transf})) to transform the objective function
from the form in~(\ref{thm2_eqnimp}) to that in~(\ref{thm1_eqnimp}).
Despite these issues, it would be of interest to pursue a theme that
could unify the general case with the special case.
\endproof

\subsection{Proof of Prop.~\ref{prop1}}
\label{app_prop1}
The proof in the low-$\snr$ extreme is obvious. In the high-$\snr$ extreme,
we first note that the optimization problem over the choice of a pair
$\left( {\bf w}_1, {\bf w}_2 \right)$ that results in a corresponding
choice of $\left( A_i, B_i, C_i \right)$ can be recast in the form of a
two parameter optimization problem over $(M_i, N_i)$ with
$M_i = \frac{A_i}{B_i}$ and $N_i = \frac{C_i}{B_i}$ under the constraint
that
\begin{eqnarray}
0 \leq N_i^2 \leq M_i \leq \chi_i = \frac{\lambda_1( {\bf \Sigma}_i )}
{ \lambda_2( {\bf \Sigma}_i)}.
\end{eqnarray}
For this, observe that for any given choice of $\left( {\bf w}_1,
{\bf w}_2 \right)$, the resultant $\left( A_i, B_i, C_i \right)$ has to
satisfy $C_i^2 \leq A_i B_i$ (Cauchy-Schwarz inequality) and
$\frac{A_i}{B_i} \leq \chi_i$ 
(Ritz-Raleigh ratio)~\cite{cr_rao}. 
Thus, we have
\begin{eqnarray}
\max \limits_{ {\bf w}_1, \hsppp {\bf w}_2 } \lim_{\rho \rightarrow
\infty} E \left[ R_i \right] \leq
\max \limits_{ 0 \hsppp \leq \hsppp N_i^2 \hsppp \leq  \hsppp
M_i \hsppp \leq \hsppp \chi_i 
} \lim_{\rho \rightarrow \infty} E \left[R_i \right].
\end{eqnarray}

Since the high-$\snr$ expression for $E \left[ R_i \right]$ satisfies
\begin{eqnarray}
2 E \left[ R_i \right] + 2 \log(2)
& 
= & g\left( \frac{2 \sqrt{ M_i - N_i^2} } { M_i+1} \right) +
2 \log \left(1 + M_i \right),
\end{eqnarray}
and $g(\bullet)$ is an increasing function, optimization over $N_i$
which affects only the first term on the right-hand side implies that
the optimal choice of $N_i$ is zero. Plugging this choice and using
the structure of $g(\bullet)$ and $f(\bullet)$ from~(\ref{gstruct})
and~(\ref{fstruct}) respectively, we have
\begin{eqnarray}
E \left[ R_i \right] \leq \frac{M_i}{ |M_i - 1|} \cdot |\log(M_i)|.
\label{ref3}
\end{eqnarray}
The right-hand side of~(\ref{ref3}) 
is increasing in $M_i$ since
the derivative function satisfies
\begin{eqnarray}
\frac{d \frac{ M_i \cdot |\log(M_i)|}{ |M_i - 1| }}{ dM_i}
= \left\{
\begin{array}{cc}
\frac{ M_i - 1 - \log(M_i)}{ (M_i - 1)^2 } & {\rm if} \hsppp M_i > 1
\\
\frac{1}{2} & {\rm if} \hsppp M_i = 1 \\
\frac{ \log \left( \frac{1}{M_i} \right) - (1 - M_i)}
{ (1 - M_i)^2 } & {\rm if } \hsppp M_i < 1,
\end{array}
\right.
\end{eqnarray}
where all the three pieces are positive and hence, the derivative function is
smooth. The goal of maximizing $M_i$ under the constraints of
$N_i = 0$ and unit-normedness of ${\bf w}_1$ and ${\bf w}_2$ is met
by the choice as in the statement of the proposition. This upper bound
to $E \left[  R_i \right]$ is also met by the same choice of beamforming
vectors and this choice is thus optimal.

Recall from~(\ref{ri_highsnr}) (the high-$\snr$ expression) that
$E \left[R_i \right]$ is the sum of two terms. The increasing nature of
$g(\bullet)$ means that the first term of~(\ref{ri_highsnr}) is maximized
when $d_{ {\bf \Sigma}_i } ( {\bf w}_1, {\bf w}_2 )$ is maximized.
That is, by the
choice  $\{ {\bf w}_1, {\bf w}_2 \}$ as in~(\ref{eq_w1})-(\ref{eq_w2}). On
the other hand, the second term as well as
$E \left[R_i \right]$ (which is the sum of the two terms)
are maximized by the choice in~(\ref{opt_prop3}). With this choice of
beamforming vectors, $d_{ {\bf \Sigma}_i } (\cdot, \cdot)$ can be written
as
\begin{eqnarray}
d_{ {\bf \Sigma}_i } \left( {\bf w}_{i, \hsppp {\sf opt}},
{\bf w}_{j, \hsppp {\sf opt}} \right)
& = & \frac{2 \sqrt{ \chi_i}}{ \chi_i+1 }.
\end{eqnarray}
Note that $d_{ {\bf \Sigma}_i } (\cdot, \cdot)$ decreases as
$\chi_i$ increases.
\endproof

\bibliographystyle{IEEEbib}
\bibliography{newrefsx}

\ignore{

}

\end{document}